\documentclass{svproc}
\usepackage{amsmath,graphicx}
\usepackage{amssymb}
\usepackage{algorithm}
\usepackage{algpseudocode}
\usepackage{hyperref}

\usepackage{url}

\newcommand{\C}{\mathbb{C}}
\newcommand{\R}{\mathbb{R}}
\newcommand{\Z}{\mathbb{Z}}

\newcommand{\Q}{\mathbb{Q}}

\newcommand{\fonc}[5]{\begin{aligned} #1 ~ : ~ #2 &\longrightarrow #3 \\ #4 &\longmapsto #5 \end{aligned}}

\DeclareMathOperator{\Trace}{\text{Trace}}
\DeclareMathOperator{\grevlex}{\text{grevlex}}
\DeclareMathOperator{\NF}{\text{NF}}
\DeclareMathOperator{\LT}{\text{LT}}
\DeclareMathOperator{\LM}{\text{LM}}
\DeclareMathOperator{\LC}{\text{LC}}

\DeclareMathOperator{\Res}{\text{Res}}
\DeclareMathOperator{\Disc}{\text{Disc}}
\DeclareMathOperator{\MT}{\text{MT}}
\DeclareMathOperator{\VTr}{\text{VTr}}

\begin{document}
\mainmatter              
\title{Solving parametric polynomial systems using Generic Rational Univariate Representation}
\titlerunning{Generic Rational Univariate Representation}  
%
\author{Florent Corniquel \inst{1}}
\authorrunning{Florent Corniquel} 
%
\tocauthor{Florent Corniquel}
\institute{Sorbonne Université, Université Paris-Cité, CNRS (IMJ-PRG), INRIA\\
\email{florent.corniquel@inria.fr}}

\maketitle              

\begin{abstract}
In this paper, we present a generic parametrization of generically zero-dimensional parametric polynomial systems. More specifically, we study the specialization properties of the Rational Univariate Representation and derive bounds on the degrees and heights of its elements. In addition to that, we propose two algorithms to effectively compute this parametrization.
\keywords{computer algebra, polynomial systems, parametric systems, zero-dimensional}
\end{abstract}
\section{Introduction}
\subsection{Classical zero-dimensional case}

Consider $K$ a field with characteristic zero and $f_1, \dots, f_n \in K[X_1, \dots, X_n]$ polynomials that generate a zero-dimensional ideal $\mathcal{I}$. It is possible to find a parametrization of the affine variety $V(\mathcal{I})$ using Chow forms. Define the so called Chow form by  $h(U_0, \dots, U_n) := \prod_{x \in V(\mathcal{I})} (U_0 - x_1U_1 - \dots - x_n U_n)^{\mu(x)}$ where $U_0, \dots, U_n$ are independant variables and $\mu(x)$ denotes the multiplicity of $x$. We then chose coefficients $a_1, \dots, a_n \in K$ such that the linear form $t = a_1 X_1 + \dots + a_n X_n$ is injective on $V(\mathcal{I})$. In that case, we say that $t$ \textit{separates} $V(\mathcal{I})$. The initial system $f_1 = \dots = f_n = 0$ then becomes equivalent to the one given by 
$$\begin{cases} h(U_0, a_1, \dots, a_n) = 0 \\
 X_i = \frac{\frac{\partial h}{\partial U_i} (U_0, a_1, \dots, a_n)}{\frac{\partial h}{\partial U_0} (U_0, a_1, \dots, a_n)}, ~ 1 \leqslant i \leqslant n \end{cases}.$$ 

We denote by $h_0 := h(U_0, a_1,\dots, a_n)$, $h_1 := \frac{\partial h}{\partial U_0} (U_0, a_1, \dots, a_n)$ and $h_{X_i} := \frac{\partial h}{\partial U_i} (U_0, a_1, \dots, a_n)$ for $i \in [\![1,n]\!]$. Here, $[\![1,n]\!]$ is the set of integers between 1 and $n$. The family $\mathcal{F} := \{h_0, h_1, h_{X_1}, \dots, h_{X_n}\} \subseteq K[U_0]$ is called a \textit{Rational Univariate Representation} (RUR) of $V(\mathcal{I})$. 

Different approaches have been tried to compute these polynomials. Renegar in \cite{Renegar1992} expresses $h$ as a factor of Van der Waerden's $u$-resultant. The main issue with this approach is that the $u$-resultant can identically vanish, for example when the variety $V(\mathcal{I})$ has a component of dimension greater than zero or points at infinity. However, this problem can be solved using Canny's Generalized Characteristic Polynomial (\cite{Canny1990}). 

The most successful implementations have been based on the knowledge of the quotient $\mathcal{A} := K[X_1,\dots,X_n]/\mathcal{I}$, and more specifically through a basis $\mathcal{B} = \{p_1, \dots, p_D\}$ of $\mathcal{A}$ as well as the multiplication matrices $M_{X_1}, \dots, M_{X_n}$ by each variable in $\mathcal{A}$. These can be computed from a Gröbner basis of the initial system with $O(nD^3)$ operations in $K$ (see for example \cite{FGLM1993}). This approach is sharpened by Rouillier in \cite{Rouillier1998} where he shows that the elements of $\mathcal{F}$ can be computed using traces of multiplication matrices with regard to $\mathcal{B}$. The fact that the linear form $t = a_1 X_1 + \dots + a_n X_n$ separates $V(\mathcal{I})$ is crucial and the proposed algorithm is Las-Vegas : the linear form is chosen "randomly" and certified to be injective afterwards. 

Certifying that the form $t$ separates $V(\mathcal{I})$ is an important part of the problem which in most cases turns out to be as costly as the computations of the polynomials. As the polynomials $h_0, h_1, h_{X_1}, \dots, h_{X_n}$ can be computed without knowing whether $t$ separates $V(\mathcal{I})$ or not, the family $\mathcal{F}$ is rather called a \textit{Rational Univariate Representation-candidate} (RUR-candidate) in general, and specified to be a RUR when $t$ is separating. This distinction makes sense as the elements of $\mathcal{F}$ can be computed anyway. 

In \cite{DRR2025}, the authors propose a new Las-Vegas algorithm which given a zero-dimensional ideal $\mathcal{I}$ solves both the problem of computing a RUR of $V(\mathcal{I})$ and certifying that a given linear form $t$ is separating in $\tilde{O} (D^2 \delta + nD^2 (D - \delta + 1))$ arithmetic operations, from the knowledge of the multiplication matrices $M_{X_1}, \dots, M_{X_n}$ by the variables in the quotient algebra $K[X_1, \dots, X_n]/\mathcal{I}$. Here, $\delta$ is the degree of the minimal polynomial of $t$ and $D$ is the dimension of the quotient algebra.

\subsection{Parametric systems}
In the case of parametric systems, we consider $f_1, \dots, f_n \in K[W,X]$. We say that $W := [W_1, \dots, W_s]$ represents the parameters (which we want to replace by specific values) and that $X := [X_1, \dots, X_n]$ are the unknowns that we would like to express in terms of the parameters. Different approaches exist and we will briefly summarize them.

First, one can mention comprehensive Gröbner bases first introduced by Weispfenning in \cite{Weispfenning1992}. A comprehensive Gröbner basis can be understood as a family of polynomials which specializes to a Gröbner basis of the specialized system for all values of parameters. They are usually built from a Gröbner system, that is, a family of couple $(S_i,G_i)$ such that $S_i$ is a constructible subset of the parameters space and $G_i$ specializes to a Gröbner basis for all parameters values in $S_i$. The $S_i$'s are constructed to form a partition of the parameters space and the main drawback of the method is the size of the partition which tends to exponentially grow. This method comes with a very high complexity and it is not clear whether it can be effectively implemented.

In his Ph. D. thesis (\cite[Chapter 7]{Schost2001}), Eric Schost extends the use of triangular sets to solve parametric systems by working in $\Q(W)[X]$. He reaches a bound on the degree of the polynomials appearing in the triangular sets not better than $d^{o(n^2)}$ where $d$ is the maximum degree of the polynomials in the initial system, and does not try to implement the method. After that, Schost proposes in \cite{Schost2003} a probabilistic algorithm to compute a parametric resolution. The idea is to lift the solution of a specialized system using a formal Newton operator. The use of Straight-Line Programs allows him to reach a polynomial complexity in the size of the output. However, the proposed solution works when excluding the jacobian ideal, which means by excluding possible multiplicities.

In general, studying parametric systems is difficult and asking for certified pieces of information concerning the system often leads to costly computations. For example in the case of real geometry, Collins introduces in \cite{Collins1975} the Cylindrical Algebraic Decomposition (CAD) which is a partition of the space into cells where given polynomials have a constant sign. The complexity is known to be doubly exponential in the worst case with regard to the number of variables (see for example \cite{DH1988}). 

This technique can for example complement the computation of a discriminant variety of the space of parameters. This object introduced in \cite{LR2007} is a variety formed by the parameters values for which the specialized system has a "bad" behavior like roots at infinity or an infinite number of solutions. This variety has a very convenient property : its complement is a union of open subsets where the specialized system has a finite and constant number of solutions. The authors in \cite{LR2007} propose an algorithm to compute the discriminant variety and one can also compute a CAD of its complement to get an explicit description of the aforementioned open subsets. However, the complexity of computing a discriminant variety is also essentially exponential (see \cite{Moroz2006}).

Finally, we can mention the work done in \cite{LSeD2022} where the authors compute semi-algebraic formulas defining semi-algebraic subsets of the parameters space where the specialized system has a constant number of real solutions. Their approach relies on computing the matrix of the parametric Hermite's quadratic form and study its rank. 

\subsection{Contributions and outline of the article}

In Section 2, we introduce the notion of Generic Rational Univariate Representation (GRUR) and show that the Rational Univariate Representation (RUR) (in the sense introduced by Rouillier in \cite{Rouillier1998}) computed in $\C(W)$ has great specialization properties. Consider $t \in \C[X]$ a linear form that separates the specialized variety for almost all values of parameters. Then we show that except for an algebraic set of parameters with measure zero, the RUR computed in $\C(W)[X]$ specializes to a Rational Univariate Representation of the specialized variety thus justifying the name of \textit{Generic} RUR. 
For that purpose, we construct the parametric $u$-resultant of the system using Van der Waerden's approach and study its specialization behavior. 

Unfortunately, this approach is too costly complexity-wise. Indeed, even though the $u$-resultant can be computed from the Macaulay matrix as the quotient of two determinants (see \cite{Macaulay1902}), factorizing this resultant quickly becomes too hard. However, we can derive bounds for the degrees and height of the polynomials of the GRUR from this. We present it in Section 3 by making good use of some Arithmetic Nullstellensätz proved in \cite{DKS2012}. 

In Section 4 and 5, we propose two different algorithms to compute the GRUR. The first one relies on linear algebra and more precisely on the fact that the quotient algebra $\C(W)[X]/\mathcal{I}_W$, where $\mathcal{I}_W$ is the ideal generated by the equations in $\C(W)[X]$, is a vector space of finite dimension. This approach is based on the work of \cite{Rouillier1998}. The second algorithm is based on an evaluation/interpolation scheme. The idea is to compute as many specialized RURs as possible using the new algorithm proposed in \cite{DRR2025} and to interpolate the coefficients of the GRUR. If the initial equations have degree in $X$ bounded by $d_X$ and degree in $W$ bounded by $d_W$, we show that the GRUR can be computed using $\tilde{O}(sn^{2s+2} d_X^{2ns + 2n} d_W^{2s+1} + n^{s+1}d_X^{2ns + 5n} d_W^s)$ operations in $\Q$. We can then use the first polynomial of the computed GRUR to tackle the real roots classification problem and propose an algorithm to solve it whose complexity is higher than existing algorithms, but does not require any assumption on the system.

\section{Construction of the GRUR}

We start by introducing some notations.

\begin{itemize}
    \item $W:=[W_1,\dots, W_s]$ are parameters and $X:=[X_1,\dots,X_n]$ are unknowns ;
    \item given a value of parameters $w \in \C^s$, we denote by $\phi_w$ the specialization morphism that sends $W_i$ to $w_i$ ;
    \item we denote by respectively $<_X:=\grevlex(X)$ and $<_W:=\grevlex(W)$ the graded reverse lexicographic orderings in variables $X$ and $W$, and by $<_{X,W}:=\grevlex(W)<\grevlex(X)$ the associated elimination block ordering ;
    \item we will use the notation $\LT$, $\LM$ and $\LC$ indiced by "$<_X$", "$<_W$" or "$<_{W,X}$" for the leading term, monomial and coefficient given one of the three aforementioned monomial orderings.
\end{itemize}

\subsection{Definition of the GRUR}

We consider $f_1,\dots,f_n \in \C[W,X]$ a family of polynomials that generate a generically zero-dimensional ideal $\mathcal{I} := \langle f_1,\dots, f_n \rangle \subseteq \C[W,X]$. Then there exists a Zariski-closed subset $\mathcal{D}$ of $\C^s$ with measure 0 such that for all values of parameters $w \in \C^s \backslash \mathcal{D}$, the specialized ideal $\mathcal{I} (w) := \langle f_1(w,X), \dots, f_n(w,X) \rangle \subseteq \C[X]$ is zero-dimensional, or equivalently, the variety $\mathcal{V}_w := V(I(w)) \subseteq \C^s$ has dimension zero.

We want to extend the notion of RUR to parametric systems in the sense that we want to find a family of univariate polynomials $\mathcal{F} = \{h_0, h_1, h_{X_1}, \dots, h_{X_n}\} \subseteq \C(W)[U_0]$ such that for almost all values of parameters $w \in \C^s$, we have the equivalence  
\begin{equation}\label{equiv systemes}
\begin{cases} f_1(w, X) = 0 \\ ~~~~~~~~ \vdots \\ f_n (w, X) = 0 \end {cases} \iff \begin{cases}
	h_0 (U_0, a_1, \dots, a_n ; w) = 0 \\
	X_i = \frac{\frac{\partial h}{\partial U_i} (U_0, a_1, \dots, a_n ; w)}{\frac{\partial h}{\partial U_0} (U_0, a_1, \dots, a_n ; w)}, ~ 1 \leqslant i \leqslant n
\end{cases}.\end{equation}
with a preservation of the multiplicities. In other words, we want the family $\mathcal{F}$ to specialize to a (classical) RUR of $V(\mathcal{I}(w))$ in the sense of \cite[Definition 3.2]{Rouillier1998} for almost all values of parameters. That motivates the following definition.

\begin{definition}\label{def GRUR}
	A family $\mathcal{F} := \{h_0, h_1, h_{X_1}, \dots, h_{X_n}\}$ is called a \textit{Generic RUR} (GRUR) of $V(\mathcal{I}_W)$ (respectively a Generic RUR-candidate) if for almost all $w \in \C^s$ the family $\phi_w(\mathcal{F})$ defines a RUR (respectively a RUR-candidate) of $V(\mathcal{I}(w))$.
\end{definition}

We now see $f_1, \dots, f_n$ as elements of $\C(W)[X]$. They generate an ideal $\mathcal{I}_W \subseteq \C(W)[X]$. Denote by $\mathcal{G} \subseteq \C[W,X]$ a reduced Gröbner basis of $\mathcal{I}$ for the product ordering $<_{W,X}$. Then $\mathcal{G} \subseteq \C(W)[X]$ is also a Gröbner basis of $\mathcal{I}_W$ for the ordering $<_X$ (see for example \cite{LSeD2022}). However, it might not be reduced. We denote by $\mathcal{W}_{\mathcal{G}}$ the variety $\cup_{g \in \mathcal{G}} V(\LC_{<X}(g))$.

\begin{lemma}
    Denote by $\overline{\C(W)}$ an algebraic closure of $\C(W)$. Then the variety 
    $$V(I_W):=\big\{x \in \overline{\C(W)}^n \mid f_1(W,x)=\dots=f_n (W,x)=0\big\}$$
     has dimension zero.
\end{lemma}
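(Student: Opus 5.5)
The plan is to show that the quotient $\C(W)[X]/\mathcal{I}_W$ is a finite-dimensional $\C(W)$-vector space, which is equivalent to $V(\mathcal{I}_W)$ being finite, hence of dimension zero. We will read this off the Gröbner basis $\mathcal{G}$, which was noted above to be a Gröbner basis of $\mathcal{I}_W$ for $<_X$. The leading monomials $\LM_{<_X}(g)$, $g\in\mathcal{G}$, then generate the initial ideal of $\mathcal{I}_W$. It therefore suffices to show that for each $i\in[\![1,n]\!]$ some pure power $X_i^{k_i}$ occurs among these leading monomials. Once that holds, the standard monomials are finite in number and the quotient has finite dimension.

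To get pure powers, we specialize. The exceptional set $\mathcal{D}$ and the variety $\mathcal{W}_{\mathcal{G}}$ are both proper Zariski-closed subsets of $\C^s$. We check that each $\LC_{<_X}(g)$ is a nonzero polynomial in $W$, which holds because $g\neq 0$. Since $\C^s$ is irreducible, we can pick $w\in\C^s\setminus(\mathcal{D}\cup\mathcal{W}_{\mathcal{G}})$. For such a $w$ we invoke the classical specialization property of Gröbner bases for block orderings, as in Kalkbrener or the comprehensive Gröbner basis literature referred to via \cite{LSeD2022}. It says that if no leading coefficient vanishes at $w$, then $\phi_w(\mathcal{G})$ is a Gröbner basis of $\mathcal{I}(w)$ for $<_X$, and $\LM_{<_X}(\phi_w(g))=\LM_{<_X}(g)$. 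Because $w\notin\mathcal{D}$, the ideal $\mathcal{I}(w)$ is zero-dimensional. Its initial ideal therefore contains a pure power of each $X_i$, and that power is the leading monomial of some $\phi_w(g)$, hence of $g$. This produces the required $X_i^{k_i}$, and the finiteness criterion closes the argument.

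The main obstacle is the specialization step: $\phi_w(\mathcal{G})$ must really be a Gröbner basis of $\mathcal{I}(w)$, and not merely a subset of $\mathcal{I}(w)$. We will handle this by citing Kalkbrener's theorem. Alternatively, we can argue directly that every S-polynomial of $\mathcal{G}$ reduces to zero over $\C(W)$ with denominators that are products of leading coefficients, so that reduction commutes with $\phi_w$ away from $\mathcal{W}_{\mathcal{G}}$. We also need $\phi_w(\mathcal{G})$ to generate $\mathcal{I}(w)$, which is immediate since $\mathcal{G}$ generates $\mathcal{I}$ in $\C[W,X]$.

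If this route proves awkward, a more geometric fallback is available. Let $Z\subseteq\C^{s+n}$ be the union of the irreducible components of $V(\mathcal{I})$ that dominate $\C^s$. A point of $V(\mathcal{I}_W)$ over $\overline{\C(W)}$ corresponds to a generic point of such a component. A positive-dimensional $V(\mathcal{I}_W)$ would give a component of dimension greater than $s$. By the fiber dimension theorem, its generic fibers over $\C^s$ would then be positive-dimensional, contradicting $\dim \mathcal{V}_w=0$ for $w\notin\mathcal{D}$.
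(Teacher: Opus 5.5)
Your proposal is correct and follows essentially the same route as the paper. Both arguments choose $w \notin \mathcal{D} \cup \mathcal{W}_{\mathcal{G}}$, use that $\mathcal{G}(w)$ is a Gröbner basis of $\mathcal{I}(w)$ with $\LM_{<_X}(\phi_w(g)) = \LM_{<_X}(g)$ (the paper's Lemma \ref{spe BG}, which is the Gianni--Kalkbrener property you cite), apply the finiteness theorem to $\mathcal{I}(w)$ to obtain pure powers of each $X_i$ among the leading monomials of $\mathcal{G}$, and conclude that $\C(W)[X]/\mathcal{I}_W$ is finite-dimensional; your geometric fallback via the fiber dimension theorem is also sound but not needed.
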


\begin{proof}
    For all values of parameters $w$ outside of $\mathcal{D} \cup \mathcal{W}_{\mathcal{G}}$, the specialized ideal $\mathcal{I}(w)$ has dimension zero and $\mathcal{G}(w) := \{g(w,X), ~ g \in \mathcal{G}\}$ is a Gröbner basis of $\mathcal{I}(w)$ for $<_X$ (see \ref{spe BG}). 

    By the finiteness theorem (see \cite[§5.3 Theorem 6]{CLO2005}) there exists non-negative integers $m_1,\dots, m_n$ and $g_1, \dots, g_n \in \mathcal{G}$ such that each $g_i$ has leading monomial $\LM_{<X}(g_i) = X_i^{m_i}$. By hypothesis, we have that $\LM_{<X} (g_i (w,X)) = \LM_{<X} (g_i)$, which also means that $V(\mathcal{I}_W)$ has dimension zero. \qed
\end{proof}

Knowing that $V(\mathcal{I}_W)$ has dimension zero, we can follow the construction of the classical RUR as in the introduction. Consider a linear form $t:=a_1 X_1 + \dots + a_n X_n \in \C[X]$ (it does not involve $W$) and consider the polynomial $h := \prod_{x \in V(\mathcal{I}_W)} (U_0 - U_1 x_1 - \dots - U_n x_n)^{\mu(x)}$ where $\mu(x)$ is the multiplicity of $x$.

We then define polynomials $h_0, h_1, h_{X_1},\dots, h_{X_n}$ in the following way : 

\begin{equation}\label{formules F}
\begin{cases} h_0 (U_0 ; W) := h(U_0,a_1,\dots, a_n ; W) \\
    h_1(U_0 ; W) := \frac{\partial h}{\partial U_0} (U_0,a_1,\dots,a_n ; W) \\
    h_{X_i} (U_0 ; W) :=\frac{\partial h}{\partial U_i} (U_0,a_1,\dots,a_n ; W), ~ 1 \leqslant i \leqslant n \end{cases}.
 \end{equation}

\begin{definition}
	We say that the linear form $t \in \C[X]$ is \textit{generically separating} if $t$ separates $V(\mathcal{I}_W)$, that is, if $V(\mathcal{I}(w))$ for almost all values of $w \in \C^s$.
\end{definition}

There is a natural algebraic rephrasing of the geometric fact that $t$ separates $V(\mathcal{I}_W)$.

\begin{proposition}
	The linear form $t = a_1 X_1 + \dots + a_n X_n$ separates $V(\mathcal{I}_W)$ if and only if the discriminant $\Disc_{U_0} (\overline{h}(a_1, \dots, a_n ; W)) \in \C(W)$ is not identically zero where $\overline{\cdot}$ is the squarefree part with regard to the variable $U_0$.
\end{proposition}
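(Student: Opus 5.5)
The plan is to make the univariate polynomial $h_0(U_0;W) = h(U_0,a_1,\dots,a_n;W)$ completely explicit over $\overline{\C(W)}$ and to read both sides of the equivalence off its roots. By the preceding Lemma, $V(\mathcal{I}_W)$ is a finite set $\{x^{(1)},\dots,x^{(r)}\} \subseteq \overline{\C(W)}^n$. Substituting $U_i = a_i$ into the definition of $h$ gives
$$h_0(U_0;W) = \prod_{j=1}^{r} \bigl(U_0 - t(x^{(j)})\bigr)^{\mu(x^{(j)})}, \qquad t(x^{(j)}) = a_1 x^{(j)}_1 + \dots + a_n x^{(j)}_n \in \overline{\C(W)}.$$
Since $h$ is symmetric in the points of $V(\mathcal{I}_W)$, its coefficients lie in $\C(W)$, so $h_0 \in \C(W)[U_0]$. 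Its squarefree part with respect to $U_0$, which is what $\overline{h}(a_1,\dots,a_n;W)$ denotes, is then
$$\overline{h_0} = \prod_{\theta \in \{t(x) \,:\, x \in V(\mathcal{I}_W)\}} (U_0 - \theta).$$
This polynomial is monic, has coefficients in $\C(W)$ because $\C(W)$ has characteristic zero (one can take $h_0/\gcd(h_0, h_0')$), and has exactly the distinct values of $t$ on $V(\mathcal{I}_W)$ as its simple roots.

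Next, I would use the standard fact that the discriminant of a monic polynomial over a field is, up to a nonzero constant, the product of the squared differences of its roots in an algebraic closure. Since $\overline{h_0}$ has pairwise distinct roots, its discriminant is nonzero for any choice of $a$ (with the convention that the discriminant of a degree-one polynomial is $1$). For the statement to have content, the comparison therefore has to be with $\Disc_{U_0}$ of the polynomial $\prod_{x \in V(\mathcal{I}_W)} (U_0 - t(x))$, taken over the distinct points $x$ rather than over the distinct values of $t$. This is presumably what the squarefree part of $h$ means when taken in $U_0$ before specializing $U_1,\dots,U_n$ to $a_1,\dots,a_n$, i.e.\ the polynomial $\prod_{x}(U_0 - U_1x_1 - \dots - U_nx_n)$ evaluated at $U = a$. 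I would make this reading explicit at the start of the proof.

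With that reading the equivalence follows directly. The roots of $\overline{h}(U_0,a;W)$ are the values $t(x)$ for the distinct points $x \in V(\mathcal{I}_W)$. Its discriminant is $\prod_{x \neq y} \bigl(t(x) - t(y)\bigr)$ up to sign, so it vanishes in $\C(W)$ if and only if $t(x) = t(y)$ for some pair of distinct points $x \neq y$. That is exactly the failure of $t$ to be injective on $V(\mathcal{I}_W)$.

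I expect the main obstacle to be this notational point: making sure that specialization $U \mapsto a$ is performed after taking the squarefree part in $U_0$. Once the roots have been identified, each direction of the equivalence is a one-line argument.
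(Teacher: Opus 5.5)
Your proof is correct and follows the paper's argument. The paper also takes the squarefree part of $h$ with respect to $U_0$ \emph{before} substituting $U_i = a_i$, obtains $\prod_{x \in V(\mathcal{I}_W)}(U_0 - t(x))$ over the distinct points, and concludes that this polynomial is squarefree exactly when $t$ is injective on $V(\mathcal{I}_W)$. Your remark that squarefreeing after the substitution would make the discriminant trivially nonzero is a useful clarification of the notation, and the reading you settle on is the one the paper uses.
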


\begin{proof}
	We recall that $h := \prod_{x \in V(\mathcal{I}_W)} (U_0 - x_1 U_1 - \dots - x_n U_n)^{\mu(x)}$. Then, the squarefree part with regard to $U_0$ is given by $\overline{h} = \prod_{x \in V(\mathcal{I}_W)} (U_0 - x_1 U_1 - \dots - x_n U_n)$. In other words, we get rid of the geometric multiplicities. When we substitute the values of $U_1, \dots, U_n$ to $a_1, \dots, a_n$, we obtain $\overline{h}(U_0, a_1, \dots, a_n ; W) = \prod_{x \in V(\mathcal{I}_W)} (U_0 - t(x))$. This polynomial is squarefree if and only if none of the linear factors appearing in the product repeat, that is, if and only if $t$ is injective on $V(\mathcal{I}_W)$.  \qed
\end{proof}

\begin{remark}
	A linear form that separates $V(\mathcal{I}_W)$ also separates $V(\mathcal{I}(w))$ for almost all $w \in \C^s$. The converse is also true, but two linear forms that both separate $V(\mathcal{I}_W)$ might differ through the algebraic sets of parameters for which they do not separate the specialized system.
\end{remark}

\begin{theorem}\label{theoreme GRUR}
	Consider $f_1, \dots, f_n \in \C[W,X]$ that generate a generically zero-dimensional ideal $\mathcal{I}$. Let $t = a_1 X_1 + \dots + a_n X_n \in \C[X]$ be a linear form. Then the family $\mathcal{F} := \{h_0, h_1, h_{X_1}, \dots, h_{X_n}\}$ defined by \ref{formules F} is a GRUR-candidate of $V(\mathcal{I}_W)$. 
	Moreover, if $t$ is generically separating, $\mathcal{F}$ is a GRUR of $V(\mathcal{I}_W)$.
\end{theorem}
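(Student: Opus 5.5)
The plan is to show that, outside a proper Zariski-closed set of parameters, specialization commutes with the construction of $h$. Concretely, we want
\[
\phi_w(h)(U_0,\dots,U_n)=h_w(U_0,\dots,U_n):=\prod_{y\in V(\mathcal{I}(w))}(U_0-U_1y_1-\dots-U_ny_n)^{\mu_w(y)} .
\]
Once this identity of polynomials in $U_0,\dots,U_n$ holds, substituting $U_i=a_i$ and differentiating commute with $\phi_w$. Then $\phi_w(\mathcal{F})$ is, by definition, the RUR-candidate of $V(\mathcal{I}(w))$ associated with $t$, which proves the first claim.

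To establish this identity I would not use the $u$-resultant. Instead I would use the trace description of the Chow form from \cite{Rouillier1998}. Write $\mathcal{A}_W=\C(W)[X]/\mathcal{I}_W$ and take the monomial basis $\mathcal{B}$ given by the standard monomials of $\mathcal{G}$ for $<_X$; this is finite by the previous lemma. Then $h$ equals the characteristic polynomial of the multiplication matrix $M_{U_1X_1+\dots+U_nX_n}$ in this basis; equivalently, it is expressible through the Newton sums $\Trace(M_{(U_1X_1+\dots+U_nX_n)^k})$. This is the standard fact that multiplicities coincide with the dimensions of the local algebras.

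The entries of $M_{X_i}$ are obtained by reducing $X_i p_j$ modulo $\mathcal{G}$, so they lie in $\C[W][1/\prod_{g}\LC_{<X}(g)]$. Take $w\notin \mathcal{D}\cup\mathcal{W}_{\mathcal{G}}$. By the specialization result for Gröbner bases already invoked in the lemma, $\mathcal{G}(w)$ is a Gröbner basis of $\mathcal{I}(w)$ with the same leading monomials. Hence $\mathcal{B}$ is also the standard-monomial basis of $\C[X]/\mathcal{I}(w)$. Moreover, reduction commutes with $\phi_w$, because no leading coefficient vanishes. It follows that $\phi_w(M_{X_i})$ is the multiplication matrix of $\C[X]/\mathcal{I}(w)$, and so $\phi_w(h)=\det(U_0-\sum U_i\phi_w(M_{X_i}))=h_w$. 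The exceptional set $\mathcal{D}\cup\mathcal{W}_{\mathcal{G}}$ is a proper algebraic set, so this gives the GRUR-candidate statement.

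For the second claim, assume $t$ separates $V(\mathcal{I}_W)$. By the preceding proposition, $\Delta:=\Disc_{U_0}(\overline{h}(U_0,a;W))$ is a nonzero element of $\C(W)$. I must show that $t$ separates $V(\mathcal{I}(w))$ for generic $w$, i.e.\ that $\overline{\phi_w(h_0)}$ is squarefree with the right roots. This is the main obstacle: the squarefree part does not commute with specialization in general. To handle it, I would write $\overline{h_0}=h_0/\gcd(h_0,h_0')$. Using subresultants, the degree of this gcd is preserved under $\phi_w$ outside the vanishing locus of a nonzero principal subresultant coefficient, and of the denominators. On that open set we get $\phi_w(\overline{h_0})=\overline{\phi_w(h_0)}$.

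The remaining issue is that distinct points of $V(\mathcal{I}(w))$ could have the same $t$-value while $U_0$-roots stay distinct, since different $y$ could collapse. To rule this out, I would apply the same subresultant argument to $h$ itself, viewed in $\C(W,U_1,\dots,U_n)[U_0]$. Then, generically, the number of distinct points of $V(\mathcal{I}(w))$ equals $\deg\overline{h}$, and the number of distinct $t$-values equals $\deg\overline{h_0}$. These two degrees are equal over $\C(W)$ because $t$ separates, and both are preserved off a proper closed set. Hence $t$ separates $V(\mathcal{I}(w))$ there. Removing the zeros of $\Delta$ and of the finitely many coefficients involved, $\phi_w(\mathcal{F})$ is a RUR for almost all $w$, which proves that $\mathcal{F}$ is a GRUR.
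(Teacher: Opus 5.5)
Your proof is correct, but it takes a different route from the paper's proof of this theorem. The paper proves the candidate part with Van der Waerden's $u$-resultant. Under the assumption that $V(\mathcal{I}_W)$ has no point at infinity, the Poisson formula and Stickelberger give $\Res(F_0,\dots,F_n)=\Res(\tilde F_1,\dots,\tilde F_n)\,h$. Both resultants commute with $\phi_w$ off $\mathcal{L}$, and comparing with the same factorization for the specialized system gives $\phi_w(h)=h_w$ off $\mathcal{D}\cup\mathcal{L}\cup\mathcal{W}_\infty$.

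You instead write $h=\det(U_0 I-\sum_i U_iM_{X_i})$ and specialize the multiplication matrices through the parametric Gröbner basis. This is the mechanism the paper uses later, in Section 4 (Lemma \ref{spe BG} through Theorem \ref{spe candidat GRUR}), to justify its linear-algebra algorithm; you apply it directly to the full Chow form.

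\textbf{What each route buys.} Yours needs no resultant theory. It also needs no hypothesis on points at infinity: the paper imposes that as a standing assumption of its section, but it is absent from the theorem's statement. The paper's route gives an exceptional set described intrinsically, by leading coefficients of the $f_i$ and by $\Res(\tilde F_1,\dots,\tilde F_n)$, with controlled degree in $W$. It also gives the representation $h=\Delta/P$, which drives the degree and height bounds of Section 3 and the Schwartz--Zippel grid of Section 5. Your set $\mathcal{W}_{\mathcal{G}}$ depends on a Gröbner basis whose degrees in $W$ are not controlled a priori.

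\textbf{The second claim.} The paper essentially takes this for granted. Generic separation is defined as separation for almost all $w$, and Corollary \ref{valeurs generiques GRUR} just removes $\mathcal{S}$, which is assumed to be a proper closed set. Your subresultant argument actually proves that separation over $\C(W)$ implies separation of $V(\mathcal{I}(w))$ for generic $w$.

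Half of that argument is unnecessary, though. Wherever $\phi_w(h)=h_w$, $h$ is monic in $U_0$, so the principal subresultant coefficients of $(h,\partial h/\partial U_0)$ that vanish identically still vanish after specialization. Hence $|V(\mathcal{I}(w))|\leqslant \deg_{U_0}\overline{h}$ holds automatically. Preserving $\deg\gcd(h_0,h_0')$ alone then already forces $t$ to be injective on $V(\mathcal{I}(w))$.
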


The next subsections are there to give some ingredients to prove Theorem \ref{theoreme GRUR}. We specify in Corollary \ref{valeurs generiques GRUR} the parameters values to avoid.

\subsection{Van der Waerden-like approach}

They key-point in this approach is an idea already used by Renegar in \cite{Renegar1992}. The idea is that the polynomial $h$ (from which we can derive the elements of $\mathcal{F}$) is a factor of Van der Waerden's $u$-resultant. This resultant has great specialization properties we can use to study the specialization of $h$. Let us quickly recall Van der Waerden's idea (\cite[§11.83]{VdW1931}). We add a new equation given by $f_0=0$ where $f_0:=U_0 - U_1 X_1 - \dots - U_n X_n$. Here $U := [U_0, \dots, U_n]$ are independant new variables. We then homogeneize $f_0, \dots, f_n$ into $F_0, \dots, F_n$ using a new variable named $X_0$. Observe that $F_0 \in \Z[U][X_0, \dots, X_n]$ and $F_1, \dots, F_n \in \C[W][X_0, \dots, X_n]$. Now $F_0, \dots, F_n$ are $n+1$ polynomials in $n+1$ variables so their resultant is well defined. It is called the \textit{$u$-resultant} and we write it $\Res(F_0, \dots, F_n)$. It is an element of $\C[W][U_0, \dots, U_n]$.

\begin{remark}
    The $u$-resultant can be identically zero. In that case, the method that we describe below will fail. This is the case when the variety has a component of positive dimension for example. With our assumptions, $V(\mathcal{I}_W)$ is zero-dimensional so this will not happen. It can also be identically zero if the projective closure of $V(\mathcal{I}_W)$ has a point at infinity. This problem can be avoided by a generic change of variables (see \cite{Morgan1986}). 
\end{remark}

For the rest of this section, we suppose that $V(\mathcal{I}_W)$ has no point at infinity. Denote by $\tilde{F}_i := F_i (0, X_1,\dots, X_n)$. Under these assumptions, the resultant $\Res(\tilde{F_1}, \dots, \tilde{F}_n) \in \C[W]$ is not identically zero and we get the following result.

\begin{proposition}\label{facto u-res}
    The $u$-resultant factorizes as 
    \begin{equation}\label{factorisation u-res} \Res (F_0, \dots, F_n) (U ; W) = \Res(\tilde{F_1}, \dots, \tilde{F}_n) (W) \prod_{x \in V(\mathcal{I}_W)} (U_0 - x_1 U_1 - \dots - x_n U_n)^{\mu(x)}.
\end{equation}
\end{proposition}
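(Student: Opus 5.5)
We work over the algebraically closed field $\overline{\C(W)}$ and reduce to the classical Poisson-type product formula for the multivariate resultant (Van der Waerden, or Cox--Little--O'Shea, \emph{Using Algebraic Geometry}, Ch.~3). The only genuinely parametric point is to keep track of multiplicities.

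First, view $F_1,\dots,F_n$ as homogeneous polynomials over $K:=\overline{\C(W)}$. By the standing assumption there is no point at infinity, so $\tilde F_1,\dots,\tilde F_n$ have no common nontrivial zero in $K^n$. Hence $\Res(\tilde F_1,\dots,\tilde F_n)\neq 0$. The Poisson formula then applies to $F_0$, viewed as a form of degree $1$ with generic coefficients $U$:
\[
\Res(F_0,F_1,\dots,F_n)=\Res(\tilde F_1,\dots,\tilde F_n)^{\deg F_0}\prod_{x\in V(\mathcal{I}_W)} f_0(x)^{\mu(x)}.
\]
Here $\mu(x)=\dim_K (K[X]/\mathcal{I}_W)_x$ is the local multiplicity, and the product has $d_1\cdots d_n$ factors counted with multiplicity (Bézout, since there are no zeros at infinity). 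Because $\deg F_0=1$ and $f_0(x)=U_0-x_1U_1-\dots-x_nU_n$, this is exactly \eqref{factorisation u-res}.

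To justify the formula I would argue through the Poisson formula's standard proof. One shows that $K[X]/\mathcal{I}_W$ is finite-dimensional of dimension $d_1\cdots d_n$; this uses the finiteness from the previous Lemma together with the absence of zeros at infinity, which makes $F_1,\dots,F_n$ a regular sequence. The resultant then satisfies
\[
\Res(F_0,\dots,F_n)=\Res(\tilde F_1,\dots,\tilde F_n)\,\det\bigl(M_{f_0}\bigr),
\]
where $M_{f_0}$ is multiplication by $f_0$ on the quotient algebra. Since $f_0$ is linear in the $U$'s, $\det M_{f_0}$ is computed by simultaneous triangularization of the commuting matrices $M_{X_i}$ over $K$. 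The diagonal entries are the $f_0(x)$, each repeated $\mu(x)$ times, which gives the product.

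Finally, both sides are polynomials in $W$ and $U$. The left side lies in $\C[W][U]$ and the identity holds in $K[U]$, hence in $\C(W)[U]$. The leading coefficient in $U_0$ of the right side is $\Res(\tilde F_1,\dots,\tilde F_n)\in\C[W]$, and the product is monic in $U_0$. It is also worth noting for later that $h$ has coefficients in $\C(W)$, since it is invariant under Galois.

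The main obstacle is the middle step, namely the identity between $\Res$ and the determinant of $M_{f_0}$ with the exact power of $\Res(\tilde F_1,\dots,\tilde F_n)$. Two routes are available. One can cite the Poisson formula directly, as in CLO Ch.~3, Theorem 3.4 and Proposition 4.2 on the $u$-resultant, whose proof works over any algebraically closed field, so in particular over $K$. Alternatively, one can use the degree-in-$F_0$ homogeneity of $\Res$ to fix the exponent to $\deg F_0=1$.
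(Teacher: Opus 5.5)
Your proposal is correct and follows the same route as the paper. The paper also applies the Poisson formula (Jouanolou, Prop.~2.7) to get $\Res(F_0,\dots,F_n)=\Res(\tilde F_1,\dots,\tilde F_n)\det(m_{f_0})$ and then Stickelberger's theorem to evaluate $\det(m_{f_0})$; you carry out that last step by simultaneously triangularizing the $M_{X_i}$ over $\overline{\C(W)}$, and you add bookkeeping the paper leaves implicit (nonvanishing of $\Res(\tilde F_1,\dots,\tilde F_n)$, the exponent $\deg F_0=1$, and descent to $\C(W)[U]$).
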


\begin{proof}
    First, the Poisson formula for homogeneous resultants (see \cite[Proposition 2.7]{Jouanolou1-1991}) gives that $\Res(F_0, \dots, F_n) = \Res(\tilde{F_1}, \dots, \tilde{F_n}) \cdot \det(m_{f_0})$ where $m_{f_0}$ is the endomorphism of multiplication by $f_0$ in the quotient $\C(W)[X]/\mathcal{I}_W$.  
    We can then apply Stickelberger's theorem \ref{Stickelberger} to get the announced expression. \qed
\end{proof}

We recall that $h(U ; W) = \prod_{x \in V(\mathcal{I}_W)} (U_0 - x_1 U_1 - \dots - x_n U_n)^{\mu(x)}$ and that given $t = a_1 X_1 + \dots + a_n X_n$ we have $h_0 (U_0 ; W) = h(U_0, a_1, \dots, a_n ; W)$.

\subsection{Proof of Theorem \ref{theoreme GRUR}}

First we prove that the family $\mathcal{F} = \{h_0, h_1, h_{X_1}, \dots, h_{X_n}\}$ is a  GRUR-candidate. We show that $\phi_w(\mathcal{F})$ is a RUR-candidate of $V(\mathcal{I}(w))$ in the sense of \cite[Definition 3.2]{Rouillier1998}. Let us introduce a short lemma.

\begin{lemma}\label{egal fractions}
	For all $i \in [\![1,n]\!]$, 
	\begin{equation}\label{egalite fractions}
	\frac{\frac{\partial \overline{h}}{\partial U_i} (U_0,a_1,\dots,a_n ; W)}{\frac{\partial \overline{h}}{\partial U_0} (U_0,a_1,\dots,a_n ; W)} = \frac{h_{X_i} (U_0 ; W)}{h_1 (U_0 ; W)}\end{equation}
	where $\overline{\cdot}$ is the squarefree part with regard to $U_0$.
\end{lemma}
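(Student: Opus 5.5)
The plan is to compare the two quotients by an explicit product-rule computation on the factorization of $h$ over $\overline{\C(W)}$. For $x \in V(\mathcal{I}_W)$ write $L_x := U_0 - x_1U_1 - \dots - x_nU_n$. Then $h = \prod_x L_x^{\mu(x)}$ and $\overline{h} = \prod_x L_x$, so $h = \overline{h}\cdot g$ with $g := \prod_x L_x^{\mu(x)-1}$. Since $\partial L_x/\partial U_0 = 1$ and $\partial L_x/\partial U_i = -x_i$, the product rule gives, for $j \in \{0,\dots,n\}$,
\[
\frac{\partial h}{\partial U_j} = \frac{h}{\overline{h}} \sum_{x} \mu(x)\, c_{x,j} \prod_{y\neq x} L_y ,
\qquad
\frac{\partial \overline{h}}{\partial U_j} = \sum_{x} c_{x,j} \prod_{y\neq x} L_y ,
\]
where $c_{x,0}=1$ and $c_{x,i}=-x_i$. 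In the quotient $h_{X_i}/h_1$ the common factor $h/\overline{h}$ cancels. So after substituting $U_k=a_k$ for $k\geqslant 1$, the left and right sides of \eqref{egalite fractions} are, respectively, the unweighted and the $\mu$-weighted ratios
\[
\frac{\sum_x (-x_i)\prod_{y\neq x}(U_0-t(y))}{\sum_x \prod_{y\neq x}(U_0-t(y))}
\quad\text{and}\quad
\frac{\sum_x \mu(x)(-x_i)\prod_{y\neq x}(U_0-t(y))}{\sum_x \mu(x)\prod_{y\neq x}(U_0-t(y))}.
\]

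The weights $\mu(x)$ are the whole difficulty, and this is the step I expect to be the main obstacle. If all multiplicities share a common value $\mu$, then $h=\overline{h}^{\mu}$. In that case $\partial h/\partial U_j = \mu\,\overline{h}^{\mu-1}\,\partial\overline{h}/\partial U_j$ for every $j$, and \eqref{egalite fractions} follows at once as an identity in $\C(W)(U_0)$, after specializing $U_k=a_k$.

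The approach I would try first is a Galois argument. $\mathrm{Gal}(\overline{\C(W)}/\C(W))$ permutes $V(\mathcal{I}_W)$ and preserves multiplicities, because $\mathcal{I}_W$ is defined over $\C(W)$. Hence $\mu$ is constant on each orbit, and $h=\prod_k \overline{h}_k^{\mu_k}$, where the $\overline{h}_k$ are the orbit products. Each orbit-block then satisfies the identity separately, so it remains to show that a single $\mu$ serves all orbits. I am not confident that this holds in general: with two orbits of different multiplicities the two ratios above seem to differ as rational functions.

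If that uniformity fails, the fallback is to prove agreement at every root $U_0=t(x)$. There all terms with $y\neq x$ vanish, assuming $t$ separates, and both sides equal $x_i$. This suffices for the way the lemma is used in the proof of Theorem~\ref{theoreme GRUR}, but it is strictly weaker than the identity as stated. So the crux is the uniformity of multiplicities, and I would make that the first thing to settle.
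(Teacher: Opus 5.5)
Your product-rule computation is correct, and it settles your crux in the negative. The left side is the unweighted quotient and the right side is the $\mu$-weighted one. These are different elements of $\C(W)(U_0)$ as soon as the multiplicities are not all equal. So the Galois route cannot succeed, and the lemma is false as an identity of rational functions.

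A minimal example: take $n=1$, $f_1=X_1^2(X_1-1)$ and $t=X_1$ (replacing $1$ by $W_1$ gives a genuinely parametric version). Then $V(\mathcal{I}_W)=\{0,1\}$ with multiplicities $2,1$. Both points are $\C(W)$-rational, so each Galois orbit is a singleton and uniformity already fails. Here $h=U_0^2(U_0-U_1)$, $\overline{h}=U_0(U_0-U_1)$, and
\[
\frac{h_{X_1}}{h_1}=\frac{-U_0^2}{3U_0^2-2U_0}=\frac{-U_0}{3U_0-2},
\qquad
\frac{\partial\overline{h}/\partial U_1}{\partial\overline{h}/\partial U_0}(U_0,1)=\frac{-U_0}{2U_0-1}.
\]
These agree at the roots $U_0=0,1$ of $h_0$ but not identically: at $U_0=2$ they give $-1/2$ and $-2/3$.

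The paper's one-line proof multiplies numerator and denominator by $g=\prod_x L_x^{\mu(x)-1}$. It tacitly uses $\partial h/\partial U_j=g\,\partial\overline{h}/\partial U_j$, which drops the term $\overline{h}\,\partial g/\partial U_j$ of the product rule. That dropped term is exactly what produces your weights $\mu(x)$, so you kept the step the paper gets wrong.

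What survives is your fallback. If $t$ separates $V(\mathcal{I}_W)$, then at $U_0=t(x)$ only the summand indexed by $x$ is nonzero in each of your four sums. Hence both reduced quotients take the same value there. With your convention $c_{x,i}=-x_i$ that value is $-x_i$, not $x_i$ as you wrote. For the RUR part of Theorem~\ref{theoreme GRUR}, this pointwise agreement is all that is needed.

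For the candidate part, no separation hypothesis is required and no comparison with $\overline{h}$ is needed. After cancelling $g$, the right-hand side is exactly your $\mu$-weighted quotient. That is the form of Rouillier's RUR-candidate, whose polynomials carry the weights $\mu(x)$. The left-hand side is instead the unweighted representation attached to the radical. It coincides with the right-hand side only on the roots of $h_0$.
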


\begin{proof} 
    The left-hand side in Equation \eqref{egalite fractions} is just the reduced version of the right-hand side. By multiplying both the numerator and the denominator of the right-hand side by $\prod_{y \in V(\mathcal{I}_W)} (U_0 - x_1U_1 - \dots - x_n U_n)^{\mu(x) -1}$ we get the result. \qed
\end{proof}

The meaning of this lemma is that in the definition of the RUR given by Rouillier, using $h_0$ or its squarefree part to define the rational fractions does not change anything. In our case, we will use the formulas with multiplicities as they have better specialization properties. Thus, using that $\phi_w$ commutes with derivatives, the only remaining thing to show is that for almost all $w \in \C^s$, $h_0(U_0 ; w)$ is the characteristic polynomial of the multiplication by  $t$ in the quotient $\C[X]/\mathcal{I}(w)$, that is, that $h_0 (U_0 ; w) = \prod_{x \in V(\mathcal{I}(w))} (U_0 - t(x))^{\mu(x)}$. 

With that objective in mind, let us summarize the algebraic sets of $\C^s$ that will appear in what follows :
\begin{itemize}
    \item we recall that $\mathcal{D}$ is the set of parameters for which the specialized system is not zero-dimensional ;
    \item we denote by $\mathcal{L}:= \cup_{i=0}^n V(\LT_{<X}(f_i)) \subseteq \C^s$ the set of parameters for which one or more of the leading coefficient in $X$ of the initial equations vanish ;
    \item we denote by $\mathcal{S}$ the set of parameters $w$ for which $t$ is not injective on $V(\mathcal{I}(w))$ ;
    \item we denote by $\mathcal{W}_{\infty}$ the set of parameters for which the specialized system has roots at infinity. 
\end{itemize}

\begin{remark}\label{def W_infty}
    The fact that $\mathcal{W}_\infty$ is an algebraic set comes from the fact that $\mathcal{W}_\infty = V( \Res(\tilde{F_1}, \dots, \tilde{F}_n))$.
\end{remark}

We start by studying the specialization of formula \eqref{factorisation u-res}.

\begin{lemma}
    For any value of parameters $w \in \C^s \backslash \mathcal{L}$, both resultants appearing in \ref{factorisation u-res} specialize. In other words,
\begin{equation}\label{specialization uRes}
    \phi_w \Big ( \Res (F_0, \dots, F_n) \Big ) = \Res \big( \phi_w(F_0), \dots, \phi_w(F_n)\big)
\end{equation}

and 

\begin{equation}\label{specialization Res}
        \phi_w \left( \Res (\tilde{F_1}, \dots, \tilde{F_n})\right) = \Res \left( \tilde{\phi_w(F_1)}, \dots, \tilde{\phi_w(F_n)}\right)
\end{equation}
\end{lemma}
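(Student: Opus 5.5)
The plan is to use the structure of the multivariate (Macaulay) resultant as a universal polynomial in the coefficients. For fixed degrees $d_0=1, d_1,\dots,d_n$, there is a polynomial $\mathrm{Res}_{d_0,\dots,d_n} \in \Z[c_{i,\alpha}]$ in the generic coefficients $c_{i,\alpha}$ of $n+1$ homogeneous forms of degrees $d_i$. Its value at any specific forms is their resultant, provided each form has exactly degree $d_i$, i.e.\ the list of coefficients is read with respect to the prescribed degree pattern. Since $\phi_w$ is a ring morphism, it commutes with evaluating an integer polynomial at coefficients: $\phi_w(\mathrm{Res}_{d}(c(F))) = \mathrm{Res}_{d}(\phi_w(c(F)))$. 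The whole question is therefore whether the specialized forms are still read with the same degree pattern, which is exactly what the condition $w\notin\mathcal{L}$ should guarantee.

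First I would fix $d_i := \deg_X f_i$ (total degree in $X$) for $1\le i\le n$, computed in $\C(W)[X]$, and homogenize with $X_0$ to degree $d_i$. Next I would check that for $w\notin\mathcal{L}$ the leading coefficient under $<_X$ (grevlex, hence graded) of each $f_i$ does not vanish at $w$. Then $\phi_w(f_i)$ still has total degree $d_i$, and homogenizing commutes with specialization: $\phi_w(F_i)$ is the homogenization of $\phi_w(f_i)$ in degree $d_i$. The form $F_0$ does not involve $W$, so $\phi_w(F_0)=F_0$ trivially. Applying the universal resultant coefficientwise then gives \eqref{specialization uRes}. This holds whether we regard the coefficients of $F_0$ as the variables $U$ or view the resultant in $\C[W][U]$, since $\phi_w$ acts on $W$ only.

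For \eqref{specialization Res} I would use the same argument with $n$ forms in $n$ variables $X_1,\dots,X_n$. Here $\tilde F_i = F_i(0,X)$ is the top-degree homogeneous component of $f_i$, of degree $d_i$, and $\phi_w(\tilde F_i) = \widetilde{\phi_w(F_i)}$ because setting $X_0=0$ commutes with $\phi_w$. The only subtlety is that the resultant of the forms $\widetilde{\phi_w(F_i)}$ must be taken with the degree pattern $(d_1,\dots,d_n)$. Non-vanishing of the grevlex leading coefficient shows $\phi_w(\tilde F_i)$ is not zero, so it is a nonzero form of degree exactly $d_i$ and the pattern is preserved.

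The main obstacle is the bookkeeping: making sure that "$\LT_{<X}(f_i)$ does not vanish at $w$" (the paper writes $V(\LT)$, meaning the leading coefficient in $\C[W]$) really forces the degree to be preserved. This works because grevlex is degree-compatible. Even if some lower coefficients of the top component vanish, the universal formula still applies, since the formula only needs the degree pattern fixed, not all coefficients nonzero. I would state this carefully and otherwise treat the rest as routine.
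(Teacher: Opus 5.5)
Your proposal is correct and takes the same route as the paper, whose proof says only that resultants commute with specialization when the degrees of the arguments do not change, and that $w \notin \mathcal{L}$ ensures this. You supply the details the paper leaves implicit: the universal resultant in $\Z[c_{i,\alpha}]$ commutes with the ring morphism $\phi_w$, and because grevlex is degree-compatible, a non-vanishing leading coefficient preserves $\deg_X f_i$, so homogenization and setting $X_0=0$ both commute with $\phi_w$.
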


\begin{proof}
Resultants specialize as long as the degrees of their arguments do not change. Outside of $\mathcal{L}$, the degrees of $f_0, \dots, f_n$ and thus those of $F_0, \dots, F_n$ and $\tilde{F_1},\dots, \tilde{F}_n$ do not change after specializing so both resultants specialize. \qed
\end{proof}

That specialization property makes us able to identify what the polynomial $h(U ; W)$ specializes to. 

\begin{lemma}\label{lemme spe Chow form}
    For any value of parameters $w \in \C^s \backslash (\mathcal{D} \cup \mathcal{L} \cup \mathcal{W}_\infty)$, we have 
\begin{equation}\label{spe Chow form}
        \phi_w \left(h(U ; W) \right) = \prod_{x \in \mathcal{V}_w} (U_0 - x_1 U_1 - \dots - x_n U_n)^{\mu(x)}.
    \end{equation}
\end{lemma}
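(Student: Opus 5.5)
Specialize the factorization of Proposition \ref{facto u-res} and compare it with the unparametrized factorization of the specialized $u$-resultant. Apply $\phi_w$ to both sides of \eqref{factorisation u-res}. By \eqref{specialization uRes} and \eqref{specialization Res}, valid since $w \notin \mathcal{L}$, the left side becomes $\Res(\phi_w(F_0),\dots,\phi_w(F_n))$ and the scalar factor becomes $\Res(\widetilde{\phi_w(F_1)},\dots,\widetilde{\phi_w(F_n)})$. This gives
\begin{equation*}
\Res\big(\phi_w(F_0),\dots,\phi_w(F_n)\big) = \Res\big(\widetilde{\phi_w(F_1)},\dots,\widetilde{\phi_w(F_n)}\big)\cdot \phi_w\big(h(U;W)\big).
\end{equation*}
One point needs care: $h$ a priori lies in $\C(W)[U]$. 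Since $\Res(\tilde F_1,\dots,\tilde F_n)\in\C[W]$ and the $u$-resultant lies in $\C[W][U]$, $h$ equals their quotient. Because $w\notin\mathcal{W}_\infty = V(\Res(\tilde F_1,\dots,\tilde F_n))$ (Remark \ref{def W_infty}), the denominator does not vanish at $w$. Hence each coefficient of $h$ (a rational function in $W$) is regular at $w$, and $\phi_w(h)$ is well defined and equals the quotient of the specializations.

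Next I apply Proposition \ref{facto u-res} to the specialized system itself, i.e.\ the non-parametric case with $s=0$, for the polynomials $f_i(w,X)\in\C[X]$. Its hypotheses hold for this $w$. The ideal $\mathcal{I}(w)$ is zero-dimensional since $w\notin\mathcal{D}$. Its projective closure has no point at infinity since $w\notin\mathcal{W}_\infty$, equivalently $\Res(\widetilde{\phi_w(F_1)},\dots)\neq 0$. The homogenizations of $f_i(w,X)$ coincide with $\phi_w(F_i)$ because degrees are preserved outside $\mathcal{L}$. The underlying Poisson formula and Stickelberger's theorem \ref{Stickelberger} then give
\begin{equation*}
\Res\big(\phi_w(F_0),\dots,\phi_w(F_n)\big)=\Res\big(\widetilde{\phi_w(F_1)},\dots,\widetilde{\phi_w(F_n)}\big)\prod_{x\in\mathcal{V}_w}(U_0-x_1U_1-\dots-x_nU_n)^{\mu(x)}.
\end{equation*}

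Comparing the two displays and cancelling the nonzero constant $\Res(\widetilde{\phi_w(F_1)},\dots,\widetilde{\phi_w(F_n)})\in\C^*$ in the integral domain $\C[U]$ yields \eqref{spe Chow form}. The multiplicities on the right are the multiplicities in $\C[X]/\mathcal{I}(w)$, because Stickelberger's theorem gives $\det(m_{f_0})$ with the algebraic multiplicities of that quotient.

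The main obstacle is justifying that the specialized system meets the hypotheses of Proposition \ref{facto u-res}, in particular that $\phi_w(F_i)$ is really the homogenization of $f_i(w,X)$. Concretely, $\phi_w$ must not lower the total $X$-degree of $f_i$. I read the definition of $\mathcal{L}$ as guaranteeing this: the vanishing of leading coefficients is meant to control the top-degree homogeneous part, as needed for the resultant specialization lemma. If $\mathcal{L}$ only controlled the $\grevlex$ leading coefficient, I would enlarge it to the vanishing locus of all top-degree coefficients, which is still a proper Zariski-closed set. A second point to check is that Poisson's formula in the zero-dimensional case needs only the nonvanishing of the resultant at infinity, which is exactly $w\notin\mathcal{W}_\infty$.
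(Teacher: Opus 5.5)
Your proposal is correct and follows the paper's own argument. You apply Poisson's formula and Stickelberger's theorem to the specialized system, substitute the specialization identities valid off $\mathcal{L}$, and cancel the constant $\Res(\widetilde{\phi_w(F_1)},\dots,\widetilde{\phi_w(F_n)})$, which is nonzero because $w\notin\mathcal{W}_\infty$. Your check that $\phi_w(h)$ is well defined (its denominator is $\Res(\tilde F_1,\dots,\tilde F_n)$) makes explicit a point the paper leaves implicit. Your hedge about $\mathcal{L}$ is unnecessary: the grevlex order is degree-compatible, so nonvanishing of the $<_X$-leading coefficient already preserves the total $X$-degree.
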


\begin{proof}
    First, we can use one more time the Poisson formula from \cite[Proposition 2.7]{Jouanolou1-1991} and Stickelberger's theorem to get that 
    $$\Res \left( \phi_w (F_0), \dots, \phi_w (F_n)\right) =$$
    \begin{equation}\label{Poisson spe}
        \Res(\tilde{\phi_w(F_1)}, \dots, \tilde{\phi_w(F_n)}) \prod_{x \in V(\mathcal{I}(w))} (U_0 - x_1 U_1 - \dots - x_n U_n)^{\mu(x)}.
    \end{equation}
    We can then substitute \eqref{specialization uRes} and \eqref{specialization Res} into \eqref{Poisson spe}. Because $w \notin \mathcal{W}_\infty$, the specialized system has no root at infinity, and thus formula \eqref{Poisson spe} is non trivial. We can compare it to the specilization of \eqref{factorisation u-res} to yield the wanted formula. \qed
\end{proof}

Now we just have to substitute the values $U_1, \dots, U_n$ to $a_1, \dots, a_n$ in \eqref{spe Chow form} to get 
\begin{equation}
     \phi_w \left(h(U_0, a_1, \dots, a_n ; W) \right) = \phi_w (h_0) = \prod_{x \in V(\mathcal{I}(w))} (U_0 - t(x))^{\mu(x)}.
\end{equation}
and the right hand side of this equation is exactly the characteristic polynomial of the multiplication by $t$ in the quotient $\C[X]/\mathcal{I}(w)$ (this is once again Stickelberger's theorem). We summarize everything in the following proposition.

\begin{proposition}\label{GRUR-candidate}
	For any values of the parameters $w \in \C^s \backslash (\mathcal{D} \cup \mathcal{L} \cup \mathcal{W}_\infty)$ the family $\phi_w(\mathcal{F})$ is a RUR-candidate of $V(\mathcal{I}(w))$.
\end{proposition}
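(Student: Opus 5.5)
The plan is to assemble Lemma \ref{lemme spe Chow form} with the definition of a RUR-candidate in \cite[Definition 3.2]{Rouillier1998}. By that definition, the data attached to $t$ form a RUR-candidate of $V(\mathcal{I}(w))$ when two things hold. The first polynomial must be the characteristic polynomial of the multiplication by $t$ in $\C[X]/\mathcal{I}(w)$, and the remaining polynomials must be the partial derivatives with respect to $U_0$ and to $U_i$ of the Chow form $h_w(U):=\prod_{x\in\mathcal{V}_w}(U_0-x_1U_1-\dots-x_nU_n)^{\mu(x)}$, evaluated at $(U_0,a_1,\dots,a_n)$. Up to Lemma \ref{egal fractions}, the reduced versions give the same rational fractions. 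So I would fix $w\in\C^s\backslash(\mathcal{D}\cup\mathcal{L}\cup\mathcal{W}_\infty)$ and check each element of $\phi_w(\mathcal{F})$ against these requirements.

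The first step is to make sure that $\phi_w(\mathcal{F})$ is well defined. A priori $h$ lives in $\overline{\C(W)}[U]$. However, Proposition \ref{facto u-res} writes $h$ as the quotient of $\Res(F_0,\dots,F_n)\in\C[W][U]$ by $\Res(\tilde F_1,\dots,\tilde F_n)\in\C[W]$. Hence $h\in\C(W)[U]$, and its coefficients have denominators that are powers of factors of $\Res(\tilde F_1,\dots,\tilde F_n)$. These do not vanish at $w$ because $w\notin\mathcal{W}_\infty$ (Remark \ref{def W_infty}). Therefore $\phi_w$ is defined on $h$, on all of its partial derivatives, and on the elements of $\mathcal{F}$.

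The second step is the central identity. Lemma \ref{lemme spe Chow form} gives $\phi_w(h)=h_w$. The specialization $\phi_w$ acts only on the coefficients in $\C(W)$, so it commutes with the derivations $\partial/\partial U_j$ and with the substitution $U_i\mapsto a_i$; this uses that the $a_i$ lie in $\C$ and do not involve $W$. Consequently we get $\phi_w(h_0)=h_w(U_0,a_1,\dots,a_n)$, $\phi_w(h_1)=\frac{\partial h_w}{\partial U_0}(U_0,a_1,\dots,a_n)$, and $\phi_w(h_{X_i})=\frac{\partial h_w}{\partial U_i}(U_0,a_1,\dots,a_n)$. Finally, $h_w(U_0,a_1,\dots,a_n)=\prod_{x\in\mathcal{V}_w}(U_0-t(x))^{\mu(x)}$. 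Since $\mathcal{I}(w)$ is zero-dimensional ($w\notin\mathcal{D}$), Stickelberger's theorem \ref{Stickelberger} identifies this product with the characteristic polynomial of the multiplication by $t$ in $\C[X]/\mathcal{I}(w)$.

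The main point needing care is the compatibility with Rouillier's definition. His polynomials $g_{t,v}$ are defined through traces, $g_{t,v}=\sum_{x}\mu(x)v(x)\prod_{y\neq x}(T-t(y))$, rather than as derivatives of a Chow form. I would check directly that $\frac{\partial h_w}{\partial U_i}(U_0,a)$ coincides with $g_{t,X_i}$: differentiating the product gives $\sum_x\mu(x)\,x_i\,(U_0-t(x))^{\mu(x)-1}\prod_{y\neq x}(U_0-t(y))^{\mu(y)}$. Similarly, $\frac{\partial h_w}{\partial U_0}(U_0,a)$ matches $g_{t,1}$, up to the common factor $\prod_x(U_0-t(x))^{\mu(x)-1}$. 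This is exactly the discrepancy handled by Lemma \ref{egal fractions}, applied now over $\C$ for the specialized system. With this identification, $\phi_w(\mathcal{F})$ is a RUR-candidate of $V(\mathcal{I}(w))$.
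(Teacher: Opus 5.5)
Your proposal is correct and follows essentially the same route as the paper: Lemma \ref{lemme spe Chow form}, commutation of $\phi_w$ with $\partial/\partial U_j$ and with $U_i\mapsto a_i$, Stickelberger's theorem to identify $\phi_w(h_0)$ with the characteristic polynomial of $t$ in $\C[X]/\mathcal{I}(w)$, and Lemma \ref{egal fractions} to pass between unreduced and reduced fractions. Your check that the coefficients of $h$ have only $\Res(\tilde F_1,\dots,\tilde F_n)$ as denominator, so that $\phi_w$ is defined off $\mathcal{W}_\infty$, is a useful detail the paper leaves implicit; one small precision is that under Rouillier's convention (product over the distinct values $y\in t(\mathcal{V}_w)$, $y\neq t(x)$), the common factor for a non-separating $t$ is $\phi_w(h_0)/\overline{\phi_w(h_0)}$ rather than $\prod_x(U_0-t(x))^{\mu(x)-1}$, which does not affect the argument.
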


We now suppose that the linear form $t$ is generically separating and we denote by $\mathcal{S} \subseteq \C^s$ the algebraic set of measure 0 formed by the $w \in \C^s$ for which $t$ is not injective on $V(\mathcal{I}(w))$. We have an immediate corollary.

\begin{corollary}\label{valeurs generiques GRUR}
	For any values of the parameters $w \in \C^s \backslash (\mathcal{D} \cup \mathcal{L} \cup \mathcal{W}_\infty \cup \mathcal{S})$, the family $\phi_w(\mathcal{F})$ is a RUR  of $V(\mathcal{I}(w))$.
\end{corollary}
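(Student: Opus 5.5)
The plan is to combine Proposition \ref{GRUR-candidate} with the definition of $\mathcal{S}$ and with Rouillier's definition of a RUR \cite[Definition 3.2]{Rouillier1998}. In that definition, a RUR is a RUR-candidate whose linear form separates the variety.

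Fix $w \in \C^s \backslash (\mathcal{D} \cup \mathcal{L} \cup \mathcal{W}_\infty \cup \mathcal{S})$. Since $w \notin \mathcal{D} \cup \mathcal{L} \cup \mathcal{W}_\infty$, Proposition \ref{GRUR-candidate} shows that $\phi_w(\mathcal{F})$ is a RUR-candidate of $V(\mathcal{I}(w))$. Concretely, by Lemma \ref{lemme spe Chow form} we have
\[
\phi_w(h_0) = \prod_{x \in V(\mathcal{I}(w))} (U_0 - t(x))^{\mu(x)},
\]
and, because $\phi_w$ commutes with derivation in the $U$ variables, $\phi_w(h_1)$ and $\phi_w(h_{X_i})$ are the corresponding partial derivatives of the specialized Chow form $\prod_{x}(U_0 - x_1U_1 - \dots - x_nU_n)^{\mu(x)}$, evaluated at $(U_0,a_1,\dots,a_n)$.

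Since $w \notin \mathcal{S}$, the form $t$ is injective on $V(\mathcal{I}(w))$. This is exactly the condition that upgrades a RUR-candidate to a RUR. To check explicitly that the rational parametrization recovers the points, write $H = \prod_x (U_0 - x\cdot U)^{\mu(x)}$ for the specialized Chow form and fix a point $y \in V(\mathcal{I}(w))$. By the product rule, $\partial H/\partial U_i = -\sum_x \mu(x) x_i H/(U_0 - x\cdot U)$, where the sum runs over the points $x \in V(\mathcal{I}(w))$, and similarly $\partial H/\partial U_0 = \sum_x \mu(x) H/(U_0 - x\cdot U)$. After the substitution $U=(U_0,a_1,\dots,a_n)$, the factor $U_0 - t(x)$ appears in $H$. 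Divide both derivatives by $\prod_x (U_0 - t(x))^{\mu(x)-1}$, which is legitimate by Lemma \ref{egal fractions}, and then evaluate at $U_0 = t(y)$. Injectivity of $t$ means every term with $x \neq y$ vanishes. The quotient therefore equals $\mu(y)\,y_i$ divided by $\mu(y)$, up to the sign convention, which is $y_i$. So the equivalence \eqref{equiv systemes} holds, with multiplicities read off from $\phi_w(h_0)$.

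The only delicate point is that $\mathcal{S}$ is a proper algebraic set, so that the statement is not vacuous. This holds because $t$ is generically separating. Since $V(\mathcal{I}_W)$ is finite, $t$ separating $V(\mathcal{I}_W)$ gives $\Disc_{U_0}(\overline{h}_0)\neq 0$. Outside the union of $\mathcal{D}$, $\mathcal{L}$, $\mathcal{W}_\infty$ and the zero set of the numerator and denominator of this discriminant, the specialization of $\overline{h}_0$ has distinct roots equal to the $t(x)$ for $x \in V(\mathcal{I}(w))$. Hence $\mathcal{S}$ is contained in a proper Zariski-closed set, and the corollary follows immediately from Proposition \ref{GRUR-candidate}.
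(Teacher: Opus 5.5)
Your proof is correct and follows the paper's argument: Proposition~\ref{GRUR-candidate} gives that $\phi_w(\mathcal{F})$ is a RUR-candidate, and $w \notin \mathcal{S}$ supplies the separation that, by Rouillier's definition, makes it a RUR; your explicit check that the reduced fractions evaluated at $t(y)$ return $y_i$ (up to sign) is a harmless extra. The final non-vacuity paragraph is not needed for the statement, and as written it only shows that $t$ takes $\deg \overline{h}_0$ distinct values on $V(\mathcal{I}(w))$, not that $t$ is injective there; for injectivity you would also need $V(\mathcal{I}(w))$ to have at most $\deg \overline{h}_0$ points, for example by specializing the squarefree part of the full form $h$.
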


\begin{proof}
	We know from Proposition \ref{GRUR-candidate} that $\mathcal{F}$ specializes to a RUR-candidate of $V(\mathcal{I}(w))$. It becomes a RUR whenever $t$ separates $V(\mathcal{I}(w))$. By definition, this is the case when $w \notin \mathcal{S}$.
\end{proof}

\begin{remark}\label{generic sets}
	We have the inclusion $\mathcal{D} \subseteq \mathcal{W}_{\infty}$ so the generic condition can be expressed as $w \in \C^s \backslash (\mathcal{W}_\infty \cup \mathcal{L} \cup \mathcal{S})$. The statement comes from the fact that if $w \in \mathcal{D}$, then $V(\mathcal{I}(w))$ has dimension zero and thus its projective closure intersects the hyperplane at infinity, that is, $w \in \mathcal{W}_\infty$.
\end{remark}

\section{Size of the GRUR}

We present in this section some bounds that can be obtained on the degrees and size of the polynomials of the GRUR.

In what follows, we will use the notation $\mathfrak{h} (f) := \log_2(\|f\|_\infty)$ for the height of a polynomial $f$ with integer coefficients. We suppose for this section that $f_1,\dots,f_n$ have integer coefficients by chasing denominators. We denote by :
\begin{itemize}
    \item $d_{1,X},\dots,d_{n,X}$ the total degrees of $f_1,\dots,f_n$ in $X$ and $\displaystyle{d_X:=\max_{1\leqslant i \leqslant n} d_{i,X}}$ ;
    \item $d_{1,W},\dots,d_{n,W}$ the total degrees of $f_1,\dots,f_n$ in $W$ and $\displaystyle{d_W:=\max_{1\leqslant i \leqslant n} d_{i,W}}$ ;
    \item $c:=\displaystyle{\max_{1 \leqslant i \leqslant n} \|f_i\|_{\infty}}$ the maximum coefficient among $f_1,\dots, f_n$ and $\tau := \displaystyle{\max_{1 \leqslant i \leqslant n}} \mathfrak{h} (f_i)$.
\end{itemize}

\subsection{Bounds for $h$}

The work presented in this section makes great used of the arithmetic Nullstellensätz presented in \cite{DKS2012}. We warmly thank Teresa Krick for the reference.

\begin{theorem}\label{Bornes h(U)}
    There exists $\Delta \in \Z[W][U] $ and $P \in \Z[W]$ such that 
    $$h(U ; W) = \frac{\Delta(U ; W)}{P(W)}$$
    with the following bounds :
    \begin{itemize}
        \item $\deg_U(\Delta) \leqslant d_X^n$ ;
        \item $\deg_W (\Delta) \leqslant (n+1) d_X^n d_W$ ;
        \item $\deg_W (P) \leqslant nd_X^{n-1} d_W$ ;
        \item $\mathfrak{h}(\Delta), \mathfrak{h}(P) \leqslant O \big(nd_X^n \tau + nd_X^n d_W \log_2(s+1) \big)$.
    \end{itemize}
\end{theorem}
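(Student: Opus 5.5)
The plan is to read the bounds off the factorization of Proposition \ref{facto u-res}. Set $\Delta := \Res(F_0,\dots,F_n)(U;W)$ and $P := \Res(\tilde F_1,\dots,\tilde F_n)(W)$. We keep the standing assumption of this section that $V(\mathcal{I}_W)$ has no point at infinity, so $P\neq 0$. Then Equation \eqref{factorisation u-res} gives exactly $h = \Delta/P$. Since the $F_i$ have integer coefficients after clearing denominators, and the Macaulay resultant is a polynomial with integer coefficients in the coefficients of its arguments, both $\Delta\in\Z[W][U]$ and $P\in\Z[W]$. It remains to bound degrees and heights of these two resultants, viewed as polynomials in the coefficients of the $F_i$.

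For the degrees I will use the multihomogeneity of the resultant. $\Res(F_0,\dots,F_n)$ is homogeneous of degree $\prod_{j\neq i} d_{j,X}$ in the coefficients of $F_i$, with $d_{0,X}=1$.

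\begin{itemize}
\item The coefficients of $F_0$ are linear in $U$, and the degree of $\Delta$ in them is $\prod_{j\ge1} d_{j,X} \le d_X^n$. This gives $\deg_U\Delta\le d_X^n$.
\item Each coefficient of $F_i$, $i\ge1$, has degree at most $d_W$ in $W$, and $\Delta$ has degree $\prod_{j\ne i,\,j\ge1} d_{j,X}\le d_X^{n-1}$ in them. Summing over $i$ gives $\deg_W\Delta\le n d_X^{n-1}d_W$, which is within the announced $(n+1)d_X^n d_W$.
\item For $P$, the resultant of $n$ forms in $n$ variables has degree $\prod_{j\ne i}d_{j,X}\le d_X^{n-1}$ in the coefficients of $\tilde F_i$. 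This yields $\deg_W P\le n d_X^{n-1} d_W$.
\end{itemize}

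The height bounds are the main step, and I will take them from the arithmetic estimates of \cite{DKS2012}. There, the height of the resultant of polynomials over $\Z[W]$ is bounded through the mixed degrees and the heights (in the sense of sup-norm or Mahler measure, after adding a $\log(s+1)$ term to account for the number of monomials in $W$) of the inputs. The bound has the form $\mathfrak{h}(\Res)\le \sum_i \big(\prod_{j\neq i} d_j\big)\big(\mathfrak{h}(F_i)+ d_{i,W}\log(s+1)+ \log(\text{number of monomials})\big)$. Plugging in $\mathfrak{h}(F_i)\le\tau$, $\mathfrak{h}(F_0)=0$, $d_{i,W}\le d_W$, and at most $n+1$ summands each weighted by at most $d_X^n$, I expect $O(n d_X^n\tau + n d_X^n d_W\log_2(s+1))$, up to lower-order $\log$ terms in $n,d_X$ absorbed in the $O$. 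The same reasoning applies to $P$.

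The obstacle is matching the precise hypotheses of the \cite{DKS2012} height bound for sparse or Macaulay resultants over $\Z[W]$. If a direct statement is unavailable, I will fall back on bounding the Macaulay matrix determinant via Hadamard's inequality applied after a Kronecker substitution in $W$. Alternatively, I will bound the Mahler measure of the resultant multiplicatively from the Poisson formula and convert back to sup-norm, losing $O(\deg\cdot\log(s+1))$.
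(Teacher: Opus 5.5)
Your proof is correct in outline and is a more direct version of the paper's argument. The paper does not start from $\Res(F_0,\dots,F_n)$. It applies the arithmetic Nullstellensatz \cite[Theorem 4.28]{DKS2012} to generic forms $\mathbf{F}_0,\dots,\mathbf{F}_n$ and writes the resulting $\alpha$ as $\Lambda\Res(\mathbf{F}_0,\dots,\mathbf{F}_n)$ with $\Lambda\in\Z$. It then takes $\Delta=\Lambda\Res(F_0,\dots,F_n)$ and $P=\Lambda\Res(\tilde F_1,\dots,\tilde F_n)$. The $W$-degree is read off the total degree $(n+1)d_X^n$ of $\alpha$ in all generic coefficients. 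The height is read off $\mathfrak{h}(\alpha)$ combined with the specialization estimate \cite[Lemma 2.37]{DKS2012}. Since $\Lambda$ cancels in $\Delta/P$, dropping it costs nothing. Your multihomogeneous count is also sharper, because the coefficients of $F_0$ contain no $W$: you get $\deg_W\Delta\le nd_X^{n-1}d_W$ instead of $(n+1)d_X^nd_W$.

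What the paper's detour does supply is an explicit height bound for the generic resultant $R\in\Z[\mathbf{u}_0,\dots,\mathbf{u}_n]$; indeed $\mathfrak{h}(R)\le\mathfrak{h}(\alpha)$. You should phrase your height step through exactly that, rather than through an unspecified ``bound of this form''. Write $\|R\|_1$ for the sum of the absolute values of the coefficients of $R$. The coefficients of $F_0$ are $\pm U_j$, and each coefficient of $F_i$, $i\ge1$, has $\ell^1$-norm at most $2^{\tau}\binom{d_W+s}{s}\le 2^{\tau}(s+1)^{d_W}$. Multihomogeneity then gives
\[
\mathfrak{h}\big(\Res(F_0,\dots,F_n)\big)\le\log_2\|R\|_1+\sum_{i=1}^n\Big(\prod_{1\le j\le n,\ j\ne i}d_{j,X}\Big)\big(\tau+d_W\log_2(s+1)\big),
\]
with $\log_2\|R\|_1\le\mathfrak{h}(R)+\log_2(\#\text{monomials of }R)=O(nd_X^n\log_2(n+1))$. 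The same argument handles $P$.

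Your Hadamard fallback would not reach the statement. The Macaulay determinant is $\Res$ times an extraneous minor, and it has total degree $\binom{nd_X+1}{n}$ in the coefficients (when all $d_{i,X}=d_X$). That is exponentially larger in $n$ than $(n+1)d_X^n$, so the coefficient in front of $\tau$ would be off by a factor exponential in $n$. The Poisson/Mahler-measure route amounts to re-proving the DKS estimate.

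Finally, the $\log_2\|R\|_1$ term is not literally dominated by $nd_X^n\tau+nd_X^nd_W\log_2(s+1)$ when $\tau$ is small, so ``absorbed in the $O$'' is not quite right. The paper drops its analogous $n^2d_X^n\log_2(d_X+1)$ term in the same way, so this is a looseness of the stated bound rather than of your route.
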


\begin{proof}
We start by following the ideas of \cite[Example 4.18 and 4.37]{DKS2012} and consider the generic polynomials of total degrees $d_0, \dots, d_n$
$$\mathbf{F}_j := \sum_{|\alpha| = d_j} u_{j, \alpha} X^{\alpha}, ~ 0 \leqslant j \leqslant n$$
where $\mathbf{u}_j := \{u_{j, \alpha}, ~ |\alpha| = d_j\}$ and $X^\alpha = X_0^{\alpha_0} \cdots X_n^{\alpha_n}$. Denote by $\mathbf{u} := \{\mathbf{u}_1, \dots, \mathbf{u}_n\}$. These generic coefficients will later be specialized to those of $F_0, \dots, F_n$ of the previous section. 

The arithmetic Nullstellensätz in \cite[Theorem 4.28]{DKS2012} applied to $V = \mathbb{A}^n_{\Q}$ and $\mathbf{F}_0, \dots, \mathbf{F}_n$ states that there exists $\alpha \in \Z[\mathbf{u}]$ and $g_0, \dots, g_n \in \Z[\mathbf{u}, X]$ such that 
\begin{equation}\label{DKS}
    g_0 (\mathbf{u}, X) \mathbf{F}_0 + \dots + g_n (\mathbf{u}, X) \mathbf{F}_n = \alpha
\end{equation}
with the following bounds on the degrees and the height:
\begin{itemize}
    \item $\deg_{\mathbf{u}_l} (\alpha) \leqslant \displaystyle{\prod_{j=0 \atop j \neq l}^n d_j}, ~ 0 \leqslant l \leqslant n$ ;
    \item $\mathfrak{h}(\alpha) \leqslant d_X^{n-1} (6n + 10)\log_2 (n+2)$.
\end{itemize}

The resultant $\Res(\mathbf{F}_0, \dots, \mathbf{F}_n)$ generates the elimination ideal $\langle \mathbf{F}_0, \dots, \mathbf{F}_n \rangle \cap \Z[\mathbf{u}]$ so there exists some $\Lambda \in \Z[\mathbf{u}]$ such that $\alpha = \Lambda \Res(\mathbf{F}_0, \dots, \mathbf{F}_n)$. It is well known (see for example \cite[§3.3, Theorem 3.1]{CLO2005}) that $\Res(\mathbf{F_0}, \dots, \mathbf{F_n})$ has partial degree $\deg_{u_l}$ equal to $\prod_{j=0 \atop j \neq l}^n d_j$. In other words, $\Lambda$ has partial degree in each variable equal to zero, that is, $\Lambda \in \Z$.

We can then specialize the coefficients $\mathbf{u}$ to those of $F_0, \dots, F_n$. Relation \eqref{DKS} transforms into
\begin{equation*}
 G_0(W, X) F_0 + \dots + G_n(W, X) F_n = \Lambda \Res(F_0, \dots, F_n)
\end{equation*}
where $G_0, \dots, G_n \in \Z[W, X]$. 

We know from Proposition \ref{facto u-res} that the $\Res(F_0, \dots, F_n)$ factorizes into a product $C(W) h(U ; W)$ where $C = \Res(\tilde{F}_1, \dots, \tilde{F}_n)$.

 Let us denote by $\Delta := \Lambda \Res(F_0, \dots, F_n) \in \Z[W][U]$. That means we have
\begin{equation}\label{quotient R}
    h(U ; W) = \frac{\Delta(U ; W)}{P(W)}
\end{equation}
where $P(W) := \Lambda C(W)$.

What we want now is a bound on the degree in $W$ and a bound on the height of $\Delta$ and $\Lambda C$ to derive bounds for the polynomials of the GRUR. Thanks to the definition of $h(U ; W)$, we know that $\deg_U(\Delta) \leqslant d_X^n$.

The bounds given by \cite[Theorem 4.28]{DKS2012} on Equation \eqref{DKS} can be derived into bounds for $\Delta$. Indeed, we specialize Equation \eqref{DKS} with the coefficients of $F_0, \dots, F_n$ which have degrees in $W$ at most $d_W$. For each $l \in [\![0, n]\!]$, $\alpha$ has partial degree in $\mathbf{u}_l$ bounded by $d_X^n$ so the total degree in $\mathbf{u}$ is bounded by $(n+1)d_X^n$ and thus $\deg_W(\Delta) \leqslant (n+1)d_X^n d_W$.

For the bound on the height, we use a lemma presented in \cite[Lemma 2.37]{DKS2012} 
\begin{lemma}\label{lemme hauteurs DKS}
    Let $f_1, \dots, f_r \in \Z[x_1, \dots, x_k]$ with total degrees bounded by $d$, bitsize bounded by $\tau$, and let $g \in \Z[y_1, \dots, y_r]$. Then
    $$\mathfrak{h}(f_1 \cdots f_r) \leqslant \sum_{i=1}^r \mathfrak{h}(f_i) + \log_2(k+1) \sum_{i=2}^r \deg(f_i),$$
    and
    $$\mathfrak{h}(g(f_1, \dots, f_r)) \leqslant \mathfrak{h}(g) + \deg(g) (\tau + \log_2(r+1) + d\log_2(k+1)).$$ 
\end{lemma}

In our case, the $f_i$'s are the coefficients of $F_0, \dots, F_n$ so $r \leqslant n(d_X+1)^n + (n+1)$ and $k = s$. Using the bounds on $\alpha$, and keeping only the dominant terms, we get
$$\mathfrak{h}(\Delta) \leqslant O \big(nd_X^n \tau + nd_X^n d_W \log_2(s+1) \big).$$

In addition to that, since $\mathfrak{h}(\Lambda) \leqslant \mathfrak{h}(\Delta)$, that bound is also true for $\Lambda$.

We can follow a very similar reasoning to get an upper bound on the height of $C$ as it is the resultant of $\tilde{F_1}, \dots, \tilde{F}_n$. Indeed, $\tilde{F}_1, \dots, \tilde{F}_n$ have the same degrees in $X$ as $F_1, \dots, F_n$ and we can take the bound $\tau$ for their height. Denote by $\mathbf{u}':= \mathbf{u}_1, \dots, \mathbf{u}_n$. We first apply theorem 4.28 from \cite{DKS2012} to the generic polynomials $\mathbf{F}_1, \dots, \mathbf{F}_n$ and thus get the existence of $\beta \in \Z[\mathbf{u}']$ and $z_1, \dots, z_n \in \Z[\mathbf{u}', X]$ such that 
$$z_1(\mathbf{u}', X) \mathbf{F}_1 + \dots + z_n(\mathbf{u}', X) \mathbf{F}_n = \beta$$ 
with the following bounds on the degrees and the height:
\begin{itemize}
    \item $\deg_{\mathbf{u}_l} (\beta) \leqslant \displaystyle{\prod_{j=1 \atop j \neq l}^n d_j} \leqslant d_X^{n-1}, ~ 1 \leqslant l \leqslant n$ ;
    \item $\mathfrak{h}(\beta) \leqslant d_X^{n-2} (6n + 10)\log_2 (n+2)$.
\end{itemize}

In addition to that, $\beta = \Lambda' \Res(\mathbf{F}_1, \dots, \mathbf{F}_n)$ with $\Lambda' \in \Z$ just like before. We can then use lemma \ref{lemme hauteurs DKS} to yield a bound on the height of $\Lambda' C$ which can be kept as a bound on the height of $C$. Once again we only keep the dominant terms.
\begin{equation}\label{hauteur C}
    \mathfrak{h}(C) \leqslant O \big( nd_X^{n-1} \tau + n d_X^{n-1} d_W \log_2(s+1)\big).
\end{equation}

Since $\mathfrak{h}(\Lambda C) = \mathfrak{h}(\Lambda) + \mathfrak{h}(C)$, we get $\mathfrak{h}(P) = \mathfrak{h}(\Lambda C) \leqslant O \big(nd_X^n \tau + nd_X^n d_W \log_2(s+1) \big)$.

For the degree, we use the fact that $\Res(\mathbf{F}_1, \dots, \mathbf{F}_n)$ has total degree in $\mathbf{u}'$ bounded by $n d_X^{n-1}$, and thus that $C$ (which means also $P$) has degree in $W$ at most $n d_X^{n-1}d_W$. \qed
\end{proof}

\subsection{Bounds for the GRUR polynomials}

Now that we have a bound on the degree and the height of $h(U ; W)$, we can deduce bounds on the degree and the height of the polynomials of the family $\mathcal{F}$. We write that $f = \tilde{O}(g)$ when $f = O(g \log_2^k(g))$ for some $k \geqslant 0$.

\begin{corollary}\label{bounds RUR}
    Suppose that $ = a_1 X_1 + \dots + a_n X_n \in \Q[X]$ is a generically separating form. Then the polynomials $h_0, h_1, h_{X_1}, \dots, h_{X_n}$ of the GRUR have a height bounded by
    $$O \big(nd_X^n \tau + nd_X^n d_W \log_2(s+1) + n^2 d_X^n \log_2(d_X) \big) = \tilde{O} \big(nd_X^n \tau + nd_X^n d_W \log_2(s+1)\big)$$
    and degree in $W$ bounded by 
    $$ \kappa := (n+1)d_X^n d_W$$
     We consider the height and the degree to be the maximum of those of the respective numerators and the denominators.
\end{corollary}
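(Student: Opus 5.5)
We derive everything from Theorem \ref{Bornes h(U)}, which writes $h(U;W)=\Delta(U;W)/P(W)$ with $\Delta\in\Z[W][U]$ and $P\in\Z[W]$. The GRUR polynomials are obtained from $h$ by differentiation and substitution: $h_0=\Delta(U_0,a_1,\dots,a_n;W)/P(W)$, $h_1=\partial_{U_0}\Delta(U_0,a;W)/P(W)$ and $h_{X_i}=\partial_{U_i}\Delta(U_0,a;W)/P(W)$. Each is therefore a quotient whose denominator $P$ is independent of $U$. If $a\in\Q^n$, we first clear the denominators of the $a_i$. One option is to scale $t$ so that $a\in\Z^n$, which does not affect separation. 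The other is to multiply numerator and denominator by a common factor $\mathrm{lcm}^{d_X^n}$, which only adds a term to the height. We then bound the numerators and the denominator separately.

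\textbf{Degree in $W$.} Differentiating in $U_0$ or $U_i$ and substituting constants for $U_1,\dots,U_n$ cannot increase $\deg_W$. Every numerator therefore has degree in $W$ at most $\deg_W\Delta\leqslant (n+1)d_X^n d_W=\kappa$. The denominator satisfies $\deg_W P\leqslant n d_X^{n-1}d_W\leqslant\kappa$. This proves the degree claim.

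\textbf{Height.} Write $\Delta=\sum_{\beta}c_\beta(W)U^\beta$ with $|\beta|\leqslant d_X^n$. We must control the loss caused by (i) differentiation and (ii) specialization $U_j\mapsto a_j$.
\begin{itemize}
\item[(i)] Differentiation multiplies each coefficient by an exponent at most $d_X^n$. This costs at most $\log_2(d_X^n)=n\log_2 d_X$ bits.
\item[(ii)] After substituting $U_j=a_j$, each coefficient of $U_0^k$ is a sum over at most $\binom{d_X^n+n}{n}\leqslant (d_X^n+1)^n$ monomials. Each summand is $c_\beta(W)$ times $\prod a_j^{\beta_j}$. This costs $\log_2$ of the number of terms, which is $O(n^2\log_2 d_X)$, plus $d_X^n\,\mathfrak{h}(a)$.
\end{itemize}
Assuming the $a_j$ are fixed small integers, the term $d_X^n\mathfrak{h}(a)$ is absorbed; otherwise it stays as an additive term. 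This is where the extra $n^2d_X^n\log_2(d_X)$ term comes from. A crude bound on the number of terms, namely $(d_X^n+1)^{n}$ for the substitution combined with $d_X^n$ from the derivative, gives a loss of order $n^2\log_2 d_X$ times at most $d_X^n$. Alternatively, we can invoke Lemma \ref{lemme hauteurs DKS} with $g=\Delta$ viewed as a polynomial in $U$, specialized at constants, where $\deg g\leqslant d_X^n$ and $r=n+1$. This gives a loss of $d_X^n(\mathfrak{h}(a)+\log_2(n+2))$. Either way we obtain $\mathfrak{h}(\Delta)+O(n^2d_X^n\log_2 d_X)$. Combining with $\mathfrak{h}(\Delta),\mathfrak{h}(P)=O(nd_X^n\tau+nd_X^nd_W\log_2(s+1))$ gives the stated bound. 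The $\tilde O$ form follows because the $n^2d_X^n\log_2 d_X$ term is logarithmic relative to $nd_X^n$ times the other factors.

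\textbf{Main obstacle.} The only delicate point is item (ii), namely bounding the height growth under specialization of $U$ cleanly in terms of $n$, $d_X$ and $\mathfrak{h}(a)$. The degree part is immediate. If needed, we would explicitly take $a\in\Z^n$ with $\mathfrak{h}(a)=O(n\log_2 d_X)$, which a generically separating form can always be chosen to satisfy since only finitely many bad directions must be avoided. This makes the substitution term $O(n d_X^n\log_2 d_X)$, which is dominated by the stated $n^2d_X^n\log_2 d_X$.
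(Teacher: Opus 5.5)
Your proof is correct and is essentially the paper's. Both write $h_0,h_1,h_{X_i}$ as ($U$-derivatives of) $\Delta(U_0,a_1,\dots,a_n;W)$ over $P(W)$, read off the $W$-degree bound from Theorem \ref{Bornes h(U)}, and bound the height by $\mathfrak{h}(\Delta)$, plus $n\log_2 d_X$ for differentiation, plus the substitution loss $d_X^n(\mathfrak{h}(a)+\log_2(n+1))$ from Lemma \ref{lemme hauteurs DKS}.

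The only real difference is where the size of $a$ is controlled, and you attribute the extra term differently:
\begin{itemize}
\item The paper takes $t$ in Rouillier's set $\mathcal{T}$, so $\tau_t=O(n\log_2 n+n^2\log_2 d_X)$, and the $n^2d_X^n\log_2(d_X)$ term comes from $d_X^n\tau_t$. You instead choose $a\in\Z^n$ of height $O(n\log_2 d_X)$ by avoiding the at most $\binom{D}{2}$ bad hyperplanes, which is valid and slightly sharper.
\item Of your two ways to handle rational $a$, only clearing denominators by a power of the common denominator works. Rescaling $t$ to make $a$ integral changes the GRUR itself, so it does not bound the height of the GRUR for the given $t$.
\end{itemize}
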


\begin{proof}
    Denote by $\tau_t$ the height of the linear form $t$. In \cite[Lemma 2.1]{Rouillier1998}, it is shown that one can find a linear form that separates $V(\mathcal{I}_W)$ in the set $\mathcal{T} = \big\{X_1 + kX_2 + \dots + k^{n-1} X_{n}, ~ 0 \leqslant k \leqslant nD(D-1)/2\big\}$. In other words, the chosen linear form will have its coefficients not bigger than $(nd_X^{2n})^{n-1}$ which means that $\tau_t \leqslant O(n\log_2(n) + n^2 \log_2(d_X))$.
    
    Let us begin with the polynomial $h_0 (U_0 ; W)$ which is obtained as $h_0 (U_0 ; W) = h(U_0, a_1, \dots, a_n ; W) = \Delta(U_0, a_1, \dots, a_n ; W)/P(W)$. Specializing the variables $U$ does not affect the denominator $P$. It does not change the bound on the degree in $W$ of $\Delta$ either. For the height, we can use once more lemma \ref{lemme hauteurs DKS} to get 
 \begin{eqnarray*}
 \mathfrak{h}(\Delta(U_0, a_1, \dots, a_n ; W)) & \leqslant & \mathfrak{h}(\Delta) + \deg_U(\Delta)(\tau_t + \log_2(n+1)) \\
 & \leqslant & \mathfrak{h}(\Delta) + d_X^n \tau_t + d_X^n \log_2(n+1)
 \end{eqnarray*}
    and by keeping the dominant terms, 
    \begin{equation*}
        \mathfrak{h}(\Delta(U_0, a_1, \dots, a_n ; W)) \leqslant O(nd_X^n \tau + nd_X^n d_W \log_2(s+1) + n^2 d_X^n \log_2(d_X)). 
    \end{equation*}

    Then the polynomials $h_1, h_{X_1}, \dots, h_{X_n}$ are obtained as 
    $$h_1 (U_0 ; W) = \frac{\partial h}{\partial U_1} (U_0, a_1, \dots, a_n ; W) = \frac{\frac{\partial \Delta}{\partial U_1} (U_0, a_1, \dots, a_n ; W)}{P(W)}$$
    and
    $$ h_{X_k} (U_0 ; W) = \frac{\partial h}{\partial U_k} (U_0, a_1, \dots, a_n ; W) = \frac{\frac{\partial \Delta}{\partial U_k} (U_0, a_1, \dots, a_n ; W)}{P(W)}, ~ 1 \leqslant k \leqslant n.$$
    
    Each $\partial \Delta/\partial U_i$ has height bounded by $\mathfrak{h}(\Delta) + n\log_2(d_X)$ and using once more lemma \ref{lemme hauteurs DKS} we get
    \begin{eqnarray*}
        \mathfrak{h} \left(\frac{\partial \Delta}{\partial U_i} (U_0, a_1, \dots, a_n ; W) \right) & \leqslant & \mathfrak{h}(\Delta) + n\log_2(d_X) + d_X^n \tau_t + d_X^n \log_2(n+1) \\
        & \leqslant & O(nd_X^n \tau + nd_X^n d_W \log_2(s+1) + n^2 d_X^n \log_2(d_X)).
    \end{eqnarray*}
Since the height of $P$ is not bigger than that, the bounds on the height of $h_1, h_{X_1}, \dots, h_{X_n}$ are proved.
\end{proof}

We present in the two following sections two Las-Vegas algorithm to compute the polynomials of $\mathcal{F}$. The first one relies on linear algebra and the second one is based on a evaluation/interpolation scheme.

\section{First algorithm : linear algebra approach}

    In this subsection we will simultaneously work with the generic quotient $\mathcal{A}_W := \C(W)[X]/\mathcal{I}_W$ and the specialized quotient for values  $w \in \C^s$ that we denote by $\mathcal{A}(w) :=  \C[X]/\mathcal{I}(w)$.
    Every object related to one of these quotients will have an exponent $\mathcal{A}_W$ or $\mathcal{A}(w)$ (or an index) to make the distinction. This distinction will be particularly useful to show specialization properties. For any $v \in \C(W)[X]$, we will denote by respectively $\Trace^{\mathcal{A}_W}(v)$ and $\chi^{\mathcal{A}_W}_v$ the trace and characteristic polynomial of 
    $$\fonc{m^{\mathcal{A}_W}_v}{\mathcal{A}_W}{\mathcal{A}_W.}{[f]_{\mathcal{A}_W}}{[fv]_{\mathcal{A}_W}}$$
    We choose analogous notations for $v \in \C[X]$ by replacing $\mathcal{A}_W$ by $\mathcal{A}(w)$ in the previous notations when we specialize to $w \in \C^s\backslash \mathcal{D}$.

\subsection{Pre-computations}

This approach to the computation of the polynomials in the family $\mathcal{F}$ takes some pre-requisites that we mention now. 

$\cdot$ \underline{Gröbner basis :} we denote by $\mathcal{G}$ a reduced Gröbner basis of $\mathcal{I} \subseteq \C[W,X]$ for the block ordering $<_{W,X}$. 
We recall that $\mathcal{G}$ when seen as a subset of $\C(W)[X]$ can also be interpreted as a Gröbner basis of $\mathcal{I}_W$ for the ordering $<_X$. In what follows, we consider that $\mathcal{G}$ has been reduced.

$\cdot$ \underline{Basis of the quotient :} a basis $\mathcal{B}_{\mathcal{A}_W} := \{[p_1]_{\mathcal{A}_W}, \dots, [p_D]_{\mathcal{A}_W}\}$ of $\mathcal{A}_W$ can be computed from $\mathcal{G}$ by keeping all the monomials not reducible by $\mathcal{G}$. 

$\cdot$ \underline{Multiplication matrices :} we also need the multiplication matrices by each variables $M^{\mathcal{A}_W}_{X_1}, \dots, M^{\mathcal{A}_W}_{X_n}$ in the quotient $\mathcal{A}_W$. This can be done by following FGLM's approach as in \cite[Proposition 2.1]{FGLM1993}. 

$\cdot$ \underline{Multiplicative tensor :} to avoid repeating the same computations over and over, one can compute the multiplicative tensor $$\MT (\mathcal{A}_W) := \{\overrightarrow{[p_ip_j]}_{\mathcal{A}_W}, ~ 1 \leqslant i, j \leqslant D\}$$ where $\overrightarrow{[\cdot]}_{\mathcal{A}_W}$ are the coordinates with regard to the basis $\mathcal{B}_{\mathcal{A}_W}$.

\subsection{Computation of $\mathcal{F}$}

The formulas that we present in this subsection are directly adapted from Rouillier's work in \cite[§4]{Rouillier1998}. 

\subsubsection{Computation of $h_0$ and $h_1$}

We recall that $h_0 = \chi_t^{\mathcal{A}_W}$. That being said, Newton's formulas translate really nicely in terms of traces as shown in the following lemma.

\begin{lemma}\label{relations Newton}
    Let us write $\chi_t^{\mathcal{A}_W} = \sum_{i=0}^D b_i U_0^{D-i}$ with $b_0 = 1$. Then the coefficients of $\chi_t^{\mathcal{A}_W}$ satisfy the following equations :
    \begin{equation}\label{identites Newton}
        (D-k)b_k = \sum_{i=0}^k b_{k-i} \Trace^{\mathcal{A}_W}(t^i), ~~~~0 \leqslant k \leqslant D .
    \end{equation}
\end{lemma}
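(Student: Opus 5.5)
The plan is to derive the identity from the logarithmic derivative of $\chi_t^{\mathcal{A}_W}$, written as a Laurent series in $U_0^{-1}$. This avoids appealing to the classical Newton identities separately. Throughout we work over an algebraic closure $\overline{\C(W)}$, which is allowed because both the characteristic polynomial and the traces are unchanged by field extension.

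\emph{First step: power sums.} Since $\mathcal{A}_W$ is a $D$-dimensional $\C(W)$-vector space, the endomorphism $m_t^{\mathcal{A}_W}$ can be triangularized over $\overline{\C(W)}$. Call its eigenvalues, listed with algebraic multiplicity, $\lambda_1,\dots,\lambda_D$. Then $\chi_t^{\mathcal{A}_W}=\prod_{j=1}^D (U_0-\lambda_j)$. Because $[f]\mapsto[ft]$ is multiplicative, we have $m_{t^i}^{\mathcal{A}_W}=(m_t^{\mathcal{A}_W})^i$. This matrix is triangular in the same basis, with diagonal entries $\lambda_j^i$. Hence
$$p_i:=\Trace^{\mathcal{A}_W}(t^i)=\sum_{j=1}^D\lambda_j^i,$$
and in particular $p_0=\Trace^{\mathcal{A}_W}(1)=D$.

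\emph{Second step: the generating identity.} In the field $\overline{\C(W)}((U_0^{-1}))$ we write
$$\frac{\chi'}{\chi}=\sum_{j=1}^D\frac{1}{U_0-\lambda_j}=\sum_{j=1}^D\sum_{i\geq 0}\lambda_j^iU_0^{-i-1}=\sum_{i\geq 0}p_iU_0^{-i-1},$$
where $\chi=\chi_t^{\mathcal{A}_W}$ and $\chi'$ is its derivative in $U_0$. Multiplying by $\chi=\sum_{l=0}^D b_lU_0^{D-l}$ gives
$$\sum_{k=0}^{D}(D-k)b_kU_0^{D-k-1}=\Big(\sum_{l=0}^D b_lU_0^{D-l}\Big)\Big(\sum_{i\geq 0}p_iU_0^{-i-1}\Big).$$

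\emph{Third step: compare coefficients.} We read off the coefficient of $U_0^{D-k-1}$ for $0\leqslant k\leqslant D$. On the left it is $(D-k)b_k$. On the right it is $\sum_{i=0}^k b_{k-i}p_i$, since the index pairs are exactly $l+i=k$ with $l,i\geq 0$. This is precisely \eqref{identites Newton}.

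The case $k=D$ needs a moment's care. There the left-hand side is $0$, because $\chi'$ has no $U_0^{-1}$ term. The right-hand side is also $0$, since $\chi\cdot\chi'/\chi=\chi'$ is a polynomial. So the identity still holds.

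The only point needing real care is the first step. We must justify that the trace of $m_{t^i}$ equals the $i$-th power sum of the roots of $\chi_t^{\mathcal{A}_W}$, counted with the multiplicities that define $\chi_t^{\mathcal{A}_W}$. This follows from triangularization together with $m_{t^i}=m_t^i$, as above. Alternatively, it follows from Stickelberger's theorem already used in Proposition \ref{facto u-res}, which gives $\lambda_j=t(x)$ with multiplicity $\mu(x)$. The rest is formal manipulation of series.
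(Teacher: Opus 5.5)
Your proof is correct. The paper states this lemma without proof, as the classical Newton identities. Your argument is nonetheless the same generating-series computation the paper carries out one step later for Lemma~\ref{formules h_Xi}. There, $h_{X_j}$ is written as $\chi_t^{\mathcal{A}_W}$ times a Laurent series in $U_0^{-1}$ whose coefficients are traces, and coefficients are compared. Your identity is what that computation gives when $X_j$ is replaced by $1$, so that the left-hand side becomes $h_1=(\chi_t^{\mathcal{A}_W})'$.

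The one real difference is how the power sums are identified. The paper's route goes through Stickelberger's theorem (Theorem~\ref{Stickelberger}), writing the roots of $\chi_t^{\mathcal{A}_W}$ as $t(x)$, $x\in V(\mathcal{I}_W)$, with multiplicity $\mu(x)$. You instead triangularize $m_t^{\mathcal{A}_W}$ over $\overline{\C(W)}$ and use $m_{t^i}=(m_t^{\mathcal{A}_W})^i$. Your version is more elementary. It shows the lemma is a pure linear-algebra fact about any endomorphism of a $D$-dimensional space, independent of the geometry of $V(\mathcal{I}_W)$. Stickelberger's theorem is only genuinely needed for the companion formula, where $\Trace^{\mathcal{A}_W}(X_jt^i)$ involves $m_{X_j}$ and is not determined by the eigenvalues of $m_t^{\mathcal{A}_W}$ alone.

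Two cosmetic points. What you use is that $v\mapsto m_v^{\mathcal{A}_W}$ is a ring homomorphism; the map $[f]\mapsto[ft]$ itself is only linear, not multiplicative. Also, your notation $p_i$ for the power sums clashes with the paper's basis elements $p_1,\dots,p_D$.
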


The equations given by \eqref{identites Newton} form a triangular linear system, which is convenient to solve. In addition to that, the system only depends on the traces $\Trace^{\mathcal{A}_W} (t^i)$.

Once $\chi_t^{\mathcal{A}_W} = h_0$ is obtained, we directly get $h_1$ by taking its derivative.

\subsubsection{Computation of $h_{X_1}, \dots, h_{X_n}$}

We can express the polynomials $h_{X_1}, \dots, h_{X_n}$ using computable closed formulas depending only on $t$ and $h_0$. They are similar to \cite[Theorem 3.1]{Rouillier1998} formulas when we decide to keep the fractions of the GRUR unreduced (see Lemma \ref{egal fractions}).

\begin{lemma}\label{formules h_Xi}
    Sticking with the notation $h_0 = \chi_t^{\mathcal{A}_W} = \sum_{i=0}^D b_i T^{D-i}$, we get that for each $j \in [\![1,n]\!]$, 
    \begin{equation}
        h_{X_j} (U_0 ; W) = \sum_{i=0}^{D-1} \Trace^{\mathcal{A}_W} (X_j t^i) H_{D-i-1} (U_0 ; W)
    \end{equation}
    where $H_{i} (U_0 ; W) := \sum_{j=0}^i b_j U_0^{i-j}$ is the $i$-th Hörner polynomial of $h_0$.
\end{lemma}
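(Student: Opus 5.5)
We have to prove that $h_{X_j}(U_0;W)=\sum_{i=0}^{D-1}\Trace^{\mathcal{A}_W}(X_j t^i)\,H_{D-i-1}(U_0;W)$. We work over $\overline{\C(W)}$ and use Stickelberger's theorem in the form already invoked in the paper: for any $v\in\C(W)[X]$ the eigenvalues of $m_v^{\mathcal{A}_W}$ are the values $v(x)$, $x\in V(\mathcal{I}_W)$, each counted $\mu(x)$ times. Consequently
$$\Trace^{\mathcal{A}_W}(X_j t^i)=\sum_{x\in V(\mathcal{I}_W)}\mu(x)\,x_j\,t(x)^i .$$
The whole proof is a comparison of two explicit expressions in terms of the points $x$ and their multiplicities.

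First we compute the left-hand side. Since $h=\prod_x(U_0-x_1U_1-\dots-x_nU_n)^{\mu(x)}$, logarithmic differentiation gives
$$\frac{\partial h}{\partial U_j}=-\sum_x \mu(x)\,x_j\,\frac{h}{U_0-\sum_k x_kU_k}.$$
Substituting $U_k=a_k$ turns this into
$$h_{X_j}=-\sum_x\mu(x)\,x_j\,\frac{h_0(U_0)}{U_0-t(x)}.$$
This is a sign-sensitive point. With the convention $f_0=U_0-\sum U_kX_k$ one gets a minus sign. Rouillier's convention, and the statement, correspond to $\partial h/\partial U_j$ for the form $U_0+\sum U_kX_k$, equivalently to $h_{X_j}=\sum_x\mu(x)x_j\,h_0/(U_0-t(x))$. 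I would take the latter as the intended normalization, which keeps $h_{X_j}/h_1$ evaluating to $x_j$ at $U_0=t(x)$. I would either remark on this or absorb the sign into the definition.

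Next we expand each quotient. For a root $\theta=t(x)$ of $h_0=\sum_{i=0}^D b_iU_0^{D-i}$, exact division gives the classical Horner identity
$$\frac{h_0(U_0)}{U_0-\theta}=\sum_{i=0}^{D-1}\theta^{i}\,H_{D-1-i}(U_0).$$
To prove it, write $h_0(U_0)-h_0(\theta)=\sum_k b_k\bigl(U_0^{D-k}-\theta^{D-k}\bigr)$, expand each difference as $(U_0-\theta)\sum_{i}\theta^iU_0^{D-k-1-i}$, and regroup by powers of $\theta$. The coefficient of $\theta^i$ is exactly $\sum_{k\le D-1-i}b_kU_0^{D-1-i-k}=H_{D-1-i}$.

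Finally we combine. Plugging the Horner identity into the expression for $h_{X_j}$ and exchanging the two finite sums yields
$$\sum_{i=0}^{D-1}\Bigl(\sum_x\mu(x)x_j t(x)^i\Bigr)H_{D-1-i}(U_0)=\sum_{i=0}^{D-1}\Trace^{\mathcal{A}_W}(X_jt^i)\,H_{D-i-1}(U_0;W),$$
which is the claimed formula. Multiplicities are harmless throughout: we never divide by a repeated factor, because each term uses the full $h_0$ divided by a single linear factor, so the argument does not need $t$ to be separating. The Horner regrouping is mechanical. The only real obstacle is keeping the sign and normalization conventions for $h$ consistent with the statement.
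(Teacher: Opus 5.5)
Your proof is correct. Your sign remark is a genuine correction, not just a convention worry. With $h=\prod_x(U_0-\sum_k x_kU_k)^{\mu(x)}$ and $h_{X_j}=\frac{\partial h}{\partial U_j}(U_0,a_1,\dots,a_n)$, one really gets $h_{X_j}=-\sum_x\mu(x)x_j\,h_0/(U_0-t(x))$. So the identity as literally stated is off by a global factor $-1$: already for $\mathcal{I}=\langle X_1-c\rangle$ and $t=X_1$, the left side is $-c$ and the right side is $c$.

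The paper's proof drops this minus sign without comment in its first displayed line. Your repair is the right one: define $h_{X_j}:=-\frac{\partial h}{\partial U_j}(U_0,a_1,\dots,a_n)$, i.e.\ $h_{X_j}=\sum_x\mu(x)x_jh_0/(U_0-t(x))$, as in Rouillier. One small correction to your side remark: switching to the form $U_0+\sum_kU_kX_k$ gives this normalization only if you also substitute $U_k=-a_k$. With $U_k=a_k$ the roots of $h_0$ move to $-t(x)$. Putting the sign directly into the definition of $h_{X_j}$ is cleaner.

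Your route differs from the paper's in the middle step. Both arguments start from $h_{X_j}=h_0\sum_x\mu(x)x_j/(U_0-t(x))$ and both finish with Stickelberger's trace formula.

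\begin{itemize}
\item The paper expands each $1/(U_0-t(x))$ as a geometric series in $1/U_0$. This gives the generating series $h_{X_j}/h_0=\sum_{i\ge 0}\Trace^{\mathcal{A}_W}(X_jt^i)\,U_0^{-i-1}$. It then multiplies by $h_0=\sum_k b_kU_0^{D-k}$ and discards all negative-exponent terms, on the grounds that the product is a polynomial.
\item You use the exact identity $\frac{h_0(U_0)-h_0(\theta)}{U_0-\theta}=\sum_{i=0}^{D-1}\theta^iH_{D-1-i}(U_0)$ together with $h_0(\theta)=0$. This is precisely what the paper's truncation computes term by term: the discarded tail for each root is $h_0(\theta)/(U_0-\theta)=0$.
\end{itemize}

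Your version stays inside polynomial rings and makes the vanishing of that tail explicit, where the paper only infers it. The paper's version has the merit of displaying the trace generating series, which is the same mechanism behind the Newton identities used to compute $h_0$.
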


\begin{proof}
    Let us recall that $\chi_t^{\mathcal{A}_W} (U_0) = \prod_{x \in V(\mathcal{I}_W)} (U_0 - t(x))^{\mu(x)}$. We can then rewrite $h_{X_i}$ as
    \begin{eqnarray*}
        h_{X_i} (U_0 ; W) & = & \sum_{x \in V(\mathcal{I}_W)} \mu(x) x_i (U_0 - t(x))^{\mu(x) -1} \prod_{y \in V(\mathcal{I}_W) \atop y \neq x} (U_0 - t(y))^{\mu(y)} \\
        & = & \sum_{x \in V(\mathcal{I}_W)} \frac{\mu(x) x_i \prod_{y \in V(\mathcal{I}_W)}(U_0-t(y))^{\mu(y)}}{U_0 - t(x)} \\
        & = & \chi_t^{\mathcal{A}_W} (U_0) \sum_{x \in V(\mathcal{I}_W)} \frac{\mu(x) x_i}{U_0 - t(x)} \\
        & = & \chi_t^{\mathcal{A}_W} (U_0) \sum_{x \in V(\mathcal{I}_W)} \sum_{i=0}^{+\infty}  \frac{\mu(x) x_i t^i (x)}{U_0^{i+1}} = \chi_t^{\mathcal{A}_W} (U_0) \sum_{i=0}^{+\infty} \frac{\Trace^{\mathcal{A}_W}(X_i t^i)}{U_0^{i+1}}.
    \end{eqnarray*}
The last equality is obtained by permuting the sommation symbols and using the expression of $\Trace(X_i t^i)$ given by Stickelberger's theorem. We can then use the expanded expression of $\chi_t^{\mathcal{A}_W}$ to get that 

\begin{equation}\label{double somme}
    h_{X_i} (U_0 ; W) = \sum_{i=0}^{+\infty} \sum_{j=0}^D b_j \Trace^{\mathcal{A}_W}(X_i t^i) U_0^{D-j-i-1}.
\end{equation}

The left part of \eqref{double somme} is a polynomial so there must be bounds on the values of $i$ and $j$ in the above sums. More precisely, $D - i- j- 1$ must be non-negative. That means $i$ must be less or equal than $D-1$, and for each value of $i$, $j$ must be less or equal than $D-i-1$. Then \eqref{double somme} becomes
\begin{eqnarray*}
h_{X_i} (U_0 ; W) & = & \sum_{i=0}^{D-1} \Trace^{\mathcal{A}_W}(X_i t^i) \sum_{j=0}^{D-i-1} b_j U_0^{D-j-i-1} \\
			   & = & \sum_{i=0}^{D-1} \Trace^{\mathcal{A}_W}(X_i t^i) H_{D-i-1} (U_0 ; W)
\end{eqnarray*}
which is the announced formula. \qed
\end{proof}

\subsection{Specialization of the formulas}

We now justify the good specialization of the formulas given in Lemma \ref{formules h_Xi}. The main dependency comes from traces so we need conditions for them to specialize well. 

\subsubsection{Specialization of the Gröbner basis}

Let us consider $w \in \C^s$ and $\mathcal{G}(w) := \{\phi_w(g), ~ g \in \mathcal{G}\}$. We recall that $\mathcal{W}_{\mathcal{G}} = \cup_{g \in \mathcal{G}} V(\LC_{<X}(g))$. The following lemma presents a specialization property of the parametric Gröbner basis. In \cite[Theorem 2.6]{NY1999}, a similar result is presented for modular Gröbner bases.

\begin{lemma}\label{spe BG}
    For all $w \in \C^s\backslash \mathcal{W}_{\mathcal{G}}$, the family $\mathcal{G}(w)$ is a Gröbner basis of the specialized ideal $\mathcal{I}(w)$ for the ordering $\grevlex(X)$.
\end{lemma}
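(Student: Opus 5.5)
We prove Lemma \ref{spe BG} with Buchberger's criterion, carried out in $\C[W,X]$ and then specialized. Throughout, fix $w \in \C^s\backslash \mathcal{W}_{\mathcal{G}}$. Since $w \notin \mathcal{W}_{\mathcal{G}}$, for every $g \in \mathcal{G}$ the specialization $\phi_w(\LC_{<X}(g))$ is nonzero. Hence $\LM_{<X}(\phi_w(g)) = \LM_{<X}(g)$ and $\LC_{<X}(\phi_w(g)) = \phi_w(\LC_{<X}(g))$.

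\emph{Step 1: $\mathcal{G}(w)$ generates $\mathcal{I}(w)$.} Each $f_i$ is a $\C[W,X]$-combination of elements of $\mathcal{G}$, and each $g \in \mathcal{G}$ is a $\C[W,X]$-combination of the $f_i$. Applying the ring morphism $\phi_w$ to these relations gives $\langle \mathcal{G}(w)\rangle = \mathcal{I}(w)$.

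\emph{Step 2: set up the S-pairs in $\C[W,X]$.} For $g, g' \in \mathcal{G}$, write $c = \LC_{<X}(g)$ and $c' = \LC_{<X}(g')$, both in $\C[W]$, and let $m = \mathrm{lcm}(\LM_{<X}(g), \LM_{<X}(g'))$. Consider the polynomial S-pair
$$S := c'\,\frac{m}{\LM_{<X}(g)}\, g - c\,\frac{m}{\LM_{<X}(g')}\, g' \in \mathcal{I},$$
in which all monomials in $X$ are strictly smaller than $m$. Its specialization is $\phi_w(S) = \phi_w(c)\phi_w(c')\, S(\phi_w(g), \phi_w(g'))$, where $S(\cdot,\cdot)$ is the usual S-polynomial over $\C$. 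Since $\phi_w(c)\phi_w(c') \neq 0$, it suffices to show that $\phi_w(S)$ reduces to $0$ modulo $\mathcal{G}(w)$.

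\emph{Step 3: a standard representation in $\C[W,X]$.} Because $\mathcal{G}$ is a Gröbner basis of $\mathcal{I}$ for the block order $<_{W,X}$, and $S \in \mathcal{I}$, the division of $S$ by $\mathcal{G}$ gives $S = \sum_{g'' \in \mathcal{G}} q_{g''}\, g''$ with each $\LM_{<_{W,X}}(q_{g''} g'') \leqslant \LM_{<_{W,X}}(S)$. The $X$-part is compared first in the block order, so this inequality implies $\LM_{<X}(q_{g''}) \cdot \LM_{<X}(g'') \leqslant_X \LM_{<X}(S) <_X m$. Thus the $X$-degrees of all terms stay strictly below $m$.

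\emph{Step 4: specialize and conclude.} Applying $\phi_w$ gives $\phi_w(S) = \sum \phi_w(q_{g''})\, \phi_w(g'')$. The leading monomials in $X$ can only decrease under specialization, so every term satisfies $\LM_{<X}(\phi_w(q_{g''}) \phi_w(g'')) <_X m$. This is an lcm-representation (a standard representation below $m$) of the S-polynomial. By the refined Buchberger criterion \cite[§2.9 Theorem 6]{CLO2005}, $\mathcal{G}(w)$ is therefore a Gröbner basis of $\langle\mathcal{G}(w)\rangle = \mathcal{I}(w)$.

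The main obstacle is Step 3. We need a representation of $S$ whose $X$-leading monomials are controlled, not only its $<_{W,X}$ leading monomials. The block order gives this because it compares $X$-parts first, but the argument requires care when $\LM_{<X}(S)$ has a coefficient vanishing at $w$. For that case it is enough that the criterion of \cite{CLO2005} only needs the strict bound $<_X m$ on every term, not equality with the leading monomial of the specialized S-polynomial.
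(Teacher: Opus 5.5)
Your proof is correct. It shares the paper's skeleton: apply Buchberger's criterion to $\mathcal{G}(w)$ by specializing a representation of the generic S-polynomials. The mechanics, however, are different.

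\emph{The paper's route.} It works in $\C(W)[X]$. It reduces each S-polynomial of $\mathcal{G}$ to zero modulo $\mathcal{G}$ for $<_X$, notes that the only denominators created are products of the $\LC_{<X}(g)$, and specializes the whole reduction chain. A reduction step whose current leading coefficient vanishes at $w$ simply becomes trivial. So the specialized chain is still a reduction to zero, and the basic criterion applies.

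\emph{Your route.} You stay in $\C[W,X]$, using the cross-multiplied S-pair and the division algorithm for the block order. Thus $\phi_w$ is applied to a polynomial identity, and no denominator bookkeeping is needed. Because the block order compares $X$-parts first, the bound transfers to $X$-leading monomials. You then absorb the possible drop of $\LM_{<X}(\phi_w(S))$ under specialization by invoking the lcm-representation form of Buchberger's criterion instead of reduction to zero.

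\emph{What each buys.} Your version also proves explicitly that $\mathcal{G}(w)$ generates $\mathcal{I}(w)$. The paper leaves this unstated, and it genuinely uses that $\mathcal{G}$ generates $\mathcal{I}$ in $\C[W,X]$, not merely $\mathcal{I}_W$. The paper's version is shorter, works with the same $\C(W)[X]$ Gröbner basis used throughout Section 4, and needs only the plain criterion.
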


\begin{proof}
    All the reduction sequences of the $S$-polynomials of elements of $\mathcal{G}$ commute with the specialization morphism $\phi_w$ because the only denominators that appear are products of leading terms for $<_X$ of elements of $\mathcal{G}$ and by hypothesis they do not vanish.
    
    In addition to that, every $S$-polynomial reduces to zero because $\mathcal{G}$ is a gröbner basis. This remains true for the $S$ polynomials of elements of $\mathcal{G}(w)$ which are the specialization of $S$-polynomials of elements of $\mathcal{G}$.
    The conclusion is given by Buchberger's criterion. \qed
\end{proof}

This result has an important consequence on the dimension of the quotients $\mathcal{A}(w)$.

\begin{corollary}\label{common basis}
    For all $w \in \C^s \backslash (\mathcal{D} \cup \mathcal{W}_\mathcal{G})$, the quotients $\mathcal{A}_W$ and $\mathcal{A}(w)$ have the same dimension $D$ as vector spaces. 

    Moreover, if $\mathcal{B}_{\mathcal{A}_W} = \{[p_1]_{{\mathcal{A}_W}}, \dots, [p_D]_{\mathcal{A}_W}\}$ is a $\C(W)$-basis of $\mathcal{A}_W$ then $\mathcal{B}(w) := \{[p_1]_{\mathcal{A}(w)},\dots, [p_D]_{\mathcal{A}(w)}\}$ is a $\C$-basis of $\mathcal{A}(w)$.
\end{corollary}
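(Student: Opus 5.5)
The plan is to reduce everything to the normal form with respect to $\mathcal{G}$ and then use Lemma \ref{spe BG}. Fix $w \in \C^s \backslash (\mathcal{D} \cup \mathcal{W}_\mathcal{G})$. By Lemma \ref{spe BG}, $\mathcal{G}(w)$ is a Gröbner basis of $\mathcal{I}(w)$ for $<_X$. Since $w \notin \mathcal{W}_\mathcal{G}$, no leading coefficient $\LC_{<X}(g)$ vanishes at $w$. Hence $\LM_{<X}(\phi_w(g)) = \LM_{<X}(g)$ for every $g \in \mathcal{G}$, and the leading monomial ideals of $\mathcal{I}_W$ and $\mathcal{I}(w)$ coincide. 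The standard monomials $m_1, \dots, m_D$, that is, those not divisible by any $\LM_{<X}(g)$, are therefore the same for both ideals. By Macaulay's theorem they give a $\C(W)$-basis of $\mathcal{A}_W$ and a $\C$-basis of $\mathcal{A}(w)$. Since $w \notin \mathcal{D}$, $\mathcal{A}(w)$ is finite dimensional, and both dimensions equal $D$. This proves the first claim.

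For the second claim, let $\{[p_1]_{\mathcal{A}_W}, \dots, [p_D]_{\mathcal{A}_W}\}$ be a $\C(W)$-basis. The $p_i$ must make sense after applying $\phi_w$; as in the pre-computations, their coefficients should not have denominators vanishing at $w$. Write the normal forms $\NF_\mathcal{G}(p_i) = \sum_j M_{ij} m_j$, so $M \in GL_D(\C(W))$ is the change-of-basis matrix to the standard monomials. The reduction steps computing $\NF_\mathcal{G}(p_i)$ only divide by the $\LC_{<X}(g)$, which do not vanish at $w$. So, exactly as in the proof of Lemma \ref{spe BG}, the reduction commutes with $\phi_w$, giving $\NF_{\mathcal{G}(w)}(\phi_w(p_i)) = \sum_j \phi_w(M_{ij}) m_j$. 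Thus the coordinates of $[p_i]_{\mathcal{A}(w)}$ in the basis $\{[m_j]_{\mathcal{A}(w)}\}$ form the matrix $\phi_w(M)$, and $\mathcal{B}(w)$ is a basis if and only if $\det \phi_w(M) = \phi_w(\det M) \neq 0$.

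When $\mathcal{B}_{\mathcal{A}_W}$ is the basis built in the pre-computations (the classes of the standard monomials), $M$ is the identity, and the claim follows for every $w$ outside $\mathcal{D} \cup \mathcal{W}_\mathcal{G}$.

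The main obstacle is an arbitrary basis. Here $\det M$ is a nonzero element of $\C(W)$, so $\phi_w(\det M) \neq 0$ holds for almost all $w$, but not automatically on all of $\C^s \backslash (\mathcal{D} \cup \mathcal{W}_\mathcal{G})$. For instance, with $\mathcal{I} = \langle X^2 - W \rangle$ and the basis $\{X, X^2\}$, we get $\det M = -W$, which vanishes at $w = 0$, where the specialized classes are $\{X, 0\}$. So the statement must be read with $\mathcal{B}_{\mathcal{A}_W}$ the standard-monomial basis of the pre-computations, where the argument above proves it exactly. For a general basis, I would prove the same conclusion after also excluding the zero set of the numerator of $\det M$ and the poles of the $p_i$.
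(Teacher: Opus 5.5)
Your argument matches the paper's: since $w\notin\mathcal{W}_\mathcal{G}$, the elements of $\mathcal{G}$ and $\mathcal{G}(w)$ have the same leading monomials, so the standard monomials coincide and give bases of both quotients of the same size $D$. Your caveat about the ``Moreover'' clause is also correct. The paper's one-line proof tacitly takes $\mathcal{B}_{\mathcal{A}_W}$ to be the standard-monomial basis from the pre-computations, as its later use in Corollary~\ref{spe coordinates} confirms. Your example $\langle X^2-W\rangle$ with the basis $\{X,X^2\}$ at $w=0$, where $\mathcal{D}\cup\mathcal{W}_\mathcal{G}=\emptyset$, genuinely refutes the statement as literally written for an arbitrary basis.
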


\begin{proof}
    This comes from the fact that with our hypothesis, the elements of $\mathcal{G}$ and $\mathcal{G}(w)$ have the same leading monomials, which means that the non reducible monomials are the same for both Gröbner bases. \qed
\end{proof}

\subsubsection{Specialization of the normal forms}

Denote by $\NF_\mathcal{G} (f)$ the normal form of an element $f$ of $\C(W)[X]$ with regard to $\mathcal{G}$ and by $\NF_{\mathcal{G}(w)} (g)$ the normal form of an element $g$ of $\C[X]$ with regard to $\mathcal{G}(w)$. The key point for the specialization properties is the following proposition.

\begin{proposition}\label{specialization normal forms}
    Let $w \in \C^s \backslash (\mathcal{D} \cup \mathcal{W}_\mathcal{G})$ and $v \in \C(W)[X]$ with no denominator vanishing at $w$. Then we have 
    $$\NF_{\mathcal{G}(w)} \left( \phi_w (v)\right) = \phi_w \left( \NF_{\mathcal{G}} (v)\right).$$
\end{proposition}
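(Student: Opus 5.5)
The plan is to follow the reduction process of $v$ by $\mathcal{G}$ step by step and check that applying $\phi_w$ to each step gives a valid reduction step of $\phi_w(v)$ by $\mathcal{G}(w)$. Two facts make this work. First, the normal form with respect to a Gröbner basis does not depend on the order of the reductions. Second, $\phi_w$ is a ring morphism on the subring $R_w \subseteq \C(W)$ of fractions whose denominators do not vanish at $w$.

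Write $\NF_{\mathcal{G}}(v)$ as the result of a finite reduction sequence $v = r_0 \to r_1 \to \dots \to r_m = \NF_{\mathcal{G}}(v)$. Each step has the form $r_{k+1} = r_k - \frac{c}{\LC_{<X}(g)} \mu \, g$, where $g \in \mathcal{G}$, $\mu$ is a monomial in $X$, and $c$ is the coefficient in $r_k$ of the monomial $\mu \LM_{<X}(g)$ being eliminated. I will prove by induction on $k$ that:
\begin{itemize}
\item all coefficients of $r_k$ lie in $R_w$;
\item $\phi_w(r_k) \to \phi_w(r_{k+1})$ is a legitimate reduction step (or a trivial one) by $\phi_w(g) \in \mathcal{G}(w)$.
\end{itemize}
The base case holds by the hypothesis on $v$. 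The elements of $\mathcal{G}$ lie in $\C[W,X]$, so their coefficients are polynomials in $W$. Since $w \notin \mathcal{W}_{\mathcal{G}}$, we have $\LC_{<X}(g)(w) \neq 0$, so $1/\LC_{<X}(g) \in R_w$. Hence $r_{k+1}$ still has coefficients in $R_w$. Applying the ring morphism $\phi_w$ gives
$$\phi_w(r_{k+1}) = \phi_w(r_k) - \frac{\phi_w(c)}{\LC_{<X}(\phi_w(g))}\,\mu\, \phi_w(g),$$
because $\LM_{<X}(\phi_w(g)) = \LM_{<X}(g)$ and $\LC_{<X}(\phi_w(g)) = \phi_w(\LC_{<X}(g))$. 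This is a reduction step of $\phi_w(r_k)$ modulo $\mathcal{G}(w)$. It eliminates the monomial $\mu\LM(g)$ if $\phi_w(c)\neq 0$; if $\phi_w(c)=0$, it does nothing.

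At the end, $r_m$ has no monomial in $\langle \LM_{<X}(g) : g\in\mathcal{G}\rangle$. Specialization cannot create new monomials, so $\phi_w(r_m)$ has none either. By the leading-monomial equality above, this is the same as saying that no monomial of $\phi_w(r_m)$ lies in the initial ideal generated by the leading monomials of $\mathcal{G}(w)$. Thus $\phi_w(r_m)$ is reduced with respect to $\mathcal{G}(w)$. It also satisfies $\phi_w(r_m) - \phi_w(v) \in \langle \mathcal{G}(w)\rangle = \mathcal{I}(w)$: the difference $r_m - v$ is an $R_w[X]$-combination of elements of $\mathcal{G}$, and $\phi_w$ maps it to the corresponding combination of elements of $\mathcal{G}(w)$.

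By Lemma \ref{spe BG}, $\mathcal{G}(w)$ is a Gröbner basis of $\mathcal{I}(w)$ for $<_X$. The normal form is therefore the unique element of $\phi_w(v) + \mathcal{I}(w)$ supported on standard monomials, which gives $\NF_{\mathcal{G}(w)}(\phi_w(v)) = \phi_w(r_m)$.

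The main obstacle is the bookkeeping of denominators. We must ensure that no denominator vanishing at $w$ appears during the reductions. This is settled by combining the assumption $w\notin\mathcal{W}_{\mathcal{G}}$ with the fact that $\mathcal{G}\subseteq \C[W,X]$. A secondary subtlety is that specialization may kill coefficients, so a reduction step may become trivial. I handle this by arguing through the characterization of the normal form (the unique element of the coset supported on standard monomials) rather than by matching the two reduction sequences exactly. The hypothesis $w\notin\mathcal{D}$ is only needed so that, by Corollary \ref{common basis}, the standard monomials form a basis of the finite-dimensional $\mathcal{A}(w)$.
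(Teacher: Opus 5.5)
Your proposal is correct and follows essentially the paper's argument. You check that the only new denominators arising when $v$ is divided by $\mathcal{G}$ are leading coefficients, which do not vanish at $w$; you specialize the resulting relation $v - \NF_{\mathcal{G}}(v) \in \sum_{g} R_w[X]\, g$; and you conclude by uniqueness of the normal form modulo the Gröbner basis $\mathcal{G}(w)$ from Lemma~\ref{spe BG}. Your induction over a fixed reduction sequence is a more careful version of the paper's bookkeeping, and it avoids the paper's inaccurate claim that the cofactors $v_g$ are unique.
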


\begin{proof}
    We start by saying that $v$ can be uniquely written as 
    \begin{equation}\label{decomp v}
        v = \sum_{g \in \mathcal{G}} v_g \cdot g + \NF_{\mathcal{G}} (v)
    \end{equation} 
    where each $v_g$ is an element of $\C(W)[X]$. They are obtained by successive reductions using $\mathcal{G}$ so the only new denominators potentially appearing are products of leading terms of $\mathcal{G}$. In other words, there is no denominator vanishing at $w$ in \eqref{decomp v} which means that we can specialize it at $w$ yielding 
    \begin{equation}\label{spe decomp v}
        \phi_w (v) = \sum_{g \in \mathcal{G}} \phi_w (v_g) \cdot \phi_w(g) + \phi_w(\NF_{\mathcal{G}} (v)). 
    \end{equation}
    By hypothesis we know that each $g \in \mathcal{G}$ satisfies the relation $\LM_{<X} (g) = \LM_{<X} (\phi_w(g))$. In addition to that, each monomial in $X$ appearing in $\NF_{\mathcal{G}} (v)$ (and consequently, in $\phi_w(\NF_{\mathcal{G}} (v))$ )is not divisible by any leading monomial of elements of $\mathcal{G}$ (and consequently by any leading monomial of elements of $\mathcal{G}(w)$).

    We know by lemma \ref{spe BG} that $\mathcal{G}(w)$ is a Gröbner basis of $\mathcal{I}(w)$ so the normal form of $\phi_w(v)$ with regard to $\mathcal{G}(w)$ is uniquely defined, and we can identify it to $\phi_w(\NF_{\mathcal{G}} (v))$ thanks to \eqref{spe decomp v}. \qed
\end{proof}

A direct consequence of this is the specialization of the coordinates when going from $\mathcal{A}_W$ to $\mathcal{A}(w)$.

\begin{corollary}\label{spe coordinates}
    Let $w \in \C^s \backslash (\mathcal{D} \cup \mathcal{W}_\mathcal{G})$ and $v \in \C(W)[X]$ with no denominator vanishing at $w$. Denote by $\overrightarrow{[\cdot]_{\mathcal{A}_W}}$ and $\overrightarrow{[\cdot]_{\mathcal{A}(w)}}$ the respective coordinates in $\mathcal{A}_W$ and $\mathcal{A}(w)$, respectively with regard to the bases $\mathcal{B}_{\mathcal{A}_W}$ and $\mathcal{B}_{\mathcal{A}(w)}$. Then we have 
    $$\overrightarrow{[\phi_w(v)]_{\mathcal{A}(w)}} = \phi_w \left(\overrightarrow{[v]_{\mathcal{A}_W}}\right)$$
    where $\phi_w$ is naturally applied to every coordinate.
\end{corollary}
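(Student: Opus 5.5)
The argument is a direct consequence of Proposition \ref{specialization normal forms} combined with Corollary \ref{common basis}. Fix $w \in \C^s \backslash (\mathcal{D} \cup \mathcal{W}_\mathcal{G})$ and $v \in \C(W)[X]$ with no denominator vanishing at $w$.

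First, recall how coordinates are read off. Since $\mathcal{G}$ is a reduced Gröbner basis, the basis $\mathcal{B}_{\mathcal{A}_W}$ consists of the classes of the standard monomials $p_1,\dots,p_D$, i.e. the monomials not divisible by any $\LM_{<X}(g)$, $g\in\mathcal{G}$. The normal form $\NF_{\mathcal{G}}(v)$ is a $\C(W)$-linear combination of these monomials, say $\NF_{\mathcal{G}}(v)=\sum_{k=1}^D c_k p_k$. Because $[v]_{\mathcal{A}_W}=[\NF_{\mathcal{G}}(v)]_{\mathcal{A}_W}$ and the $[p_k]_{\mathcal{A}_W}$ are linearly independent, we get $\overrightarrow{[v]_{\mathcal{A}_W}}=(c_1,\dots,c_D)$. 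The proof of Proposition \ref{specialization normal forms} shows that the only denominators introduced by the reduction are products of leading coefficients of elements of $\mathcal{G}$. Together with the hypothesis on $v$, this means each $c_k$ has a denominator that does not vanish at $w$, so $\phi_w(c_k)$ is well defined.

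Next, apply Proposition \ref{specialization normal forms}:
$$\NF_{\mathcal{G}(w)}(\phi_w(v))=\phi_w(\NF_{\mathcal{G}}(v))=\sum_{k=1}^D \phi_w(c_k)\,p_k,$$
where the $p_k$ are untouched because they are monomials in $X$ only. By Corollary \ref{common basis}, $\mathcal{G}(w)$ has the same leading monomials as $\mathcal{G}$, so $\{[p_1]_{\mathcal{A}(w)},\dots,[p_D]_{\mathcal{A}(w)}\}$ is exactly the basis $\mathcal{B}(w)$ of standard monomials of $\mathcal{A}(w)$. Since $[\phi_w(v)]_{\mathcal{A}(w)}=[\NF_{\mathcal{G}(w)}(\phi_w(v))]_{\mathcal{A}(w)}$, uniqueness of coordinates in a basis gives $\overrightarrow{[\phi_w(v)]_{\mathcal{A}(w)}}=(\phi_w(c_1),\dots,\phi_w(c_D))=\phi_w\big(\overrightarrow{[v]_{\mathcal{A}_W}}\big)$.

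The only delicate point is making sure the bases agree: the basis $\mathcal{B}_{\mathcal{A}(w)}$ in the statement must be the specialized basis $\mathcal{B}(w)$ of Corollary \ref{common basis}, indexed by the same monomials $p_k$. Once that identification is made and the well-definedness of the $\phi_w(c_k)$ is checked, the rest is bookkeeping.
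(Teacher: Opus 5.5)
Your proposal is correct and follows the paper's own proof essentially step for step: you write $\NF_{\mathcal{G}}(v)$ in the standard monomials $p_k$, specialize it using Proposition \ref{specialization normal forms}, and identify the two bases through Corollary \ref{common basis}. You are also slightly more careful than the paper, since you index by $D$ rather than $n$ and explicitly check that each $\phi_w(c_k)$ is well defined.
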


\begin{proof}
    The starting point is the fact that any element of $\C(W)[X]$ or $\C[X]$ is represented in $\mathcal{A}_W$ or $\mathcal{A}(w)$ by its normal form. In particular, thanks to Corollary \ref{spe coordinates} we get that 
    $$[\phi_w(v)]_{\mathcal{A}(w)} = [\NF_{\mathcal{G}(w)} (\phi_w(v))]_{\mathcal{A}(w)} = [\phi_w(\NF_{\mathcal{G}} (v))]_{\mathcal{A}(w)}.$$

   Next, write $\NF_{\mathcal{G}} (v) = \sum_{i=1}^n c_i p_i$ where each $c_i$ is an element of $\C(W)$ with no vanishing denominator. That translates to $[\NF_{\mathcal{G}} (v)]_{\mathcal{A}_W} = (c_1, \dots, c_n)$. We can then apply $\phi_w$ to get that 
   \begin{equation}\label{pas d'idee}
    \phi_w(\NF_{\mathcal{G}} (v)) = \sum_{i=1}^n \phi_w(c_i) p_i.
   \end{equation} 
   We now recall (Corollary \ref{common basis}) that $\mathcal{A}_W$ and $\mathcal{A}(w)$ share the same monomials in their respective bases so we can interpret \ref{pas d'idee} as $$\overrightarrow{[\NF_{\mathcal{G}} (v)]_{\mathcal{A}_W}} = (\phi_w(c_1), \dots, \phi_w(c_n)) = \phi_w((c_1, \dots, c_n)).$$ \qed
\end{proof}

\subsubsection{Specialization of the multiplication matrices}

We can derive the specialization of multiplication matrices.

\begin{lemma}\label{spe mult matrices}
    Let $w \in \C^s \backslash (\mathcal{D} \cup \mathcal{W}_\mathcal{G})$. Then for any $v \in \C(W)[X]$ with no denominator vanishing at $w$ we have 
    $$\phi_w \left( M_{v}^{\mathcal{A}_W}\right) = M_v^{\mathcal{A}(w)}$$
    where $\phi_w$ is applied entry-wise. 
\end{lemma}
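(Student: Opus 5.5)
The plan is to read the statement off column by column from Corollary \ref{spe coordinates}. By definition, the $j$-th column of $M_v^{\mathcal{A}_W}$ is the coordinate vector $\overrightarrow{[v p_j]_{\mathcal{A}_W}}$ with respect to $\mathcal{B}_{\mathcal{A}_W}$. Likewise, by Corollary \ref{common basis}, the $j$-th column of $M_{\phi_w(v)}^{\mathcal{A}(w)}$ is $\overrightarrow{[\phi_w(v) p_j]_{\mathcal{A}(w)}}$ with respect to $\mathcal{B}(w)$, where the basis monomials $p_1,\dots,p_D$ are the same. Here we read $M_v^{\mathcal{A}(w)}$ as the matrix of multiplication by $\phi_w(v)$, which is the only meaningful interpretation when $v$ involves $W$.

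The first step is to check that the hypothesis of Corollary \ref{spe coordinates} holds for each product $v p_j$. Since $p_j \in \C[X]$ is a monomial with no denominator, the denominators of $v p_j$ are those of $v$, and none of them vanishes at $w$. Moreover $\phi_w(v p_j) = \phi_w(v) p_j$, because $\phi_w$ is a ring morphism that fixes $X$.

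The second step applies Corollary \ref{spe coordinates} to $v p_j$, which gives
$$\overrightarrow{[\phi_w(v)p_j]_{\mathcal{A}(w)}} = \phi_w\left(\overrightarrow{[v p_j]_{\mathcal{A}_W}}\right)$$
for each $j \in [\![1,D]\!]$. Assembling the $D$ columns yields $\phi_w(M_v^{\mathcal{A}_W}) = M_v^{\mathcal{A}(w)}$ entrywise.

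The only real subtlety lies in the well-definedness of the left-hand side, namely that the entries of $M_v^{\mathcal{A}_W}$ have no denominator vanishing at $w$. This is exactly what Proposition \ref{specialization normal forms} secures. The normal form $\NF_{\mathcal{G}}(v p_j)$ is obtained by reductions whose only new denominators are products of $\LC_{<X}(g)$ for $g \in \mathcal{G}$, and these are nonzero at $w$ because $w \notin \mathcal{W}_\mathcal{G}$. So I would simply point to that argument rather than redo it.

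Alternatively, one could first treat $v = X_i$ and extend to general $v$ via $M_{uv} = M_u M_v$, $M_{u+v} = M_u + M_v$, and the fact that $\phi_w$ is a ring morphism compatible with matrix products. However, the direct column argument avoids any induction and covers arbitrary $v$ at once, so I would use it.
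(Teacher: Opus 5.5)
Your proposal is correct and takes the same route as the paper. The paper also identifies the $i$-th column of $M_v^{\mathcal{A}_W}$ with $\overrightarrow{[vp_i]_{\mathcal{A}_W}}$ and applies Corollary~\ref{spe coordinates} to $vp_i$ to obtain the $i$-th column of $M_v^{\mathcal{A}(w)}$. Your extra remarks — that $vp_i$ has the same non-vanishing denominators as $v$, that $\phi_w(vp_i)=\phi_w(v)p_i$, and that $M_v^{\mathcal{A}(w)}$ means multiplication by $\phi_w(v)$ — only make explicit what the paper leaves implicit.
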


\begin{proof}
    For each $i \in [\![1,D]\!]$, the $i$-th column of $M_v^{\mathcal{A}_W}$ is given by the coordinates with regard to $\mathcal{B}_{\mathcal{A}_W}$ of $[vp_i]_{\mathcal{A}_W}$. We can then use Corollary \ref{spe coordinates} to get that $\phi_w (\overrightarrow{[vp_i]_{\mathcal{A}_W}}) = \overrightarrow{[vp_i]_{\mathcal{A}(w)}}$ which is the $i$-th column of $M_v^{\mathcal{A}(w)}$. \qed
\end{proof}

A direct consequence of that is the specialization of traces.

\begin{corollary}\label{spe traces}
    Let $w \in \C^s \backslash (\mathcal{D} \cup \mathcal{W}_\mathcal{G})$. Then for any $v \in \C(W)[X]$ with no denominator vanishing at $w$ we have
    $$\phi_w \left( \Trace^{\mathcal{A}_W} (v)\right) = \Trace^{\mathcal{A}(w)} (\phi_w(v)).$$
\end{corollary}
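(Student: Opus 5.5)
The plan is to deduce this directly from Lemma \ref{spe mult matrices} applied to $v$: the trace is a polynomial function with integer coefficients of the matrix entries, and $\phi_w$ is a ring morphism. Two points need care. The first is that $\phi_w$ is applied to elements of $\C(W)$, which only makes sense on the local subring of rational functions whose denominators do not vanish at $w$. The second is that the entries of $M_v^{\mathcal{A}_W}$ must lie in that subring.

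\textbf{Step 1: the matrix entries have no denominator vanishing at $w$.} By hypothesis, $v$ has no denominator vanishing at $w$, and the basis elements $p_i$ are monomials in $X$. So each product $vp_i$ also has no denominator vanishing at $w$. The entries of column $i$ of $M_v^{\mathcal{A}_W}$ are the coordinates of $\NF_{\mathcal{G}}(vp_i)$ in the basis $\mathcal{B}_{\mathcal{A}_W}$. The proof of Proposition \ref{specialization normal forms} shows that reduction by $\mathcal{G}$ introduces only denominators that are products of the leading coefficients $\LC_{<X}(g)$, $g\in\mathcal{G}$. These do not vanish at $w$, because $w\notin\mathcal{W}_{\mathcal{G}}$. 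Hence all entries of $M_v^{\mathcal{A}_W}$ lie in the local ring $\mathcal{O}_w\subseteq\C(W)$ of fractions regular at $w$, and $\phi_w$ extends to a ring morphism $\mathcal{O}_w\to\C$. This step is the only real obstacle. It is essentially already contained in Corollary \ref{spe coordinates}, which is implicitly used to make Lemma \ref{spe mult matrices} meaningful, so I would simply cite it.

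\textbf{Step 2: pass to traces.} Write $\Trace^{\mathcal{A}_W}(v)=\sum_{i=1}^D \big(M_v^{\mathcal{A}_W}\big)_{ii}$. This trace does not depend on the chosen basis, so we may compute it in $\mathcal{B}_{\mathcal{A}_W}$. Since $\phi_w$ is additive on $\mathcal{O}_w$, we get
$$\phi_w\big(\Trace^{\mathcal{A}_W}(v)\big)=\sum_{i=1}^D \phi_w\big((M_v^{\mathcal{A}_W})_{ii}\big)=\sum_{i=1}^D \big(M_{\phi_w(v)}^{\mathcal{A}(w)}\big)_{ii},$$
using Lemma \ref{spe mult matrices} for the last equality. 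Here we read $M_v^{\mathcal{A}(w)}$ in that lemma as the multiplication matrix by $\phi_w(v)$ in $\mathcal{A}(w)$. By Corollary \ref{common basis}, $\mathcal{B}(w)$ is a $\C$-basis of $\mathcal{A}(w)$, so the right-hand side is $\Trace^{\mathcal{A}(w)}(\phi_w(v))$, since the trace is again basis-independent.

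The same argument also shows that characteristic polynomials specialize, because their coefficients are integer polynomials in the entries. This is presumably what will be needed later, for the traces $\Trace(t^i)$ and $\Trace(X_jt^i)$ used in Lemmas \ref{relations Newton} and \ref{formules h_Xi}.
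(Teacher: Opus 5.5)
Your proposal is correct and follows the paper's route. The paper gives no separate argument: it presents the corollary as a direct consequence of Lemma~\ref{spe mult matrices}, i.e.\ take diagonal sums in $\phi_w(M_v^{\mathcal{A}_W}) = M_{\phi_w(v)}^{\mathcal{A}(w)}$ and use that $\phi_w$ is a ring morphism. Your Step 1, which places the entries in the local ring of functions regular at $w$ via the denominators $\LC_{<X}(g)$, makes explicit what the paper leaves implicit in Proposition~\ref{specialization normal forms} and Corollary~\ref{spe coordinates}.
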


\subsubsection{Specialization of $h_0$ and $h_1$} The following lemma shows that the two first polynomials of $\mathcal{F}$ specialize well using the computations with traces. 

\begin{lemma}\label{spe carac}
    Let $w \in \C^s \backslash (\mathcal{D} \cup \mathcal{W}_\mathcal{G})$. Then we have 
    $$\phi_w \left( \chi_t^{\mathcal{A}_W}\right) = \chi_t^{\mathcal{A}(w)}$$ and in particular the Hörner polynomials of $\chi_t^{\mathcal{A}(w)}$ are obtained by applying $\phi_w$ to the ones of $\chi_t^{\mathcal{A}_W}$.

    Moreover, we have 
    $$\phi_w \left( \frac{d \chi_w^{\mathcal{A}_W}}{dT}\right) = \frac{d \chi_t^{\mathcal{A}(w)}}{dT}.$$
\end{lemma}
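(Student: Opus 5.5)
The plan is to read $\chi_t^{\mathcal{A}_W}$ off its matrix. Since $t = a_1X_1+\dots+a_nX_n$ has constant coefficients, it has no denominator at all. By Lemma \ref{spe mult matrices}, for $w \in \C^s\backslash(\mathcal{D}\cup\mathcal{W}_\mathcal{G})$ we then get $\phi_w(M_t^{\mathcal{A}_W}) = M_t^{\mathcal{A}(w)}$. Here both matrices are $D\times D$ and are written in the bases $\mathcal{B}_{\mathcal{A}_W}$ and $\mathcal{B}(w)$. These bases consist of the same monomials by Corollary \ref{common basis}.

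The entries of $M_t^{\mathcal{A}_W}$ lie in $\C(W)$, and their denominators do not vanish at $w$. This holds because the coordinates were obtained by reductions by $\mathcal{G}$, as in the proof of Proposition \ref{specialization normal forms}. So $\phi_w$ is a ring morphism from the local subring of $\C(W)$ of fractions defined at $w$ to $\C$. The characteristic polynomial $\det(U_0 I_D - M_t^{\mathcal{A}_W})$ is a polynomial expression with integer coefficients in the entries and in $U_0$. Hence $\phi_w$ commutes with it, and $\phi_w(\chi_t^{\mathcal{A}_W}) = \det(U_0 I_D - M_t^{\mathcal{A}(w)}) = \chi_t^{\mathcal{A}(w)}$. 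The degree stays $D$ because both polynomials are monic. The same conclusion also follows from Lemma \ref{relations Newton} together with Corollary \ref{spe traces}. Indeed, the coefficients $b_k$ are determined by a triangular system whose diagonal coefficients are nonzero integers. This is possible in characteristic zero, which means division by integers commutes with $\phi_w$. Either route works, and I would state the determinant one and mention the other.

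The statement on the Hörner polynomials is then immediate. Each $H_i$ is $\sum_{j\le i} b_j U_0^{i-j}$, and $\phi_w$ acts coefficientwise, so $\phi_w(H_i)$ is the $i$-th Hörner polynomial of $\chi_t^{\mathcal{A}(w)}$.

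For the derivative, I write $\chi_t^{\mathcal{A}_W} = \sum_i b_i U_0^{D-i}$. Its derivative is $\sum_i (D-i) b_i U_0^{D-i-1}$. Applying $\phi_w$ coefficientwise and using $\phi_w((D-i)b_i) = (D-i)\phi_w(b_i)$ gives the derivative of $\chi_t^{\mathcal{A}(w)}$. So $\phi_w$ commutes with $d/dU_0$, as it does with any $\C$-linear operation on coefficients.

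The only point needing care is to justify that no denominator of an entry of $M_t^{\mathcal{A}_W}$ vanishes at $w$. This is exactly what Corollary \ref{spe coordinates} already guarantees outside $\mathcal{W}_\mathcal{G}$. Everything else is formal.
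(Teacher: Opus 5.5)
Your proposal is correct and follows essentially the same route as the paper: you specialize $M_t^{\mathcal{A}_W}$ via Lemma \ref{spe mult matrices}, commute $\phi_w$ with the determinant to obtain $\chi_t^{\mathcal{A}(w)}$, and treat the Hörner polynomials and the derivative coefficientwise. Your extra justification about denominators not vanishing at $w$, and the alternative argument through Lemma \ref{relations Newton} and Corollary \ref{spe traces}, are both sound, though the paper does not need them.
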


\begin{proof}
    The first point comes from applying both Lemma \ref{spe mult matrices} and the fact that specialization commutes with the determinant to get 
    \begin{eqnarray*}\phi_w \left(\chi_t^{\mathcal{A}_W}\right)  & = & \phi_w \left( \det (U_0I_D - M_t^{\mathcal{A}_W})\right)  \\
    & = & \det \left( TI_D - \phi_w(M_t^{\mathcal{A}_W})\right) = \det\left(U_0I_D - M_t^{\mathcal{A}(w)}\right) = \chi_t^{\mathcal{A}(w)}.
    \end{eqnarray*}

    The second part comes from the fact that specialization commutes with derivation. \qed
\end{proof}

\subsubsection{Specialization of $h_{X_1}, \dots, h_{X_n}$}

\begin{lemma}\label{spe h_Xi}
    Let $w \in \C^s \backslash (\mathcal{D} \cup \mathcal{W}_\mathcal{G})$. Then for each $i \in [\![1,n]\!]$ we have
    $$\phi_w(h_{X_i}) (U_0 ; W) = \sum_{i=0}^{D-1} \Trace^{\mathcal{A}(w)} (X_i t^i) \phi_w(H_{D-i-1}) (U_0 ; W)$$
    and $\phi_w(H_{D-i-1})$ is the $(D-i-1)$-th Hörner polynomial of $\chi_t^{\mathcal{A}(w)}$.
\end{lemma}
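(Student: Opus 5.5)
The plan is to apply $\phi_w$ directly to the closed formula of Lemma \ref{formules h_Xi} and to push it through each factor. For that I first need $\phi_w$ to be defined on every term, which means no denominator in the formula vanishes at $w$.

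\emph{Step 1: denominators.} Since $w \notin \mathcal{W}_\mathcal{G}$, the multiplication matrices $M^{\mathcal{A}_W}_{X_j}$ have entries in $\C(W)$ whose denominators are products of leading coefficients of elements of $\mathcal{G}$, and these do not vanish at $w$. This is exactly the content of the proof of Proposition \ref{specialization normal forms}. The matrix $M^{\mathcal{A}_W}_{X_jt^i}$ is a product of such matrices, since $t$ has coefficients in $\C$. Hence the traces $\Trace^{\mathcal{A}_W}(X_jt^i)$ and the coefficients $b_k$ of $\chi_t^{\mathcal{A}_W}=\det(U_0I_D-M_t^{\mathcal{A}_W})$ have no denominator vanishing at $w$. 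The same therefore holds for each Hörner polynomial $H_{D-i-1}$, whose coefficients are among the $b_k$.

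\emph{Step 2: specialize the formula.} The map $\phi_w$ is a ring morphism on the local ring of elements of $\C(W)$ without pole at $w$. Applying it to the identity of Lemma \ref{formules h_Xi} gives
\[
\phi_w(h_{X_j})=\sum_{i=0}^{D-1}\phi_w\big(\Trace^{\mathcal{A}_W}(X_jt^i)\big)\,\phi_w(H_{D-i-1}).
\]
Next I apply Corollary \ref{spe traces} with $v=X_jt^i$, which has no denominators at all. This gives $\phi_w(\Trace^{\mathcal{A}_W}(X_jt^i))=\Trace^{\mathcal{A}(w)}(X_jt^i)$, because $\phi_w(X_jt^i)=X_jt^i$.

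\emph{Step 3: Hörner polynomials.} By Lemma \ref{spe carac}, $\phi_w(\chi_t^{\mathcal{A}_W})=\chi_t^{\mathcal{A}(w)}$. Corollary \ref{common basis} ensures the degree $D$ is the same for both quotients, so the coefficients satisfy $\phi_w(b_k)=b_k(w)$ termwise, where $b_k(w)$ are the coefficients of $\chi_t^{\mathcal{A}(w)}$. Then $\phi_w(H_m)=\sum_{k\le m}\phi_w(b_k)U_0^{m-k}$ is exactly the $m$-th Hörner polynomial of $\chi_t^{\mathcal{A}(w)}$.

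The only real obstacle is Step 1: making rigorous that the entries of the multiplication matrices, and hence the traces and the $b_k$, lie in the localization at $w$. The cleanest route is to invoke Corollary \ref{spe coordinates} and Lemma \ref{spe mult matrices}, whose statements already presuppose this well-definedness. The rest is formal.
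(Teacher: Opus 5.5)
Your proposal is correct and follows the paper's argument: the paper simply cites Corollary \ref{spe traces} for the traces and Lemma \ref{spe carac} for the Hörner polynomials, exactly as in your Steps 2 and 3. Your Step 1, which checks that $\phi_w$ is defined on the traces and on the coefficients $b_k$, makes explicit something the paper leaves implicit in those two results.
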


\begin{proof}
    This is a direct consequence of Corollary \ref{spe traces} and Lemma \ref{spe carac}.
\end{proof}

\subsubsection{Specialization of $\mathcal{F}$}

We can summarize the previous result into the following theorem.

\begin{theorem}\label{spe candidat GRUR}
    Suppose that $t \in \C[X]$ is a linear form and that the family $\mathcal{F} = \{h_0, h_1, h_{X_1}, \dots, h_{X_n}\}$ was computed using the formulas given in Lemma \ref{relations Newton} and \ref{formules h_Xi}. Then for all $w \in \C^s \backslash (\mathcal{D} \cup \mathcal{W}_\mathcal{G})$, the family $\phi_w(\mathcal{F})$ is a RUR-candidate of $V(\mathcal{I}(w))$.
    
    Moreover, if we suppose that $t$ separates the specialized variety outside of the closed Zariski subset $\mathcal{S}$ of measure 0, then for all $w \in \C^s \backslash (\mathcal{D} \cup \mathcal{W}_\mathcal{G} \cup \mathcal{S})$, $\phi_w (\mathcal{F})$ is a RUR of $V(\mathcal{I}(w))$. 
\end{theorem}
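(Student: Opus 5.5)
The theorem assembles the specialization results of this section, so the proof is a check that each polynomial of $\mathcal{F}$, computed through traces, specializes to the corresponding polynomial built from $\mathcal{A}(w)$ with the same formulas. Throughout, fix $w \in \C^s \backslash (\mathcal{D} \cup \mathcal{W}_\mathcal{G})$.

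\emph{Step 1: the coefficients specialize.} All coefficients of the $h$'s are rational functions whose denominators come from the traces $\Trace^{\mathcal{A}_W}(t^i)$ and $\Trace^{\mathcal{A}_W}(X_j t^i)$. These are computed from the multiplication matrices, which in turn come from normal forms with respect to $\mathcal{G}$. Hence only products of $\LC_{<X}(g)$, $g \in \mathcal{G}$, can appear as denominators, and these do not vanish at $w$. In particular, the triangular system of Lemma \ref{relations Newton} has diagonal entries $D-k \neq 0$, so it introduces no new denominators.

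\emph{Step 2: $h_0$ and $h_1$.} Since $t$ and $X_j t^i$ lie in $\C[X]$, Corollary \ref{spe traces} gives
$$\phi_w(\Trace^{\mathcal{A}_W}(t^i)) = \Trace^{\mathcal{A}(w)}(t^i).$$
Lemma \ref{spe carac} then gives $\phi_w(h_0) = \chi_t^{\mathcal{A}(w)}$ and $\phi_w(h_1) = (\chi_t^{\mathcal{A}(w)})'$. Alternatively, the Newton system specialized at $w$ is exactly the Newton system for $\mathcal{A}(w)$. By Corollary \ref{common basis} both quotients have dimension $D$, so the two triangular systems have the same size and the same unique solution.

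\emph{Step 3: $h_{X_j}$.} Lemma \ref{spe h_Xi} shows that $\phi_w(h_{X_j})$ is given by the formula of Lemma \ref{formules h_Xi} applied in $\mathcal{A}(w)$. Running the proof of Lemma \ref{formules h_Xi} over $\C$, with $\mathcal{A}(w)$ and Stickelberger's theorem for the zero-dimensional ideal $\mathcal{I}(w)$, identifies this with
$$\sum_{x \in V(\mathcal{I}(w))} \mu(x)\, x_j \prod_{y \neq x}(U_0 - t(y))^{\mu(y)}(U_0-t(x))^{\mu(x)-1}.$$
This is exactly $\frac{\partial}{\partial U_j}$ of the Chow form of $V(\mathcal{I}(w))$ evaluated at $(U_0, a_1, \dots, a_n)$.

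\emph{Step 4: conclusion.} Steps 2 and 3 show that $\phi_w(\mathcal{F})$ is the family built from $\mathcal{A}(w)$ in Rouillier's sense \cite[Definition 3.2]{Rouillier1998}, that is, a RUR-candidate of $V(\mathcal{I}(w))$. By Lemma \ref{egal fractions} applied over $\C$, the unreduced fractions agree with Rouillier's reduced ones. For the second claim, take in addition $w \notin \mathcal{S}$. Then $t$ is injective on $V(\mathcal{I}(w))$, so the candidate is a RUR, exactly as in Corollary \ref{valeurs generiques GRUR}.

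The main obstacle is Step 1: ensuring that no hidden denominator beyond the leading coefficients in $\mathcal{W}_\mathcal{G}$ enters the computation of the coefficients of $\mathcal{F}$. I would handle it by tracking denominators through the normal form reductions of Proposition \ref{specialization normal forms}.
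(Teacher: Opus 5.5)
Your proposal is correct and follows the paper's route: the theorem is simply the assembly of Corollary \ref{spe traces}, Lemma \ref{spe carac} and Lemma \ref{spe h_Xi} (which rest on Proposition \ref{specialization normal forms} and Corollary \ref{common basis}), and the second claim follows from $w \notin \mathcal{S}$ exactly as in Corollary \ref{valeurs generiques GRUR}. There are two harmless slips: in Step 1 the coefficient of $b_k$ in the Newton system is $-k$ once the $i=0$ term $D\,b_k$ is moved to the left (your $D-k$ vanishes at $k=D$), and in Step 3 the trace sum equals $-\frac{\partial h}{\partial U_j}(U_0,a_1,\dots,a_n)$ because $\frac{\partial}{\partial U_j}(U_0-x_1U_1-\dots-x_nU_n)=-x_j$ — a sign already dropped in Lemma \ref{formules h_Xi}, which does not affect the match with Rouillier's trace-formula candidate.
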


\subsection{Certifying whether $t$ is generically separating or not}

We present now a criterion to effectively check whether a given linear form $t$ is generically separating or not. The main issue  is that we cannot directly compute the polynomial $h(U_0, \dots, U_n)$. However what we can compute are specific instances $h(U_0, a_1, \dots, a_n)$ for values $a_1,\dots, a_n \in \C$. This is more than enough as shown by the following lemma.

\begin{lemma}\label{separation check}
    Let $w \in \C^s \backslash (\mathcal{D} \cup \mathcal{W}_\mathcal{G})$ such that $\Disc_{U_0}(\overline{\chi_t^{\mathcal{A}_W}})(w) \neq 0$. Then $t$ is generically separating if and only if $t$ separates the specialized variety $V(\mathcal{I}(w))$.
\end{lemma}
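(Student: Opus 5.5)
The plan is to compare the number of distinct roots of the characteristic polynomial $\chi_t$ before and after specialization, via the squarefree parts. Write $\overline{\chi}:=\overline{\chi_t^{\mathcal{A}_W}}\in\C(W)[U_0]$, normalized monic. By Stickelberger, $\chi_t^{\mathcal{A}_W}=\prod_{x\in V(\mathcal{I}_W)}(U_0-t(x))^{\mu(x)}$ and $\chi_t^{\mathcal{A}(w)}=\prod_{x\in V(\mathcal{I}(w))}(U_0-t(x))^{\mu(x)}$. Moreover $t$ is injective on $V(\mathcal{I}_W)$ (respectively on $V(\mathcal{I}(w))$) if and only if $\deg\overline{\chi}=\#V(\mathcal{I}_W)$ (respectively $\deg\overline{\chi_t^{\mathcal{A}(w)}}=\#V(\mathcal{I}(w))$). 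So it suffices to prove two equalities:
\begin{equation*}
\#V(\mathcal{I}(w))=\#V(\mathcal{I}_W)\quad\text{and}\quad \deg\overline{\chi_t^{\mathcal{A}(w)}}=\deg\overline{\chi}.
\end{equation*}

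\emph{Degree of the squarefree part.} The hypothesis $\Disc_{U_0}(\overline{\chi})(w)\neq 0$ implicitly requires that the coefficients of $\overline{\chi}$ have no pole at $w$, so $\phi_w(\overline{\chi})$ is defined. It is monic of the same degree and, since $\Disc$ commutes with specialization for monic polynomials, it is squarefree. By Lemma \ref{spe carac}, $\phi_w(\chi_t^{\mathcal{A}_W})=\chi_t^{\mathcal{A}(w)}$. Write $\chi_t^{\mathcal{A}_W}=\prod_j \overline{\chi}_j^{\,j}$ (square-free decomposition, with $\overline{\chi}=\prod_j\overline{\chi}_j$), or simply note that $\chi_t^{\mathcal{A}_W}$ divides $\overline{\chi}^{D}$ in $\C(W)[U_0]$ with a monic quotient whose coefficients also do not blow up at $w$. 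Specializing gives that $\chi_t^{\mathcal{A}(w)}$ and $\phi_w(\overline{\chi})$ have the same set of roots. Since the latter is squarefree, it is exactly $\overline{\chi_t^{\mathcal{A}(w)}}$, and the degrees agree.

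\emph{Number of points.} This is the main obstacle, since the degree argument above says nothing directly about points colliding. I would use a generic separating form $u=\sum c_iX_i$, separating both $V(\mathcal{I}_W)$ and $V(\mathcal{I}(w))$ (such $c$ exist, since only finitely many hyperplane conditions are excluded). With such a $u$, $\#V$ equals the degree of the squarefree part of $\chi_u$ in both settings, and $\phi_w(\chi_u^{\mathcal{A}_W})=\chi_u^{\mathcal{A}(w)}$ by Lemma \ref{spe carac}. Specialization can only merge roots, so $\#V(\mathcal{I}(w))\le\#V(\mathcal{I}_W)$. The same inequality holds for $t$: $\deg\overline{\chi_t^{\mathcal{A}(w)}}\le \#V(\mathcal{I}(w))$.

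\emph{Conclusion.} We get the chain
\begin{equation*}
\deg\overline{\chi}=\deg\overline{\chi_t^{\mathcal{A}(w)}}\le\#V(\mathcal{I}(w))\le\#V(\mathcal{I}_W).
\end{equation*}
If $t$ is generically separating, the two ends are equal, so every inequality is an equality and $t$ separates $V(\mathcal{I}(w))$. Conversely, if $t$ separates $V(\mathcal{I}(w))$, the first inequality is an equality, giving $\deg\overline{\chi}=\#V(\mathcal{I}(w))$. To close the gap with $\#V(\mathcal{I}_W)$, I would try to show that $\#V(\mathcal{I}(w))=\#V(\mathcal{I}_W)$ using the discriminant hypothesis together with the good specialization of the traces (Corollary \ref{spe traces}), for instance via the rank of the Hermite trace form, which counts distinct points and whose rank can only drop under specialization. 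If this equality cannot be extracted from the hypotheses alone, the converse direction should be stated for $w$ outside the proper Zariski closed set where that rank drops. I expect this step to be the delicate one.
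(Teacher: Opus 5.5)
Your forward direction is correct, and it is more careful than the paper's. The paper simply asserts that the non-separation locus is $V(\Disc_{T}(\overline{\chi_t^{\mathcal{A}_W}}))$. You supply the missing ingredient $\#V(\mathcal{I}(w))\leqslant \#V(\mathcal{I}_W)$, which rules out two points of $V(\mathcal{I}(w))$ sharing a $t$-value. One fact you use without proof needs a line: that $\phi_w$ can only decrease the number of distinct roots of a monic $P$ whose coefficients are regular at $w$. It holds because the principal subresultant coefficients of $(P,P')$ specialize (formal degrees are preserved in characteristic zero), so $\deg\gcd(P,P')$ can only increase under $\phi_w$.

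The converse cannot be completed, because it is false under the stated hypotheses. Take $s=1$, $n=2$, $f_1=X_1^2-1$, $f_2=X_2^2-W$ and $t=X_1$.
\begin{itemize}
\item $V(\mathcal{I}_W)=\{(\pm 1,\pm\sqrt{W})\}$, so $t$ is not generically separating.
\item $\mathcal{G}=\{f_1,f_2\}$ has constant leading coefficients, so $\mathcal{W}_\mathcal{G}=\emptyset$. Also $\mathcal{D}=\emptyset$, and even $\mathcal{L}=\mathcal{W}_\infty=\emptyset$.
\item $\chi_t^{\mathcal{A}_W}=(U_0^2-1)^2$, whose squarefree part $U_0^2-1$ has discriminant $4$.
\end{itemize}
Yet $V(\mathcal{I}(0))=\{(\pm1,0)\}$ is separated by $t$.

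The discriminant hypothesis only prevents collisions of $t$-values, never collisions of points. The converse needs exactly the equality $\#V(\mathcal{I}(w))=\#V(\mathcal{I}_W)$ that you could not extract. The paper's own proof hides the same gap. It claims that separation of $V(\mathcal{I}(w))$ forces $\Disc_{U_0}(\overline{h})(w,a)\neq 0$. That presumes $\phi_w(\overline{h})$ is still squarefree, i.e.\ that no two points of $V(\mathcal{I}_W)$ coalesce at $w$. In the example, $\Disc_{U_0}(\overline{h})$ vanishes identically on $W=0$.

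Your suggested repair is the right one. Add the hypothesis that the specialized Hermite matrix $\phi_w\big((\Trace^{\mathcal{A}_W}(p_ip_j))_{i,j}\big)$ has the generic rank, i.e.\ $\#V(\mathcal{I}(w))=\#V(\mathcal{I}_W)$; this is legitimate by Corollary \ref{spe traces}. The converse then follows from your root-merging argument, even without the discriminant assumption, since $\chi_t^{\mathcal{A}(w)}=\phi_w(\chi_t^{\mathcal{A}_W})$ has at most $\#t(V(\mathcal{I}_W))$ distinct roots. Consequently, a positive test at a single $w$ in SeparationCheck certifies generic separation only if $w$ also avoids this collision locus.
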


\begin{proof}
    If $t$ separates $V(\mathcal{I}_W)$ then we can write $\overline{\chi_t^{\mathcal{A}_W}} = \prod_{x \in V(\mathcal{I}_W)} (T - t(x))$. It is quite classical that the discriminant is then given by $\Disc_T(\overline{\chi_t^{\mathcal{A}_W}}) = \prod_{y \neq x} (t(x) - t(y))^2$, and the set of parameters for which $t$ does not separate $V(\mathcal{I}(w))$ is thus given by $V(\Disc_T (\overline{\chi_t^{\mathcal{A}_W}}))$. By hypothesis, $w$ is indeed a value outside of this variety.

    Conversely, suppose that the form $t$ separates $V(\mathcal{I}(w))$. Using the notations of §1.2, that means that the discriminant $\Disc_{U_0}(\overline{h})$ does not vanish at $(w,a)$ where $a = (a_1, \dots, a_n)$ are the coefficients of $t$. In particular, $\Disc_{U_0} (\overline{h} (U_0, a_1, \dots, a_n)) \in \C(W)$ is not identically zero. Moreover, $\overline{h}(U_0, a_1, \dots, a_n)$ is monic in $U_0$, allowing us to write 
    \begin{eqnarray*}
    \Disc_{U_0} (\overline{h}(U_0, \dots, U_n)) (a_1, \dots, a_n) & = & \Disc_{U_0} (\overline{h} (U_0, a_1, \dots, a_n)) \\
    & = & \Disc_{U_0} \left(\prod_{x \in V(\mathcal{I}_W)} (U_0 - t(x))\right) \neq 0.
    \end{eqnarray*}
    This exactly means that $t$ is injective on $V(\mathcal{I}_W)$, and thus that $t$ is generically separating. \qed
\end{proof}

In other words, we can check whether $t$ is generically separating checking whether it separates one specific specialized variety. This can done using for instance the new algorithm proposed in \cite{DRR2025}.

\subsection{About the different genericity levels}

Considering the different algebraic sets to avoid introduced in the previous sections, we find it relevant to make a summarize of their meaning. In Corollary \ref{valeurs generiques GRUR} we showed that the GRUR can be specialized outside $\mathcal{D} \cup \mathcal{L} \cup \mathcal{W}_\infty \cup \mathcal{S}$. 

\begin{itemize}
    \item Outside $\mathcal{D} \cup \mathcal{W}_\infty \cup \mathcal{L} $, the specialized systems have a constant number of roots counted with multiplicities. Their number is $D$ which is the degree of $h_0$ or equivalently the dimension of the quotient $\C(W)[X]/\mathcal{I}_W$ as a $\C(W)$-vector space.
    \item Outside $\mathcal{D} \cup \mathcal{W}_\infty \cup \mathcal{S} \cup \mathcal{L} $, the specialized systems have a constant number of distinct roots. Their number is $d$ which is the degree of $\overline{h_0}$ when $t$ is generically separating. 
\end{itemize}

In Theorem \ref{spe candidat GRUR} we showed that if we compute $\mathcal{F}$ using linear algebra in the quotient $\C(W)[X]/\mathcal{I}_W$, specialization holds for parameters outside $\mathcal{W}_\mathcal{G}$. This is the condition for the computations in the quotient $\mathcal{A}_W$ to specialize but they are just raised by the algorithmic way of computing $\mathcal{F}$.

\subsection{Algorithm}

We propose here a first Las-Vegas algorithm to compute $\mathcal{F}$. The sub-algorithms can be found in the Appendix. We do all the computations with a fixed linear form chosen in the set $\mathcal{T} := \{X_1 + kX_2 + \dots + k^{n-1}X_n, ~ k \in [\![1, D(D-1)/2]\!]\}$ which contains at least one linear form that separates $V(\mathcal{I}_W)$ (see \cite[Lemma 2.1]{Rouillier1998}). The so-called traces vector $\VTr 1$ refers to the vector $(\Trace^{\mathcal{A}_W} (p_1), \dots, \Trace^{\mathcal{A}_W} (p_D))^T$ which turns out to simplify some computations.

\begin{algorithm}[H]
\caption{GRUR-LA}\label{GRUR-LA}
\begin{algorithmic}
\Require $\mathcal{G}$ a reduced Gröbner basis of $\mathcal{I}_W$ w.r.t $<_X$ ; a basis $\mathcal{B}_{\mathcal{A}_W}$ of $\mathcal{A}_W$ ; the multiplication matrices $M_{X_1}, \dots, M_{X_n}$ by $X_1, \dots, X_n$ in $\mathcal{A}_W$ ; $\MT(\mathcal{A}_W)$ the multiplicative tensor of $\mathcal{A}_W$.
\Ensure $\mathcal{F} = \{h_0, h_1, h_{X_1}, \dots, h_{X_n}\}$ a generic RUR of $V(\mathcal{I}_W)$.
\State \underline{Step 0 :} Compute $\VTr 1$ the traces vector
\State \underline{Step 1 :} Choose a not already chosen linear form $t$ in $\mathcal{T}$
\While{$t$ is not generically separating}
    \State $h_0 \gets \text{CharacteristicPolynomial}(\MT(\mathcal{A}_W), M_t, \VTr 1)$
    \State Test whether $t$ is separating or not using DRR
    \If{$t$ is separating}
        \State \Return{$\mathcal{F} = \{\chi_t, \chi_t', \text{GRURPolynomials}(h_0, M_{X_1}, \dots, M_{X_n}, \VTr 1)\}$}
     \Else
     	\State Go back to Step 1.
    \EndIf

\EndWhile
\end{algorithmic}
\end{algorithm}

\section{Second algorithm : evaluation/interpolation scheme}

\subsection{Presentation of the algorithm}

We propose here a second algorithm for computing $\mathcal{F}$. The reason is that Algorithm \ref{GRUR-LA} requires the computation of a Gröbner basis of the ideal generated by $f_1, \dots, f_n$. This Gröbner basis depends on parameters in a rather uncontrollable way and there is a high chance that the degree in the parameters explodes. In addition to that, computing the characteristic polynomial of $t$ requires the computation of $\Trace^{\mathcal{A}_W} (t^i)$ for $i \in [\![1, D]\!]$ whose memory size might explode even though we know that the final result has a controlled size (see the bounds of Corollary \ref{bounds RUR}). For these reasons, we propose a second approach using an evaluation/interpolation scheme.

The proposed algorithm is presented below. Each step will be precisely studied through this section.

\begin{algorithm}[H]\label{GRUR-EI}
    \caption{\textit{GRUR-EI}}
    \begin{algorithmic}
    \Require Polynomials $f_1, \dots, f_n \in \Q[W_1, \dots, W_s,X_1, \dots, X_n] = \Q[W,X]$ that form a generically zero-dimensional system with degree in $X$ bounded by $d_X$ and degree in $W$ bounded by $d_W$. 
    \Ensure $\mathcal{F} = \{h_0, h_1, h_{X_1}, \dots, h_{X_n}\} \subseteq \Q(W)[T]$ a generic RUR of $V(\mathcal{I}_W)$.
    \State \underline{Step 0 :} Compute a grid $\mathcal{A} \subseteq \Q^s$ of at least $((n+1)d_X^n d_W + 1)^s$ of points $w \notin (\mathcal{D} \cup \mathcal{L} \cup \mathcal{S})$.
    \State \underline{Step 1 :} Choose a not already chosen linear form $t \in \mathcal{T}$.
    \State \underline{Step 2 :} For each $w \in \mathcal{A}$ use the algorithm \textit{BlackBoxRUR} to compute specialized RURs of the system and eventually check if $t$ is generically separating.
    \If{$t$ is not generically separating} 
        \State Go back to Step 1.
    \EndIf
    \State \underline{Step 3 :} Using the computed RURs interpolate $h_0, h_1, h_{X_1}, \dots, h_{X_n}$.

    \noindent
    \Return{$\mathcal{F} = \{h_0, h_1, h_{X_1}, \dots, h_{X_n}\}$}
    \end{algorithmic}
\end{algorithm}

\begin{remark}
    In practice, it is even better to first try with $t$ being one of the variables. If none is working, then we start taking elements from $\mathcal{T}$. Moreover, experiments tend to show that we only need a few tries to find a generically separating form. 
\end{remark}

\subsection{Complexity analysis of Algorithm \textit{GRUR-EI}}

\subsubsection{Complexity of Step 0}

Thanks to Corollary \ref{bounds RUR}, we know that the polynomials of the GRUR have degree in $W$ bounded by $\kappa := (n+1)d_X^n d_W$. Therefore, we need at least $\binom{\kappa+s}{s} \leqslant (\kappa + 1)^s$ interpolation points. The idea is to find an estimate on the degrees of the polynomials defining $\mathcal{D}$, $\mathcal{L}$ and $\mathcal{S}$ and then construct a grid in $\Q^s$ that contains enough good interpolation points.

\begin{proposition}
     Let $q := \kappa' + nd_X^{n-1} d_X + n d_W$. The grid $\mathcal{A} = A_{k_0}^s$ where $k_0 = \lceil q+(\kappa+1)^s/q^{s-1} \rceil$ is formed of $O(n^s d_X^{2ns} d_W^s)$ points in $\Q$ and contains at least $(\kappa + 1)^s$ values $w \in \Q^s \backslash (\mathcal{D} \cup \mathcal{L} \cup \mathcal{S})$.
\end{proposition}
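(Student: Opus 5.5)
The plan is a Schwartz--Zippel/Combinatorial Nullstellensatz-type counting argument on a grid, after bounding the degree of a single polynomial $Q\in\Q[W]$ whose zero set contains $\mathcal{D}\cup\mathcal{L}\cup\mathcal{S}$. Take $A_k := \{1,\dots,k\}\subseteq\Q$ (any $k$ distinct rationals), so $\mathcal{A} = A_{k_0}^s$ has $k_0^s$ points.

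\emph{Step 1: one polynomial for the bad set.} Set $Q := P \cdot L \cdot \Sigma$, where:
\begin{itemize}
\item $P$ is the denominator of Theorem \ref{Bornes h(U)}, a multiple of $C=\Res(\tilde F_1,\dots,\tilde F_n)$ with $\deg_W \leqslant n d_X^{n-1}d_W$. By Remark \ref{def W_infty} and Remark \ref{generic sets}, $V(P)\supseteq \mathcal{W}_\infty\supseteq\mathcal{D}$.
\item $L$ is the product of the $n$ leading coefficients in $X$ of $f_1,\dots,f_n$, with $\deg_W L\leqslant n d_W$ and $V(L)=\mathcal{L}$. (The statement writes $nd_X^{n-1}d_X$ where $nd_X^{n-1}d_W$ is meant.)
\item $\Sigma$ is the numerator of $\Disc_{U_0}(\overline{h_0})$, or of a nonzero subresultant when $t$ only generically separates. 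Its degree in $W$ is what I take $\kappa'$ to denote.
\end{itemize}
To justify that $\mathcal{S}\subseteq V(P\Sigma)$: outside $V(P)$, $\phi_w(h_0)$ is the characteristic polynomial of $t$ (Corollary \ref{valeurs generiques GRUR}). Its squarefree part has the generic number of distinct roots, and $t$ is injective there exactly when a fixed subresultant does not vanish. The degree bound on $\Sigma$ follows from the Sylvester-matrix expression together with $\deg_W h_0\leqslant\kappa$ from Corollary \ref{bounds RUR}. This gives $\kappa' = O(d_X^n\kappa)$ or similar, which I will simply record as $\kappa'$. Hence $\deg Q\leqslant q$.

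\emph{Step 2: counting good points.} This step is the main obstacle, because Schwartz--Zippel by itself gives the bound $|V(Q)\cap A_k^s|\leqslant q k^{s-1}$, and the number of good points is then at least $k^s - qk^{s-1} = k^{s-1}(k-q)$. With $k=k_0\geqslant q + (\kappa+1)^s/q^{s-1}$ we get $k_0-q\geqslant (\kappa+1)^s/q^{s-1}$, and using $k_0\geqslant q$ gives $k_0^{s-1}(k_0-q)\geqslant q^{s-1}\cdot(\kappa+1)^s/q^{s-1} = (\kappa+1)^s$. This is exactly the required count.

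\emph{Step 3: size of the grid.} We have $k_0^s\leqslant (q+1+(\kappa+1)^s/q^{s-1})^s$. Since $q\geqslant\kappa$ up to constants, the term $(\kappa+1)^s/q^{s-1}=O(\kappa)$, so $k_0=O(q)$. With $q$ dominated by a term of order $n d_X^{2n} d_W$ (coming from $\kappa'$), this gives $O(n^s d_X^{2ns}d_W^s)$ points. The delicate part is pinning down $\kappa'$ so that it is $O(nd_X^{2n}d_W)$. For this I would bound the discriminant degree via $\deg_{U_0}\overline{h_0}\leqslant D\leqslant d_X^n$ and $\deg_W\leqslant\kappa$, after clearing the common denominator $P$ first. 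If this estimate comes out too large, the constant hidden in the $O$ absorbs the difference only if $\kappa'=O(nd_X^{2n}d_W)$, so that is the bound to aim for.

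Finally, I would also point out that interpolation requires the good points to be unisolvent for degree-$\kappa$ polynomials, not merely numerous. Since the statement only claims the count, I will stop there.
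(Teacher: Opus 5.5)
Your proposal is correct and is essentially the paper's proof. The paper also bounds the bad set by the zero set of the single polynomial $\Disc_{U_0}(\overline{\Delta})\cdot C\cdot L$ (your $P=\Lambda C$ has the same zero set and degree bound), with $\kappa' := 2(n+1)d_X^{2n}d_W$ coming from exactly your Sylvester-matrix estimate $(2D-1)\kappa\leqslant 2d_X^n\kappa$; it then applies Schwartz--Zippel on $[\![1,k]\!]^s$ and concludes $k_0=O(q)$. The differences are minor: you verify $k_0^{s-1}(k_0-q)\geqslant(\kappa+1)^s$ directly, where the paper locates the root of $g(x)=x^{s-1}(x-q)-(\kappa+1)^s$ by monotonicity. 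Also, $\kappa'$ should be fixed as this explicit bound rather than the actual degree of $\Sigma$, because $q\geqslant\kappa'\geqslant 2\kappa\geqslant\kappa+1$ is what gives $k_0\leqslant 2q$.
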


\begin{proof}
For $\mathcal{D}$, we recall that $\mathcal{D} \subseteq \mathcal{W}_\infty = V(C)$ where $C = \Res(\tilde{F_1}, \dots, \tilde{F}_n)$ and $\tilde{F}_i = F_i(0, X_1, \dots, X_n)$ (see Remark \ref{generic sets}). So that whenever $C(w) \neq 0$, the specialized system at $w$ has a finite number of solutions. We also recall that $\deg_W(C) \leqslant nd_X^{n-1}d_W$.

For $\mathcal{L}$, denote by $L(W) := \prod_{i=1}^n \LT_{<_X} (f_i) (W)$. Then $\mathcal{L} = V(L)$ and $L$ has degree in $W$ bounded by $n d_W$.

For $\mathcal{S}$, we write $h_0 (U_0 ; W) = \Delta(U_0 ; W)/p(W)$ with $\Delta \in \Z[U_0, W]$ and $p \in \Z[W]$. Recall that whenever $t$ is generically separating, $\mathcal{S} = V(\Disc_{U_0} (\overline{\Delta}))$. Since $\deg_W(\Delta) \leqslant \kappa = (n+1)d_X^n d_W$, \cite[Proposition 8.45]{BPR2006} explains that this discriminant has degree in $W$ bounded by $\kappa' := 2(n+1)d_X^{2n}d_W$.

Now to find enough good interpolation points, we build a grid $A_k \times \dots \times A_k = A_k^s$ where $A_k = [\![1,k]\!]$ and we look for a value $k_0$ such that $A_{k_0}^s$ contains enough parameters values for which neither $C(W)$, $L(W)$ nor $\Disc_{U_0}(\overline{\Delta}) (W)$ vanish. The main ingredient is Schwartz-Zippel theorem (\cite[Proposition 97, §12]{Zippel2012}). It states that a non-zero polynomial $F \in \Q[X_1, \dots, X_n]$ of total degree $d$ vanishes at most $d |S|^{n-1}$ times on the grid $S^n$. 

Here, we take $F$ to be $\Disc_T(\overline{\Delta}) \cdot C \cdot L \subseteq \Z[W]$ which has degree in $W$ bounded by $\kappa' + nd_X^{n-1} d_W + n d_W$. Denote by $q := \kappa' + nd_X^{n-1} d_X + n d_W$.We want to find $k_0 \geqslant 0$ such that $|A_{k_0}|^s - q |A_{k_0}|^{s-1} \geqslant (\kappa+1)^s$. Consider $g : x \mapsto x^s - q x^{s-1} - (\kappa+1)^s = x^{s-1}(x - q) - (\kappa+1)^s$. Clearly, $g(x) \leqslant 0$ for $x \in [0,q]$ so $k_0 \geqslant q$. In addition to that, $g$ is increasing on $[q, +\infty[$. Denote by $x_0$ the solution to the equation $g(x) = 0$. Then for all $x \geqslant x_0$ we have $g(x) \geqslant 0$. Moreover, $x_0 > q$ so that $x_0^{s-1} > q^{s-1}$ and then $x_0^{s-1} (x_0 - q) = (\kappa+1)^s > q^{s-1} (x_0 - q)$.
In other words, we have shown that $x_0 < q+(\kappa+1)^s/q^{s-1}$ and we thus choose $k_0 := \lceil q+(\kappa+1)^s/q^{s-1} \rceil$ where $\lceil \cdot \rceil$ is the ceil function. For complexity purposes, observe that $k_0 = O(q) = O(nd_X^{2n} d_W)$. \qed
\end{proof}

\subsubsection{Complexity of Step 2}

First, we describe the algorithm \textit{BlackBoxRUR} and give an estimate of its complexity. 

\begin{algorithm}[H]\label{BlackBoxRUR}
    \caption{\textit{BlackBoxRUR}}
    \begin{algorithmic}
    \Require Polynomials $g_1, \dots, g_n \in \Q[X_1, \dots, X_n]$ that form a zero-dimensional system, and a linear form $t \in \Q[X_1, \dots, X_n]$.
    \Ensure $\mathcal{F} = \{h_0, h_1, h_{X_1}, \dots, h_{X_n}\} \subseteq \Q[T]$ a RUR of $V(g_1, \dots, g_n)$.
    \State \underline{Step 1 :} Compute a Gröbner basis $\mathcal{G}$ of $\langle g_1, \dots, g_n \rangle$ for the ordering <$X$.
    \State \underline{Step 2 :} Compute the multiplication matrices $M_{X_1}, \dots, M_{X_n}$ of the multiplication by each variables in the quotient $\Q[X]/\langle g_1, \dots, g_n \rangle$.
    \State \underline{Step 3 :} If $t$ separates $V(g_1, \dots, g_n)$, compute $\mathcal{F}$ a RUR of $V(g_1, \dots, g_n)$ with regard to $t$.
    
    \noindent
    \Return{$\mathcal{F}$ or "False"}
    \end{algorithmic}
\end{algorithm}

\begin{proposition}
    A single call of \textit{BlackBoxRUR} for polynomials of degrees bounded by $d$ requires at most $\tilde{O}(nd^{3n})$ operations in $\Q$.
\end{proposition}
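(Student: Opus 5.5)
The plan is to bound separately the three steps of \textit{BlackBoxRUR} in terms of $D$, the dimension of $\Q[X]/\langle g_1,\dots,g_n\rangle$, and then to use the Bézout bound $D \leqslant d^n$. Since every polynomial has degree at most $d$, Bézout's theorem bounds the number of solutions counted with multiplicity by $d^n$, so $D\leqslant d^n$. This is the only place where the degree enters.

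\textbf{Step 1 (Gröbner basis).} I would invoke the complexity of $F_5$, or of linear algebra on Macaulay matrices, for a zero-dimensional system. For a regular-sequence-like system of $n$ polynomials of degree $d$, the degree of regularity is bounded by Lazard's bound $n(d-1)+1$. The Macaulay matrix in that degree has $\binom{nd}{n}=O(e^n d^n)$ columns, so Gaussian elimination costs $O(\binom{nd+n}{n}^\omega)$. For fixed $n$ this behaves like $\tilde O(d^{\omega n})\subseteq \tilde O(d^{3n})$. The main obstacle is this step. A system that is not in generic position (no regular sequence, or points at infinity) does not automatically satisfy Lazard's bound. I would handle it as the paper does for the parametric case: the specialized systems at grid points outside $\mathcal{W}_\infty\cup\mathcal{L}$ have no solutions at infinity, so the homogenized system is a regular sequence and Lazard's bound applies. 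The factors depending only on $n$, such as $\binom{nd}{n}\leqslant (ed)^n$, would be absorbed in the $\tilde O$ and in the factor $n$. If this absorption turns out to be too optimistic, I would state the cost as $O(d^{3n}\cdot c^n)$ and note it.

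\textbf{Step 2 (multiplication matrices).} This uses FGLM-style computation \cite[Proposition 2.1]{FGLM1993}, with $O(nD^3)=O(nd^{3n})$ operations.

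\textbf{Step 3 (RUR).} I would apply the algorithm of \cite{DRR2025}, which both certifies separation and computes the RUR in $\tilde O(D^2\delta + nD^2(D-\delta+1))\subseteq \tilde O(nD^3)=\tilde O(nd^{3n})$ operations, since $\delta\leqslant D$.

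Summing the three costs gives $\tilde O(nd^{3n})$.
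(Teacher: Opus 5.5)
Your proof has the same skeleton as the paper's. Bézout gives $D\leqslant d^n$. Step~1 is an F5/Macaulay-matrix computation up to degree $n(d-1)+1$, Step~2 is FGLM in $O(nD^3)$, and Step~3 is the DRR algorithm in $\tilde O(nD^3)$. The genuine difference is how you justify the degree-of-regularity bound in Step~1. The paper proves Lemma~\ref{regular}: zero-dimensionality forces $g_1,\dots,g_n$ to be a regular sequence in $\Q[X]$, via Krull's principal ideal theorem and a height count. It then applies the bound to every zero-dimensional input. You instead use the context of \textit{GRUR-EI}. The grid avoids $V(C)\cup\mathcal{L}$, so each specialized system has no solution at infinity, its highest-degree components have only the trivial common zero, and the homogenized sequence is regular. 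Your argument therefore covers only the calls actually made by \textit{GRUR-EI}, not an arbitrary zero-dimensional input as the specification of \textit{BlackBoxRUR} reads. In exchange, it supplies precisely the hypothesis under which the Macaulay bound on $d_{reg}$ holds, namely regularity of the highest-degree components. Regularity in $\Q[X]$, which is all Lemma~\ref{regular} gives, does not exclude solutions at infinity. For example, $X_1X_2-1,\ X_1-X_2^2$ is zero-dimensional but has the point $(X_0:X_1:X_2)=(0:1:0)$ at infinity. So that lemma does not by itself control $d_{reg}$.

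Two further remarks. First, the constant you worry about is a real problem, and the paper has it too. $\binom{nd+1}{n}^{\omega}$ is of order $(ed)^{\omega n}$; already for $d=2$ it is about $4^{\omega n}\geqslant 16^n$, against $d^{3n}=8^n$. An $e^{\omega n}$ factor cannot be absorbed into polylogarithmic factors or into a factor $n$. The paper simply writes $O(\binom{n+d_{reg}}{n}^{\omega})=O(nd^{\omega n})$. The honest conclusion of either argument is your fallback $O(c^n d^{3n})$, or $O(d^{3n})$ for fixed $n$. Second, DRR returns a RUR of the radical. To get the representation with multiplicities that is later interpolated, the paper also computes $\chi_t$ from $M_t$. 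This costs an extra $\tilde O(D^3)$ and does not change your total.
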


\begin{proof}
    We study separately every step of the algorithm.

    $\cdot$ \underline{Step 1 :} The key ingredient here is the following lemma that we directly prove. We recall that a family $g_1, \dots, g_n$ is a \textit{regular sequence} if for each $i \in [\![2, n]\!]$, $g_i$ is not a zero-divisor in the quotient $\Q[X]/\langle g_1, \dots, g_{i-1} \rangle$. 

    \begin{lemma}\label{regular}
       Let $g_1, \dots, g_n$ be a sequence of elements of $\Q[X_1, \dots, X_n]$ such that $V(\langle g_1, \dots, g_n\rangle)$ has dimension zero. Then the sequence $g_1, \dots, g_n$ is regular.
    \end{lemma}
    \begin{proof}
    To prove this, we use height theory for ideals (see for instance \cite{Matsumura1980}). Since $\Q[X]$ is a Cohen-Macaulay ring, proving that the sequence $g_1, \dots, g_n$ is regular is equivalent to showing that for each $i \in [\![1,n]\!]$ we have $\text{ht}(\langle f_1, \dots, f_i \rangle) = i$ where $\text{ht}$ denotes the height. In what follows, we denote by $\mathcal{J}_i := \langle g_1, \dots, g_i \rangle$, $1 \leqslant i \leqslant n$.

    Let $\mathfrak{p} \supseteq \mathcal{J}_n$ be a prime ideal. Then we have 
    \begin{equation}\label{eq dim}
    \text{ht}(\mathfrak{p}) + \dim_{Krull} \left( \frac{\Q[X]}{\mathfrak{p}} \right) = \dim_{Krull} (\Q[X]) = n
    \end{equation}
    where $\dim_{Krull}$ denotes the Krull dimension (see \cite[Theorem 1.8A]{Hartshorne1977}). Since $\mathfrak{p} \supseteq \mathcal{J}_n$ we have that $V(\mathfrak{p}) \subseteq V(\mathcal{J}_n)$. Then $V(\mathfrak{p})$ is also finite, and thus $\Q[X]/\mathfrak{p}$ has Krull dimension zero which means that Equation \eqref{eq dim} yields $\text{ht}(\mathfrak{p}) = n$ and thus that $\text{ht}(\mathcal{J}_n) = n$.

    In addition to that, Krull Principal Ideal Theorem (see \cite[Theorem 18 §12]{Matsumura1980}) states that for each $i \in [\![1,n]\!]$, $\text{ht}(\mathcal{J}_i) \leqslant i$. Adding a generator to an ideal can increase its height by at most one, so having $\text{ht}(\mathcal{J}_n) = n$ forces each of theses inequalities to be an equality. \qed
    \end{proof}

    During a call of \textit{BlackBoxRUR}, the sequence $g_1, \dots, g_n$ satisfies the assumptions of lemma \ref{regular} and is thus a regular sequence. The authors in \cite[Proposition 6]{BFS2005} show that the complexity of computing a Gröbner basis of $\langle g_1, \dots, g_n \rangle$ lies in $O(\binom{n + d_{reg}}{n}^\omega)$, with $\omega$ the exponent of linear algebra. A usual bound on $d_{reg}$ is the Macaulay bound $d_{reg} \leqslant nd - n + 1$. In other words, the Gröbner basis of $\langle g_1, \dots, g_n \rangle$ for the ordering $<_X$ can be computed within $O(\binom{n + d_{reg}}{n}^\omega) = O(nd^{wn})$ operations in $\Q$.

    $\cdot$ \underline{Step 2 :} With the knowledge of a Gröbner basis of $\langle g_1, \dots, g_n \rangle$, it is possible to compute the multiplication matrices $M_{X_1}, \dots, M_{X_n}$ with $O(nd^{3n})$ operations in $\Q$ (see \cite[Proposition 3.1]{FGLM1993}).

    $\cdot$ \underline{Step 3 :} For this step we refer to the algorithm proposed in \cite[Algorithm 6]{DRR2025}. Given the multiplication matrices $M_{X_1}, \dots, M_{X_n}$ of the multiplication by the variables in $\Q[X]/\langle g_1, \dots, g_n \rangle$ and the linear form $t$, this algorithm will either return "False" if $t$ does not separate $V(\langle g_1, \dots, g_n \rangle)$ or will compute a (reduced) RUR of the radical with regard to $t$ otherwise with $\tilde{O}(nd^{3n})$ operations in $\Q$. The passage from this reduced RUR to a RUR of $\langle g_1, \dots, g_n \rangle$ can be done by replacing the first polynomial of the reduced RUR of $V(\sqrt{\langle g_1, \dots, g_n \rangle})$ by the characteristic polynomial of $t$ (which can be computed with $\tilde{O}(d^{3n})$ operations in $\Q$). The overall complexity of this step then lies in $\tilde{O}(nd^{3n})$.

    Thus, the global complexity of algorithm \textit{BlackBoxRUR} lies in $\tilde{O}(nd^{3n})$. \qed
\end{proof}

\begin{corollary}
   Step 2 in Algorithm \textit{GRUR-EI} requires $\tilde{O}(n^{s+2} d_X^{2ns + 3n} d_W^s (d_X^{2n} + d_W^s))$ operations in $\Q$.
\end{corollary}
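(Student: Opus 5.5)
The plan is to multiply the number of calls to \textit{BlackBoxRUR} made in Step 2 by the cost of a single call, after bounding the degree of the specialized systems. By the proposition on Step 0, the grid $\mathcal{A} = A_{k_0}^s$ has $k_0^s$ points with $k_0 = \lceil q + (\kappa+1)^s/q^{s-1}\rceil$. For each $w \in \mathcal{A}$, the specialized polynomials $f_1(w,X), \dots, f_n(w,X)$ have degree in $X$ at most $d_X$. Since $w \notin \mathcal{D}$, they define a zero-dimensional system, so Lemma \ref{regular} applies and a single call costs $\tilde{O}(n d_X^{3n})$ operations in $\Q$. Computing the coefficients of $f_i(w,X)$ by evaluating their coefficients in $W$ at $w$ costs at most polynomial in $\binom{d_X+n}{n}\binom{d_W+s}{s}$ per point. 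I expect this to be dominated and will only check that it does not exceed the stated bound.

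The main work is to count the grid points precisely, rather than using the crude bound $k_0 = O(q)$, because the stated bound has the shape of a sum of two terms. I will write $k_0^s \le \big(q + (\kappa+1)^s/q^{s-1} + 1\big)^s$ and split according to which summand dominates, giving $k_0^s = O\big(q^s + (\kappa+1)^{s^2}/q^{s(s-1)}\big)$.

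A cleaner route, which I believe matches the intended count, is to note that the number of points actually needed in Step 2 is about $|A_{k_0}|^{s-1}\big(q + |A_{k_0}|\big)$: roughly $q k_0^{s-1}$ bad points plus $(\kappa+1)^s$ good ones. With $q = O(n d_X^{2n} d_W)$ and $\kappa + 1 = O(n d_X^n d_W)$, this gives the term $n^s d_X^{2ns} d_W^s \cdot d_X^{2n}$, which matches $n^{s}d_X^{2ns}d_W^s \cdot d_X^{2n}$ in the claimed bound. It also gives a contribution from $(\kappa+1)^s$ that accounts for the $d_W^s$ term. Multiplying by the per-call cost $\tilde{O}(n d_X^{3n})$, and absorbing constants and the factor $n^2$ coming from $q$ and $\kappa$, yields $\tilde{O}\big(n^{s+2} d_X^{2ns+3n} d_W^s (d_X^{2n} + d_W^s)\big)$.

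The step I expect to be the main obstacle is bookkeeping the exponents so that the product of the grid size and the per-call cost lands exactly on $d_X^{2ns+3n} d_W^s(d_X^{2n}+d_W^s)$. The likely reading is that the grid size is bounded by $O\big(n^{s+1} d_X^{2ns} d_W^s (d_X^{2n} + d_W^s)\big)$, obtained by bounding $k_0 \le q + (\kappa+1)^s/q^{s-1} + 1$ and then $k_0^s$ by $q^{s-1}k_0$ up to constants, treating $s$ as fixed. I would verify that inequality first, since the rest is routine multiplication.
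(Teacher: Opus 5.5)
There is a genuine gap. You read the two-term shape of the bound off the grid size, but the grid is not that large, and the two cost sources that actually produce the shape are either dismissed or missing.

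Since $\kappa' \geqslant 2\kappa$, we have $\kappa+1\leqslant q$. Hence $(\kappa+1)^s/q^{s-1}\leqslant q$, so $k_0\leqslant 2q+1$ and $k_0^s=O(q^s)=O(n^s d_X^{2ns}d_W^s)$. The number of points actually used, $\binom{\kappa+s}{s}+q\,k_0^{s-1}$, is of the same order. Neither $q\,k_0^{s-1}=O(q^s)$ nor $(\kappa+1)^s=O(n^sd_X^{ns}d_W^s)$ produces an extra factor $d_X^{2n}$ or $d_W^s$. So your claim that the count ``gives the term $n^s d_X^{2ns} d_W^s\cdot d_X^{2n}$'', and that $(\kappa+1)^s$ ``accounts for the $d_W^s$ term'', is an arithmetic error. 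Your own route $k_0^s\leqslant C q^{s-1}k_0$ also yields only $O(q^s)$, not $O(n^{s+1}d_X^{2ns}d_W^s(d_X^{2n}+d_W^s))$.

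In the paper, the $d_W^s$ summand comes from the per-point specialization cost, which you set aside as dominated. Computing $f_1(w,X),\dots,f_n(w,X)$ costs $O(n\binom{d_W+s}{s}\binom{d_X+n}{n})=O(nd_W^sd_X^n)$. This is not dominated by the $\tilde O(nd_X^{3n})$ of \textit{BlackBoxRUR} when $d_W^s>d_X^{2n}$. So one pass over the grid costs $\tilde O\big(n^{s+1}d_X^{2ns+n}d_W^s(d_X^{2n}+d_W^s)\big)$.

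The remaining factor $nd_X^{2n}$, which gives $n^{s+2}$ and $d_X^{2ns+3n}$, comes from something your proposal never considers. Step 2 is rerun, via ``Go back to Step 1'', for each candidate $t\in\mathcal{T}$ until a generically separating form is found, and $|\mathcal{T}|=O(nd_X^{2n})$.

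Your inflated single-pass number happens to coincide with the stated bound. But once you account for the retries, which the paper's count of Step 2 includes, your accounting would overshoot by a factor $nd_X^{2n}$. The correct argument multiplies three things: the tight grid count $O(n^sd_X^{2ns}d_W^s)$, the per-point cost $\tilde O(nd_X^n(d_X^{2n}+d_W^s))$ (evaluation plus \textit{BlackBoxRUR}), and the $O(nd_X^{2n})$ candidate forms.
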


\begin{proof}
For each value of parameters $w \in \mathcal{A}$, we want to specialize the $f_i'$s to $w$. One specialization can be done using $O(nd_W^s d_X^n)$. Indeed, each $f_i$ can be written as $f_i = \sum_{|\alpha| \leqslant d_X} c_{i, \alpha} X^{\alpha}$ with $c_{i, \alpha} \in \Q[W]$ and $\deg_W (c_{i, \alpha}) \leqslant d_W$. We can evaluate all monomials in $W$ of degree less or equal than $d_W$ using $\binom{d_W + s}{s}$ operations in $\Q$ by successively computing them in degree increasing order. As each $f_i$ is made of at most $\binom{d_X + n}{n}$ monomials in $X$, the total number of operations to compute $f_1(w, X), \dots, f_n (w, X)$ lies in $O(n \binom{d_W + s}{s} \binom{d_X + n}{n}) = O(nd_W^s d_X^n)$.

We need to perform at most $\binom{\kappa + s}{s} + (\kappa' + nd_X^{n-1}d_W + n d_W)k_0^{s-1} = O(n^s d_X^{2ns} d_W^s)$ evaluations. The total cost of these evaluations thus lies in $O(n^{s+1} d_X^{2ns + n} d_W^{2s} )$. Then, for each of these specialized systems we use the algorithm \textit{BlackBoxRUR} for $\tilde{O}(nd_X^{3n})$ additionnal operations in $\Q$. Computing all specialized RURs requires $\tilde{O} (n^{s+1} d_X^{2ns + 3n} d_W^s)$ operations in $\Q$. The cost of evaluating all the systems and using the black-box for each value $w \in \mathcal{A}$ therefore requires at most $\tilde{O}(n^{s+1}d_X^{2ns + n} d_W^s (d_X^{2n} + d_W^s))$ operations in $\Q$. 

Now, this part of Step 2 runs until a generically separating form is found. The set $\mathcal{T}$ from which the candidates are chosen has size bounded by $nd_X^n(d_X^n-1) = O(nd_X^{2n})$ and is guaranteed to contain at least one generically separating form so the preceeding computations are done at most $O(nd_X^{2n})$ times. The final complexity then lies in $\tilde{O}(n^{s+2} d_X^{2ns + 3n} d_W^s (d_X^{2n} + d_W^s))$. \qed
\end{proof}

\begin{remark}
    The result from Schwartz and Zippel guarantees that on the grid $\mathcal{A}$ there are at most $\kappa' k_0^{s-1}$ points on which $\Disc{U_0}(\overline{h}_0)$ can vanish. If during the computations of Step 2, \textit{BlackBoxRUR} returns more than $\kappa' k_0^{s-1}$ times "False" then we know that $\Disc_{U_0} \overline{h}_0$ is identically zero. In that case, $t$ is not generically separating and the algorithm \textit{GRUR-EI} should go back to step 1.  
\end{remark}

\subsubsection{Complexity of Step 3}

We denote by $I(n,d)$ a bound on the number of operations in $\Q$ to interpolate a polynomial in $n$ variables of total degree $d$ from its values on $\binom{d+n}{n}$ points.

\begin{proposition}\label{cost interpolation}
    Step 3 of algorithm \textit{GRUR-EI} requires $O(n d_X^n I(\kappa,s))$ operations in $\Q$.
\end{proposition}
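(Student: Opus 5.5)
We count how many univariate coefficients in $U_0$ must be interpolated and multiply by the cost of one multivariate interpolation. Each polynomial of $\mathcal{F}$ is written, following Theorem \ref{Bornes h(U)} and Corollary \ref{bounds RUR}, as a numerator in $\Z[W][U_0]$ divided by the common denominator $P(W)$. By Corollary \ref{bounds RUR}, every coefficient in $U_0$ of these numerators, as well as $P$ itself, has degree in $W$ at most $\kappa = (n+1)d_X^n d_W$. Moreover $\deg_{U_0} h_0 = D \leqslant d_X^n$ (Bézout, or $\deg_U(\Delta) \leqslant d_X^n$), and $h_1, h_{X_1}, \dots, h_{X_n}$ have degree at most $D-1$ in $U_0$. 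So each of the $n+2$ polynomials has at most $d_X^n$ coefficients to recover, each a polynomial in $s$ variables of degree at most $\kappa$.

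The first step is to fix the normalization so that the values coming from \textit{BlackBoxRUR} are actual evaluations of polynomials in $W$. Step 2 returns the specialized RUR $\phi_w(\mathcal{F})$, which by Corollary \ref{valeurs generiques GRUR} is well defined at every grid point since $w \notin \mathcal{D} \cup \mathcal{L} \cup \mathcal{S}$ (and $\mathcal{W}_\infty$ via Remark \ref{generic sets}). Its entries are values of rational functions. To reduce to polynomial interpolation, I would multiply the specialized data by the value of a known multiple of the denominator, namely $C(w)$, where $C = \Res(\tilde{F}_1,\dots,\tilde{F}_n)$ and $P = \Lambda C$. Then $C\cdot h_0$, $C\cdot h_1$ and $C\cdot h_{X_i}$ are polynomial in $W$ by Proposition \ref{facto u-res}, of degree at most $\kappa$. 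Computing $C(w)$ at each point is assumed to be dominated by, or included in, Step 2.

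This normalization is the main obstacle. Interpolation is only legitimate if the objects interpolated are polynomials of degree $\le\kappa$. I would argue via the factorization $\Res(F_0,\dots,F_n)=C\,h$ and the degree bound $\deg_W \Delta \leqslant \kappa$. Alternatively, since $h_0$ is monic in $U_0$, one could interpolate numerator and denominator together, with $P$ read off as the normalizing factor, at the cost of one extra interpolation, which does not change the order of complexity.

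Given this, there are $(n+2)\cdot d_X^n$ independent coefficient interpolations, plus one for the denominator. Each uses the $\binom{\kappa+s}{s}$ values produced in Step 2, which the grid of Step 0 guarantees to be in sufficiently general position by the Schwartz--Zippel argument; for instance the grid supports tensor-product interpolation. Each interpolation therefore costs $I(s,\kappa)$ operations in the paper's notation (written $I(\kappa,s)$ in the statement). The total is $O\big((n+3)d_X^n\, I(\kappa,s)\big)=O\big(nd_X^n I(\kappa,s)\big)$, which is the claimed bound.
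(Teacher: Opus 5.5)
Your argument is correct, and its count is the paper's. The paper charges one interpolation of cost $I(\kappa,s)$ to each of the $2(D+1)$ numerators and denominators of the coefficients of $h_0$ and to each of the $2(n+1)D$ of $h_1,h_{X_1},\dots,h_{X_n}$, all of $W$-degree at most $\kappa$, and uses $D\leqslant d_X^n$.

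The difference is that you justify why polynomial interpolation applies at all. You rescale the Step 2 output by $C(w)$, with $C=\Res(\tilde F_1,\dots,\tilde F_n)$ and $C\,h=\Res(F_0,\dots,F_n)$ by Proposition \ref{facto u-res}. This turns it into genuine values of polynomials of $W$-degree at most $\kappa$. The paper skips this point: a monic specialized $h_0$ does not determine separate values of a numerator and a denominator. Your version is therefore more complete and uses about half as many interpolations. The price is evaluating $C(w)$ at each grid point. That cost is absent from the paper's analysis of Step 2, so it must be charged somewhere rather than assumed dominated.

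Three side claims are wrong but not load-bearing:
\begin{itemize}
\item The fallback of reading $P$ off the monic $h_0$ fails for the same reason.
\item Remark \ref{generic sets} gives $\mathcal{D}\subseteq\mathcal{W}_\infty$, so it cannot yield $w\notin\mathcal{W}_\infty$. The condition $C(w)\neq 0$ that your normalization needs comes instead from the grid being selected via Schwartz--Zippel applied to $\Disc_{U_0}(\overline{\Delta})\cdot C\cdot L$.
\item Schwartz--Zippel only counts good grid points; it does not make $\binom{\kappa+s}{s}$ of them unisolvent, and a grid with holes does not support tensor-product interpolation. As in the paper, this is absorbed into the black-box cost $I(\kappa,s)$.
\end{itemize}
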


\begin{proof}
We need to interpolate the $2(D+1)$ numerators and denominators of $h_0$, and the $2(n+1)D$ numerators and denominators of $h_1, h_{X_1}, \dots, h_{X_n}$. All have degree in $W$ bounded by $\kappa$. That means the interpolation part requires $O(n d_X^n I(\kappa, s))$ operation in $\Q$. \qed
\end{proof}

\subsubsection{Global complexity}

We summarize the complexity result into the following theorem.

\begin{theorem}\label{cost GRUR-EI}
    In Algorithm \textit{GRUR-EI},
    \begin{itemize}
    	\item the evaluation part requires at most $\tilde{O}(n^{s+2} d_X^{2ns + 3n} d_W^s (d_X^{2n} + d_W^s))$ operations in $\Q$ ;
	\item the interpolation part requires at most $O(nd_X^n I(\kappa, s))$ operations in $\Q$. 
    \end{itemize}
    Therefore, Algorithm \textit{GRUR-EI} can be performed using $\tilde{O}(n^{s+2} d_X^{2ns + 3n} d_W^s (d_X^{2n} + d_W^s) + nd_X^n I(\kappa, s))$ operations in $\Q$.
\end{theorem}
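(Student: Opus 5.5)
The plan is to add the two bounds already obtained for the phases of Algorithm \textit{GRUR-EI}. The evaluation part consists of Steps 0 and 2, the interpolation part is Step 3, and the only remaining work is to check that nothing else contributes a dominant term.

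\emph{Evaluation part.} First I would recall the corollary on Step 2. Each call of \textit{BlackBoxRUR} costs $\tilde{O}(nd_X^{3n})$. The specialization of the $f_i$ at one point costs $O(nd_W^sd_X^n)$. The number of grid points used is $O(n^sd_X^{2ns}d_W^s)$, by the proposition on the grid $A_{k_0}^s$. At most $|\mathcal{T}|=O(nd_X^{2n})$ candidate forms are tried. Multiplying these gives the bound $\tilde{O}(n^{s+2}d_X^{2ns+3n}d_W^s(d_X^{2n}+d_W^s))$.

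Step 0 also needs a cost estimate. Enumerating the grid $A_{k_0}^s$ costs $O(k_0^s)=O(n^sd_X^{2ns}d_W^s)$. Detecting the bad points does not require computing $C$, $L$ or the discriminant explicitly: the black box reports non-separation, and the Schwartz--Zippel count controls how many points must be discarded (as in the remark following the Step 2 corollary). Both contributions are absorbed in the Step 2 bound.

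\emph{Interpolation part.} By Proposition \ref{cost interpolation}, interpolating the $O(nD)$ numerators and denominators costs $O(nd_X^nI(\kappa,s))$. Here $D\leqslant d_X^n$ by Bézout, and every polynomial has $W$-degree at most $\kappa$ by Corollary \ref{bounds RUR}. Interpolation is performed only once, after a generically separating form has been certified, so this term is not multiplied by $|\mathcal{T}|$.

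The main obstacle is bookkeeping rather than mathematics. One has to make sure the factor $|\mathcal{T}|$ applies only to the evaluation loop. One also has to check that the specialized RURs returned by \textit{BlackBoxRUR}, with the first polynomial replaced by the characteristic polynomial of $t$, coincide with $\phi_w(\mathcal{F})$ at the chosen points, since this is what makes interpolation legitimate. That coincidence follows from Corollary \ref{valeurs generiques GRUR} because the points avoid $\mathcal{D}\cup\mathcal{L}\cup\mathcal{S}$. One further point needs care: the numerators and denominators must be normalized consistently, for instance by fixing the common denominator $P(W)$ given by Theorem \ref{Bornes h(U)}, so that the interpolation is well posed. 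Summing the two parts then gives the stated total.
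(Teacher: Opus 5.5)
Your proposal is correct and follows the paper's route. There, this theorem is simply the sum of two earlier results: the Step 2 corollary (number of grid points times the per-point cost of specializing and calling \textit{BlackBoxRUR}, times $|\mathcal{T}|$) and Proposition \ref{cost interpolation}, with interpolation done only once. Your extra remarks on Step 0 and on why interpolation is legitimate go beyond what the paper argues here. One correction to them: Corollary \ref{valeurs generiques GRUR} needs the points to avoid $\mathcal{W}_\infty$ as well, since $\mathcal{D}\subseteq\mathcal{W}_\infty$ and not the reverse; the grid proposition does guarantee this, because its Schwartz--Zippel argument uses $C$.
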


We can now give a more precise expression using known complexities for interpolation. 

\begin{corollary}
	Algorithm \textit{GRUR-EI} can be performed using at most 
	$$\tilde{O}(sn^{2s+2} d_X^{2ns + 2n} d_W^{2s+1} + n^{s+1}d_X^{2ns + 5n} d_W^s)$$
	operations in $\Q$.
\end{corollary}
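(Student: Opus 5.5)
The plan is to start from Theorem \ref{cost GRUR-EI} and substitute an explicit bound for $I(\kappa,s)$, the cost of dense multivariate interpolation of a polynomial in $s$ variables of total degree $\kappa$ from $\binom{\kappa+s}{s}$ points. For this I will take the classical quasi-linear bound $I(\kappa,s)=\tilde{O}\big(s\binom{\kappa+s}{s}\,\kappa\big)$. It holds for interpolation on a suitable grid or on points in general position; for example, one can use Zippel-type or Kronecker-substitution dense schemes, which cost a softly linear number of operations per coefficient up to factors $s\kappa$. Then we bound $\binom{\kappa+s}{s}\leqslant(\kappa+1)^s=O(n^sd_X^{ns}d_W^s)$ with $\kappa=(n+1)d_X^nd_W$. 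This gives
\[
nd_X^n I(\kappa,s)=\tilde{O}\big(nd_X^n\cdot s\cdot n^s d_X^{ns}d_W^s\cdot n d_X^n d_W\big)=\tilde{O}\big(s n^{s+2}d_X^{ns+2n}d_W^{s+1}\big).
\]
This is dominated by the first claimed term $sn^{2s+2}d_X^{2ns+2n}d_W^{2s+1}$. Concretely, I would make the interpolation points come from the grid $\mathcal{A}$ of size $k_0^s=O(n^sd_X^{2ns}d_W^s)$, which carries an extra factor $n^sd_X^{ns}d_W^s$; that is exactly where the exponents $2s$, $2ns$, $2s+1$ in the statement come from. So I will charge interpolation on the whole grid, at cost $\tilde{O}(s\,k_0^s\,\kappa)$ per polynomial.

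Next I expand the evaluation part $\tilde{O}(n^{s+2}d_X^{2ns+3n}d_W^s(d_X^{2n}+d_W^s))$ into its two terms. The first is $n^{s+2}d_X^{2ns+5n}d_W^s$. The second is $n^{s+2}d_X^{2ns+3n}d_W^{2s}$, which is dominated by the interpolation term $sn^{2s+2}d_X^{2ns+2n}d_W^{2s+1}$ only up to the factor $d_X^n$. To handle that factor, I would re-examine the specialization cost in the proof of the corollary for Step 2. Evaluating the coefficients over the full grid can be done by multipoint evaluation along each coordinate rather than naively. That brings the $d_W^{2s}$ cost down to quasi-linear in the grid size times $\binom{d_W+s}{s}$ only once per coefficient, so the term becomes absorbed.

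For the remaining power of $n$ in the $d_X^{2ns+5n}$ term, the claimed $n^{s+1}$ instead of $n^{s+2}$ should come from a sharper count of the number of tried linear forms. Heuristically a constant number of tries suffices, or the factor $n$ is absorbed in the $\tilde{O}$ through the $\log$-type dependence of $|\mathcal{T}|$. Failing that, I would accept the $n^{s+2}$ factor.

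The main obstacle is this bookkeeping of powers. The stated exponents do not follow by blind substitution, so I expect to have to justify a specific interpolation routine and a refined evaluation step. I would first try to fix the interpolation-on-grid cost as $\tilde{O}(s\,k_0^s\kappa)$ per coefficient and check that the total collapses to exactly the two terms of the statement.
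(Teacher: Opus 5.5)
\textbf{There is a genuine gap.}

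\textbf{The second term.} Your expansion of the evaluation part of Theorem~\ref{cost GRUR-EI} is correct, and you are right that blind substitution does not give the stated exponents. The paper's proof only substitutes its interpolation estimate into that theorem and never reconciles the evaluation terms. Your treatment of the $n^{s+2}d_X^{2ns+3n}d_W^{2s}$ term can be made to work. The simplest fix, though, is that specializing $f_1,\dots,f_n$ at the grid points does not depend on $t$. It is therefore done once, at cost $O(n^{s+1}d_X^{2ns+n}d_W^{2s})$, which is below $sn^{2s+2}d_X^{2ns+2n}d_W^{2s+1}$.

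The gap is the step from $n^{s+2}d_X^{2ns+5n}d_W^s$ to the claimed $n^{s+1}d_X^{2ns+5n}d_W^s$. The former counts BlackBoxRUR on $O(n^sd_X^{2ns}d_W^s)$ points, repeated for up to $|\mathcal{T}|=O(nd_X^{2n})$ forms. Neither of your reasons holds:
\begin{itemize}
\item a heuristic constant number of tries says nothing about the worst case;
\item $|\mathcal{T}|$ is polynomial in $n$ and $d_X$, not logarithmic.
\end{itemize}
``Accepting $n^{s+2}$'' means not proving the statement. What does work is a notational observation. For $d_X\geqslant 2$ one has $\log_2(n^{s+1}d_X^{2ns+5n}d_W^s)\geqslant 5n$. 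So one extra factor $n$ is logarithmic in the bound and is absorbed by the paper's definition $f=\tilde{O}(g)\iff f=O(g\log_2^k g)$.

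\textbf{The first term.} You misidentify its source. The paper uses the Kaltofen--Yagati cost $O(\kappa s\log_2(s)\,\theta M(\theta)\log_2(\theta))$ with $\theta\leqslant(\kappa+1)^s$ and $M(\theta)=\tilde{O}(\theta)$. This is quasi-quadratic in the number of monomials and gives $I(\kappa,s)=\tilde{O}(sn^{2s+1}d_X^{2ns+n}d_W^{2s+1})$. Multiplying by the $O(nd_X^n)$ coefficients to interpolate yields exactly $sn^{2s+2}d_X^{2ns+2n}d_W^{2s+1}$. The size of $\mathcal{A}$ plays no role.

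Your arithmetic there is also off. The ratio $k_0^s/(\kappa+1)^s$ is only of order $d_X^{ns}$ (up to a $2^{O(s)}$ factor), not $n^sd_X^{ns}d_W^s$. Your charge is $nd_X^n\cdot sk_0^s\kappa=\tilde{O}(sn^{s+2}d_X^{2ns+2n}d_W^{s+1})$. Since that is smaller, it is harmless for an upper bound.

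However, your cheaper estimate $I(\kappa,s)=\tilde{O}(s\binom{\kappa+s}{s}\kappa)$ is unjustified in this setting. Softly linear dense interpolation requires a structured node set (tensor or simplex grid) all of whose points are usable. ``General position'' gives unisolvence, not a fast algorithm. The usable points from Step~0 are just the points of $A_{k_0}^s$ off $V(\Disc_{U_0}(\overline{\Delta})\,C\,L)$, and they have no structure. You must either show how to pick a structured set of good nodes, or fall back on the Kaltofen--Yagati bound. The latter is exactly what the first term of the statement pays for.
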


\begin{proof}
The algorithm proposed in \cite{KY1989} allows to interpolate a polynomial in $n$ variables, with total degree $d$ and at most $\theta$ terms with $O(dn\log_2(n) \theta M(\theta)\log_2(\theta))$ arithmetic operations. Here $M(\theta)$ is the complexity bound for multiplying two univariate polynomials of degree less than $\theta$. We can take the classical Schönhage bound with $M(\theta) = O(\theta \log_2(\theta) \log_2 \log_2(\theta))$. In our case, $d \leqslant \kappa$ and $\theta \leqslant (\kappa + 1)^s$, and with our notations, the cost of a single interpolation is $I(\kappa, s) = \tilde{O} (sn^{2s+1} d_X^{2ns + n} d_W^{2s+1})$ arithmetic operations. 
By substituting that in the complexity found in Theorem \ref{cost GRUR-EI} we get that the GRUR polynomials can be computed with $\tilde{O} \big(sn^{2s+2} d_X^{2ns + 2n} d_W^{2s+1}\big)$ operations in $\Q$. \qed
\end{proof}

\subsection{Real roots classification} 

This problem consists in finding polynomials formulas describing semi-algebraic sets $\mathcal{S}_1, \dots, \mathcal{S}_l \subseteq \R^s$ such that their union is dense in $\R^s$, and above each of them, the specialized systems have a constant number of real roots. In addition to that, we want to be able to compute one point in each of these connected component.

In \cite{LSeD2022}, the authors tackle this problem by computing the parametric Hermite matrix, that is, the matrix $\mathcal{H} := \left(\Trace^{\mathcal{A}_W}(p_i p_j)\right)_{1 \leqslant i, j \leqslant D}$ where we recall that $\mathcal{B} = \{[p_1]_{\mathcal{A}_W}, \dots, [p_D]_{\mathcal{A}_W}\}$ is a basis of $\mathcal{A}_W$. The authors show that the looked for semi-algebraic sets can be described by the principal minors of the Hermite matrix as they encode the parameters values for which the signature of $\mathcal{H}$ changes (and thus the number of distinct real roots). Their algorithm is general and they show that it runs with at most $\tilde{O} \left( \binom{s+\mathfrak{D}}{s} 2^{3s} n^{2s + 1} d^{3ns + 2(n+s) + 1}\right)$ where $d$ is the total degree of the equations and $\mathfrak{D} = n(d-1)d^n$. To reach this complexity, the author make some assumptions on the Gröbner basis computed for the block ordering $<_{X, W}$ that give a great control on the potential explosion of the degrees in the parameters during the computation of $\mathcal{H}$ (see \cite[Assumptions B and C]{LSeD2022}). We propose here a general approach with an estimated higher complexity that do not depend on these hypothesis.

The formulas describing the semi-algebraic sets $\mathcal{S}_1, \dots, \mathcal{S}_l$ can be derived from the computation of a Sturm-Habicht (or Sylvester-Habicht) sequence, or rather the leading coefficients of this sequence. We stick to the definition given in \cite[Notation 4.21]{BPR2006} and our main tool will be \cite[Theorem 3]{GLRR1989} to count to number of real roots of the system. Given $\mathbb{A}$ a ring and $P, Q \in \mathbb{A}[U_0]$, denote by $\textbf{sres} (P,Q)$ the \textit{signed subresultant sequence} of $P$ and $Q$. The \textit{Sturm-Habicht sequence} of $P$ is given by $\textbf{sres}(P, P')$.

The fact that $\mathcal{F} = \{h_0, h_1, h_{X_1}, \dots, X_n\}$ forms a generic RUR of $V(\mathcal{I}_W)$ can be rephrased as the fact that for almost all $w \in \C^s$, the morphism of algebraic sets 
$$\fonc{\Psi_w}{V(\phi_w(h_0))}{V(\mathcal{I}(w))}{y}{\left(\frac{\phi_w(h_{X_1})(y)}{\phi_w(h_1) (y)}, \dots, \frac{\phi_w(h_{X_n}) (y)}{\phi_w(h_1)(y)}\right)}$$
preserves multiplicities, and also the number of real roots. In other words, the number of real roots of $V(\mathcal{I}(w))$ is exactly the number of real roots of $\phi_w(h_0) = \chi^{\mathcal{I}(w)}$. We recall that $h_0$ can be computed under the form $h_0 = \Delta/P$ with $\Delta \in \Z[W][U_0]$ and $P \in \Z[W]$. The number of real roots of $h_0$ is the same as the number of real roots of $\Delta$.

Since $\phi_w(\Delta)$ is a univariate polynomial, \cite[Theorem 3]{GLRR1989} guarantees that the number of its real roots is given by the number of sign changes (see \cite[§1]{GLRR1989}) in the sequence $\textbf{sres}(\Delta), \phi_w(\Delta)')$ computed with regard to the variable $U_0$. The great advantage of using subresultants is that they have really great specialization properties. Indeed, for each $w \in \C^s$ such that the leading coefficient of $\Delta$ in $U_0$ does not vanish at $w$, the Sturm-Habicht sequence specializes in the sense that $\phi_w \big(\textbf{sres}(\Delta, \Delta') \big) = \textbf{sres}\big( \phi_w(\Delta), \phi_w(\Delta)' \big)$.

The Sturm-Habicht sequence $\textbf{sres}(\Delta, \Delta')$ is made of polynomials of $\Z[W]$ that describe the semi-algebraic sets $\mathcal{S}_1, \dots, \mathcal{S}_l \subseteq \R^s$ on which the number of real roots of $V(\mathcal{I}(w))$ is constant. The excluded parameters values are those for which the GRUR does not specialize, that is those contained in $\mathcal{D} \cup \mathcal{L} \cup \mathcal{W}_\infty \cup \mathcal{S}$ (see Corollary \ref{valeurs generiques GRUR}) and those for which the leading coefficient of $\Delta$ vanishes. The union of these algebraic sets is of measure zero so the union of the $S_i$'s is indeed dense in $\R^s$. This description leads to the following algorithm whose complexity is studied in Proposition \ref{complexite rrc}.

\begin{algorithm}[H]\label{RealRootClassification}
    \caption{RealRootClassification}
    \begin{algorithmic}
    \Require Polynomials $f_1, \dots, f_n \in \Z[W,X]$ generating a generically zero-dimensional ideal $\mathcal{I}$.
    \Ensure Polynomials $P_1, \dots, P_D \in \Z[W]$ that are the formulas of the semi-algebraic sets solving the real roots classification problem along at least one point in each of them. 

    \State \underline{Step 1 :} Compute $\mathcal{F} = \{h_0, h_1, h_{X_1}, \dots, h_{X_n}\}$ a GRUR of $V(\mathcal{I}_W)$ with $h_0 = \Delta/P$.
    \State \underline{Step 2 :} The subresultants sequence $\textbf{sres}\big(\Delta, \Delta' \big) = \{P_1, \dots, P_D\}$.
    \State \underline{Step 3 :} Compute a set $\mathcal{P}$ of at least one point per connected component of the semi-algebraic set $\Q = \{P_1 \neq 0, \dots, P_D \neq 0\}$.

    \Return{$P_1, \dots, P_D$ and $\mathcal{P}$}
    \end{algorithmic}
\end{algorithm}

\begin{proposition}\label{complexite rrc}
    Given polynomials $f_1, \dots, f_n \in \Z[W, X]$ generating a generically zero-dimensional ideal $\mathcal{I}$, Algorithm \ref{RealRootClassification} solves the real roots classification problem and can be performed using at most
    $$\tilde{O} \big( 2^{6s+1} n^{3s+1} d_X^{7ns+3n} d_W^{3s+1} \big)$$
    operations in $\Q$.
\end{proposition}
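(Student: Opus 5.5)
The proof splits into correctness and a cost analysis of the three steps of Algorithm \ref{RealRootClassification}, with Step 3 expected to dominate.

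\emph{Correctness.} By Corollary \ref{valeurs generiques GRUR}, for $w$ outside $\mathcal{D} \cup \mathcal{L} \cup \mathcal{W}_\infty \cup \mathcal{S}$ the map $\Psi_w$ is a bijection preserving real points. Hence the number of real roots of $V(\mathcal{I}(w))$ equals the number of real roots of $\phi_w(\Delta)$. If we also avoid $V(\LC_{U_0}(\Delta))$, the Sturm--Habicht sequence specializes, so by \cite[Theorem 3]{GLRR1989} this count is a function of the signs of $P_1(w), \dots, P_D(w)$. It is therefore constant on each connected component of $\{P_1 \neq 0, \dots, P_D \neq 0\}$ (after adding $\LC_{U_0}(\Delta)$, $C$ and $L$ to the list if needed). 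The excluded set is a proper algebraic set, so the union of these components is dense. Picking one point per component then answers the classification problem.

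\emph{Step 1.} The cost is given by the Corollary following Theorem \ref{cost GRUR-EI}, namely $\tilde{O}(sn^{2s+2} d_X^{2ns+2n} d_W^{2s+1} + n^{s+1} d_X^{2ns+5n} d_W^s)$. I expect this to be dominated by the final bound.

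\emph{Step 2.} I would compute the subresultant sequence by evaluation/interpolation in $W$, as in Algorithm GRUR-EI. The inputs are $\deg_{U_0}\Delta \le D \le d_X^n$ and $\deg_W \Delta \le \kappa = (n+1)d_X^n d_W$. By \cite[Proposition 8.45]{BPR2006}-type bounds, each $P_j$ is a minor of size at most $2D$ of a Sylvester matrix, so $\deg_W P_j \le 2D\kappa = O(n d_X^{2n} d_W)$. We then evaluate at $O((n d_X^{2n} d_W)^s)$ grid points, run a univariate subresultant computation in $\tilde{O}(D^2)$ at each point, and interpolate the $D$ polynomials. This costs roughly $\tilde{O}(n^{s} d_X^{2ns+2n} d_W^s)$ plus interpolation, again below the target.

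\emph{Step 3.} This is the dominant step. I would invoke a standard sampling algorithm for connected components of a semi-algebraic set defined by $D$ polynomials in $s$ variables of degree at most $\delta$. One option is \cite[Theorem 13.22]{BPR2006}, with cost $D^{s+1}\delta^{O(s)}$. For the explicit exponent, I would use the bound $(D+1)^{s+1}\,2^{O(s)}\,\delta^{3s+1}$-style estimate (e.g.\ $2^{6s+1}$ from the critical point method with a $2^{O(s)}$ sign-combination factor). Substituting $D \le d_X^n$ and $\delta = O(n d_X^{2n} d_W)$ gives
$$d_X^{n(s+1)} \cdot 2^{6s+1} \bigl(n d_X^{2n} d_W\bigr)^{3s+1} = 2^{6s+1} n^{3s+1} d_X^{7ns+3n} d_W^{3s+1}.$$
This matches the claim, and the costs of Steps 1 and 2 are absorbed.

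The main obstacle is choosing the exact sampling theorem whose constants yield exactly $2^{6s+1}$ and the exponent $3s+1$. The degree bound on the $P_j$ must also be checked carefully, since using $2D\kappa$ rather than $D\kappa$ changes constants only.
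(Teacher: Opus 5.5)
Your overall structure matches the paper's, step for step.
\begin{itemize}
\item Step~1 uses the complexity corollary for \emph{GRUR-EI}.
\item Step~2 produces the principal Sturm--Habicht coefficients $P_1,\dots,P_D$ of $(\Delta,\Delta')$. Their degrees in $W$ are bounded by $2D\kappa\leqslant 2(n+1)d_X^{2n}d_W$, which is exactly the paper's $\kappa'$ from \cite[Proposition 8.45]{BPR2006}.
\item Step~3 samples the connected components of $\{P_1\neq 0,\dots,P_D\neq 0\}$ and dominates the cost.
\end{itemize}
Doing Step~2 by evaluation/interpolation in $W$ is a harmless change. The paper instead uses binary segmentation plus fast univariate subresultants \cite[Corollary 11.18]{GG2003}, and both costs are far below the target. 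Adding $C$, $L$, $\LC_{U_0}(\Delta)$ to the list makes explicit an exclusion of bad parameters that the paper leaves implicit. It costs nothing, since their degrees are at most $\kappa'$.

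The gap is Step~3, which carries the whole bound: you never identify a result that delivers it. A sampling bound of the form $D^{s+1}\delta^{O(s)}$ only yields $d_X^{n(s+1)}(nd_X^{2n}d_W)^{O(s)}$. A constant hidden in an exponent is not absorbed by $\tilde{O}$, so this cannot give $d_X^{7ns+3n}d_W^{3s+1}$.

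Your substitute ``$(D+1)^{s+1}2^{O(s)}\delta^{3s+1}$ with $2^{6s+1}$'' is fitted to the answer rather than derived, and its bookkeeping does not close. With your own $\delta=2(n+1)d_X^{2n}d_W$, the factor $\delta^{3s+1}$ already contributes $2^{3s+1}$. A further $2^{6s+1}$ from the sampling algorithm would therefore give $2^{9s+2}$.

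The target factors as $2^{3s}\,(d_X^{n})^{s+1}\,(2nd_X^{2n}d_W)^{3s+1}$. So what you need is a concrete sampling theorem costing $\tilde{O}(2^{3s}m^{s+1}\delta^{3s+1})$ for $m$ polynomials of degree at most $\delta$ in $s$ variables, applied with $m=D\leqslant d_X^n$ and $\delta=\kappa'$. The paper fills this step by citing the probabilistic algorithm of \cite[Corollary 3]{LSeD2022}. Its proof writes the resulting cost with exponents $d_X^{7ns+2n+s}d_W^{3s^2+1}$ that do not match the statement, but the factorization above is what reproduces the claimed bound. Without such a citation, your argument only establishes a $\delta^{O(s)}$-type bound.

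Finally, there is a point neither you nor the paper handles. The polynomial $h_0$ keeps multiplicities, so it may be generically not squarefree. In that case the principal subresultant coefficients of index below $\deg\gcd(\Delta,\Delta')$ vanish identically, and $\{P_1\neq0,\dots,P_D\neq0\}$ is empty. Those $P_j$ must be discarded before sampling.
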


\begin{proof}
    The discussion held in the previous subsection showed that we can compute the elements of a GRUR of $V(\mathcal{I}_W)$ with at most $\tilde{O} \big(sn^{2s+2} d_X^{2ns + 2n} d_W^{2s+1}\big)$ operations in $\Q$. That represents the cost of Step 1. We write $h_0 = \Delta/P$ where $\Delta \in \Q[W,T]$ and $P \in \Q[W]$. 
    
    Next, we need to compute the Sturm-Habicht sequence $\textbf{sres} (\Delta, \Delta')$. Thanks to the bounds found in corollary \ref{bounds RUR}, we know that $\Delta$ has total degree bounded by $\kappa + d_X^n$ and height bounded by $\tilde{O} \big(nd_X^n \tau + nd_X^n d_W \log_2(s+1)\big)$. The idea is to use binary segmentation to transform $\Delta$ and $\Delta'$ into polynomials in $\Z[Y][T]$, compute the subresultants sequence of the transformed polynomials and then recover the result in $\Z[W]$ using inverse binary segmentation. To do that, we need bounds on both the height and degrees of the expected result. From \cite[proposition 8.45]{BPR2006} we know that each polynomial of $\textbf{sres}(\Delta, \Delta')$ has degree bounded by $\kappa' := 2(n+1) d_X^{2n} d_W$.

    The ring homomorphism $\Theta : \Z[W][T] \rightarrow \Z[Y][T]$ such that $W_1 \mapsto Y$, $W_2 \mapsto Y^{\kappa' + 1}$, $\dots$, $W_s \mapsto Y^{\kappa'^{s-1} + 1}$ where $T$ is a new variable allows us to obtain $\textbf{sres}(\Delta, \Delta')$ through the computation of $\textbf{sres}(\Theta(\Delta), \Theta(\Delta')) = \Theta(\textbf{sres}(\Delta, \Delta'))$ (see for example \cite[§3]{DET2009}, \cite[Theorem 3.3]{Klose1995}).
    Since $\Delta$ has degree in $W$ bounded by $\kappa$, we know that $\Theta(\Delta)$ has degree in $Y$ bounded by $\kappa(\kappa'^{s-1} + 1)$, that is, $\deg_Y(\Delta) = O(2^{s-1} n^s d_X^{2ns - n} d_W^{s})$. According to \cite[Corollary 11.18]{GG2003}, we can compute $\textbf{sres}(\Theta(\Delta), \Theta(\Delta'))$ using $\tilde{O} (2^{s-1} n^s d_X^{2ns} d_W^s)$ operations in $\Q$. From that, we can decode the leading coefficients of every polynomial in $\textbf{sres}(\Theta(\Delta), \Theta(\Delta'))$ and obtain polynomials $P_1, \dots, P_D \in \Z[W]$. The overall cost of Step 2 is thus $\tilde{O} (2^{s-1} n^s d_X^{2ns} d_W^s)$ operations in $\Q$.
    
    Now, \cite[Theorem 3]{GLRR1989} states that for each parameters value $w \in \R^s$, the number of real roots of $\Delta$ is given by the modified number of sign variations. This number changes if and only if some $P_i$ vanish when evaluated at $w$. The semi-algebraic $\mathcal{Q}$ defined by $P_1 \neq 0, \dots, P_D \neq 0$ is such that the number of real roots of $\Delta$ is constant on each of its connected components.

    The final step is to actually compute one point per connected component of $\mathcal{Q}$. This can be done using for instance the probabilistic algorithm proposed in \cite[Corollary 3]{LSeD2022}. The latter requires $\tilde{O} \big( 2^{6s+1} n^{3s+1} d_X^{7ns+2n + s} d_W^{3s^2+1} \big)$ operations in $\Q$ to compute a set of at most $O(4^s n^s d_X^{3ns} d_W^s)$ points in $\Q^s$ that meets at least one time each connected component of $\mathcal{Q}$.

    The complexity of the overall approach is dominantly held by the very last step. \qed
\end{proof}

\begin{remark}
	We only need $h_0$ to perform Algorithm \ref{RealRootClassification}. However, since the overall complexity is held by the last step, that would not change the complexity bound.
\end{remark}

\begin{remark}
	One could use the half-gcd algorithm as presented in \cite[Algorithm 4]{Lecerf2019} and reach a complexity lying in $\tilde{O}(n^s d_X^{2ns + n} d_W^s)$ for Step 2 by using the known bound on the degrees in $W$ appearing in the subresultants sequence. However, since the complexity is dominantly held by the last step, that would not change the overall complexity.
\end{remark}

\section{Appendix}

\subsection{Stickelberger's theorem}

For simplicity, we refer to this result as Stickelberger's theorem. It can be found in \cite[§2.3.1 Corollary 3.6]{CCS1999}.

\begin{theorem}[Stickelberger]\label{Stickelberger}
	Let $K$ be a field and $\mathcal{I} \subseteq K[X_1, \dots, X_n]$ a zero-dimensional ideal. Let $m_p$ be the endomorphism of multiplication by $p$ in the quotient $K[X_1, \dots, X_n]/\mathcal{I}$. Then :
	\begin{itemize}
		\item the characteristic polynomial of $m_p$ is $\prod_{x \in V(\mathcal{I})} (T - p(x))^{\mu(x)}$ ;
		\item the determinant of $m_p$ is $\prod_{x \in V(\mathcal{I})} p(x)^{\mu(x)}$ ;
		\item the trace of $m_p$ is $\sum_{x \in V(\mathcal{I})} \mu(x) p(x)$ ;
	\end{itemize} 
	where $\mu(x)$ is the multiplicity of $x$.
\end{theorem}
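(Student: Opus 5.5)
The statement to prove is Stickelberger's theorem: for a zero-dimensional ideal $\mathcal{I}\subseteq K[X_1,\dots,X_n]$ and $p \in K[X]$, the characteristic polynomial of $m_p$ on $\mathcal{A}=K[X]/\mathcal{I}$ is $\prod_{x\in V(\mathcal{I})}(T-p(x))^{\mu(x)}$, and the determinant and trace are as stated. The plan is to reduce to an algebraically closed field and then decompose $\mathcal{A}$ into local pieces, one for each point of $V(\mathcal{I})$.

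First, the characteristic polynomial, determinant and trace of $m_p$ are invariant under extension of scalars. Indeed, the matrix of $m_p$ in a $K$-basis of $\mathcal{A}$ is also its matrix on $\overline{K}\otimes_K \mathcal{A}=\overline{K}[X]/\mathcal{I}\overline{K}[X]$. So we may assume $K$ is algebraically closed. Here $V(\mathcal{I})$ is understood over $\overline{K}$, and $\mu(x)$ is the dimension of the local algebra.

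Since $\mathcal{I}$ is zero-dimensional, $V(\mathcal{I})=\{x_1,\dots,x_r\}$ is finite. The Chinese remainder theorem, applied to the primary decomposition $\mathcal{I}=\bigcap_k \mathcal{Q}_k$ with $\sqrt{\mathcal{Q}_k}=\mathfrak{m}_{x_k}$, gives
$$\mathcal{A}\cong \prod_{k=1}^r \mathcal{A}_k,\qquad \mathcal{A}_k:=K[X]/\mathcal{Q}_k .$$
The CRT applies because the maximal ideals $\mathfrak{m}_{x_k}$ are pairwise distinct, hence suitable powers of them, and therefore the $\mathcal{Q}_k$, are pairwise comaximal. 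We take $\mu(x_k):=\dim_K \mathcal{A}_k$, which is the standard definition of the multiplicity. Multiplication by $p$ respects this product decomposition, so $m_p$ is block diagonal. Consequently its characteristic polynomial is the product of the characteristic polynomials of the blocks $m_p|_{\mathcal{A}_k}$.

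On a single block, write $p=p(x_k)+(p-p(x_k))$. The element $q:=p-p(x_k)$ lies in $\mathfrak{m}_{x_k}$, which is the radical of $\mathcal{Q}_k$, so some power $q^N$ lies in $\mathcal{Q}_k$. Thus $m_q$ is nilpotent on $\mathcal{A}_k$, and $m_p|_{\mathcal{A}_k}=p(x_k)\,\mathrm{Id}+m_q$ has the single eigenvalue $p(x_k)$. Hence its characteristic polynomial is $(T-p(x_k))^{\mu(x_k)}$. Taking the product over $k$ gives the first item.

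The second and third items then follow by reading off coefficients of the characteristic polynomial: the determinant is $(-1)^D$ times its constant term, and the trace is minus its subleading coefficient. This gives $\prod_x p(x)^{\mu(x)}$ and $\sum_x \mu(x)p(x)$ respectively.

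The main obstacle is the decomposition step. It requires verifying the comaximality of the primary components, which comes from the distinctness of the points, and fixing the definition of multiplicity so that it coincides with the local dimension $\dim_K \mathcal{A}_k$. Once that is in place, the rest is linear algebra.
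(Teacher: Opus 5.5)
Your proof is correct. The paper gives no proof of its own; it simply quotes the result from the literature. Your argument is the standard proof of Stickelberger's theorem found in such references: base change to $\overline{K}$, splitting $K[X]/\mathcal{I}$ via the Chinese remainder theorem into the local factors $K[X]/\mathcal{Q}_k$, then nilpotency of multiplication by $p-p(x_k)$ on each factor. Your base-change step is exactly what the paper needs, since it applies the theorem over the non-closed field $K=\C(W)$ with the points of $V(\mathcal{I}_W)$ taken in $\overline{\C(W)}^n$.
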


\subsection{Sub-algorithms for \textit{GRUR-LA}}

\begin{algorithm}[H]\label{base quotient}
    \caption{QuotientBasis}
    \begin{algorithmic}
    \Require $\mathcal{G}$ a Gröbner basis of $\mathcal{I}_W$ w.r.t. $\grevlex(X)$.
    \Ensure $\mathcal{B}$ a basis of $\mathcal{A}_W$.
    \If{$1$ is reducible by $\mathcal{G}$}
    \Return{$\emptyset$} \EndIf

    \State $\mathcal{B} \gets \{[1]\}$
    \State $L \gets [X_1, \dots, X_n]$ ; $L' \gets [1]$
    \While{$L' \neq \emptyset$}
        \State $L' \gets \emptyset$
        \For{$m \in L$}
            \If{$m$ is not reducible by $\mathcal{G}$}
                \State $\mathcal{B} \gets \mathcal{B} \cup \{[m]\}$
                \State $L' \gets L' \cup \{m\cdot X_1, \dots, m \cdot X_n\}$
                \State $L \gets L'$
            \EndIf
        \EndFor
    \EndWhile

    \noindent
    \Return{$\mathcal{B}$}
    \end{algorithmic}
\end{algorithm}

\begin{algorithm}[H]\label{MultiplicationMatrices}
    \caption{MultiplicationMatrices}
    \begin{algorithmic}
    \Require $\mathcal{G}$ a Gröbner basis of $\mathcal{I}_W$ w.r.t. $<_X$ ; $B(\mathcal{G})$ the set of monomials from a basis $\mathcal{B}_{\mathcal{A}_W}$ of $\mathcal{A}_W$. 
    \Ensure The multiplication matrices $M_{X_1}, \dots, M_{X_n}$ by $X_1, \dots, X_n$ in $\mathcal{A}_W$.
    \State Compute $M(\mathcal{G})$ and order $B(\mathcal{G}) \cup M(\mathcal{G})$ using $<_X$. 
    \For{$m \in B(\mathcal{G}) \cup M(\mathcal{G})$}
        \If{$m \in B(\mathcal{G})$}
            \State Write $m = p_i$ for some $i \in [\![1,D]\!]$. Then $\overrightarrow{[m]} \gets [\delta_{l,i},~ 1 \leqslant l \leqslant D]^T$.
        \EndIf
        \If{$m = \LT_{<X} (g)$ for some $g \in \mathcal{G}$ written $g = m + \sum_{l=1}^D a_l p_l$}
            \State $\overrightarrow{[m]} \gets [-a_1, \dots, -a_D]^T$. 
        \Else
            \State Write $m = X_j m'$ with $m' = X_s b$. 
            \For{$u = 1 \dots D$}
                \State $\overrightarrow{[m]_u} \gets \sum_{v=1}^D \overrightarrow{[X_s b]_v} \overrightarrow{[X_j p_v]_u}.$
            \EndFor
        \EndIf
    \EndFor
    \For{i= 1 \dots n}
        \State Build $M = \begin{pmatrix} \overrightarrow{[X_i p_1]} \mid \cdots \mid\overrightarrow{[X_i p_D]}\end{pmatrix}$
    \EndFor

    \noindent
    \Return{$M_{X_1}, \dots, M_{X_n}$}
    \end{algorithmic}
\end{algorithm}

\begin{algorithm}[H]\label{MultiplicativeTensor}
    \caption{MultiplicativeTensor}
    \begin{algorithmic}
    \Require $\mathcal{B}$ a basis of $\mathcal{A}_W$ and $M_{X_1}, \dots, M_{X_n}$.
    \Ensure The multiplicative tensor $\MT(\mathcal{A}_W)$.
    \State $\MT(\mathcal{A}_W) \gets \emptyset$
    \State $\mathcal{M} \gets \{p_i p_j, ~ 1 \leqslant i,j \leqslant D\}$.
    \State Order $\mathcal{M}$ using $<_X$
    \For{$m \in \mathcal{M}$ of total degree $0$ or $1$}
        \State $\MT(\mathcal{A}_W) \gets \MT(\mathcal{A}_W) \cup {[m]}$

    \EndFor

    \For{$m \in \mathcal{M}$ of total degree greater or equal than 2}
        \State Write $m = X_j m'$ with $\deg(m') < \deg(m)$
        \State $\overrightarrow{[m]} \gets M_{X_j} \cdot \overrightarrow{[m']}$
        \State $\MT(\mathcal{A}_W) \gets \MT(\mathcal{A}_W) \cup {[m]}$
    \EndFor

    \noindent
    \Return{$\MT(\mathcal{A}_W)$}
    \end{algorithmic}
\end{algorithm}

\begin{algorithm}[H]
\caption{CharacteristicPolynomial}
\begin{algorithmic}
\Require $\MT \left(\mathcal{A}_W\right)$ ; $M_t$ the multiplication matrix by $t$ in $\mathcal{A}_W$ ; $\VTr 1$.
\Ensure $\chi_t$ the characteristic polynomial of $t$.
\State $N_0 \gets D$.
\State $\overrightarrow{v} \gets [1, 0, \dots, 0]^T$.
\For{$i = 1, \dots, D$}
    \State $\overrightarrow{v} \gets M_t \cdot \overrightarrow{v}$.
    \State $N_i \gets \Trace (t^i) = \overrightarrow{v} \cdot \VTr(1)$.
\EndFor
\State Solve the triangular system $\big\{ (D-k)b_k = \sum_{j=0}^k b_{k-j} N_j\big\}_{k \in [\![1, D]\!]}$ whose unknowns are the $b_k$.

\noindent
\Return{$\chi_t  = T^D + \sum_{k=0}^{D-1} b_k U_0^{D-k}$.}
\end{algorithmic}
\end{algorithm}

\begin{algorithm}[H]
\caption{RURPolynomials}
\begin{algorithmic}
\Require $\chi_t = \sum_{i=0}^D b_i T^{D-i}$ ; $M_{X_1}, \dots, M_{X_n}$ the multiplication matrices of $X_1, \dots, X_n$ in $\mathcal{A}_W$ ; $\VTr 1$.
\Ensure $h_{X_1}, \dots, h_{X_n}$.
\For{$i = 1, \dots, D$}
	\State $H_i (T) \gets \sum_{j=0}^i b_j T^{i-j}$.
\EndFor
\For{$k = 1, \dots, n$}
	\For{$i = 0, \dots, D-1$}
		\State $\Trace(X_kt^i) \gets \overrightarrow{X_k t^i} \cdot \VTr 1$.
	\EndFor
	\State $h_{X_k} \gets \frac{\sum_{i=0}^{D-1} \Trace(X_k t^i) H_{D-1-i} (T)}{\frac{\partial \chi_t}{\partial T}}$.
\EndFor

\noindent
\Return{$h_{X_1}, \dots, h_{X_n}$.}
\end{algorithmic}
\end{algorithm}

\begin{algorithm}[H]
\caption{SeparationCheck}
\begin{algorithmic}
\Require A linear form $t = a_1 X_1 + \dots + a_n X_n \in \C[X]$ ; $\chi_t$ the characteristic polynomial of $t$ in $\mathcal{A}_W$ ; $\mathcal{G}$ a reduced Gröbner basis of $\mathcal{I}_W$ for $<_X$.
\Ensure True if $t$ is generically separating and False otherwise.
\State Compute $\Disc(\chi_t)$
\If{$Disc(\chi_t) = 0$}
    \Return(False)
\EndIf
\State Choose $w \in \C^s$ such that $w \notin \cup_{g \in \mathcal{G}} V(\LC_{<X})$ ; $V(\mathcal{I}(w))$ has dimension zero and $\Disc(\chi_t) (w) \neq 0$
\State Use \cite[Algorithm 6]{DRR2025} with $\mathcal{G}(w)$ to know if $t$ is separating

\noindent
\Return{The answer given}
\end{algorithmic}
\end{algorithm}

%
%

\end{document}